%
\documentclass[runningheads]{lmcs}
\pdfoutput=1
\usepackage[utf8]{inputenc}

\usepackage{lastpage}
\lmcsdoi{22}{1}{10}
\lmcsheading{}{\pageref{LastPage}}{}{}%
{Jan.~22,~2025}{Feb.~27,~2026}{}

\input{imports}

\definecolor{orange}{RGB}{255,145,0}

%
\DeclareMathOperator{\inj}{\hookrightarrow}

\DeclareMathOperator{\nv}{nv}
\DeclareMathOperator{\sv}{sv}

\DeclareMathOperator{\shift}{Shift}
\DeclareMathOperator{\leftshift}{Left}

\DeclareMathOperator{\repair}{Rep}

\DeclareMathOperator{\violation}{Imp}

\DeclareMathOperator{\id}{id}
\DeclareMathOperator{\track}{tr}
\DeclareMathOperator{\der}{der}

\DeclareMathOperator{\pre}{Pre}
\DeclareMathOperator{\post}{Post}
\DeclareMathOperator{\Over}{OL}
\DeclareMathOperator{\OverEq}{OL_{\sim}}

\DeclareMathOperator{\OverCon}{OL_{ind}}
\DeclareMathOperator{\OverPre}{OL_{dep}}
\DeclareMathOperator{\chk}{check}

\newcommand{\rle}[5]{#1 \overset{#2}{\longleftarrow\joinrel\rhook} #3 \overset{#4}{\lhook\joinrel\longrightarrow}#5 }
\newcommand{\completeRle}{\rle{L}{l}{K}{r}{R}}

\newcommand{\true}{\textsf{true}}
\newcommand{\false}{\textsf{false}}
\newcommand{\OverEQ}{\Over_{\sim}}

\begin{document}
\title[Using weakest application conditions to rank graph transformations]{Using weakest application conditions to rank graph transformations for graph repair}


\author[L.~Fritsche]{Lars Fritsche\lmcsorcid{0000-0003-4996-4639}}[c]
\author[A.~Lauer]{Alexander Lauer\lmcsorcid{0009-0001-9077-9817}}[b]
\author[M.~Kratz]{Maximilian Kratz \lmcsorcid{0000-0001-7396-7763}}[a]
\author[A.~Schürr]{Andy Schürr\lmcsorcid{0000-0001-8100-1109}}[a]
\author[G.~Taentzer]{Gabriele Taentzer\lmcsorcid{0000-0002-3975-5238}}[b]

%
%

\address{Technical University Darmstadt, Darmstadt, Germany}
\email{\{maximilian.kratz, andy.schuerr\}@es.tu-darmstadt.de}

\address{Philipps-Universität Marburg, Marburg, Germany}
\email{alexander.lauer@uni-marburg.de, taentzer@mathematik.uni-marburg.de}

\address{Faktor Zehn GmbH, Germany}
\email{lars.fritsche@faktorzehn.de}

\begin{abstract}
	When using graphs and graph transformations to model systems, consistency is an important concern.
While consistency has primarily been viewed as a binary property, i.e., a graph is consistent or inconsistent with respect to a set of constraints, recent work has presented an approach to consistency as a graduated property.
This allows living with inconsistencies for a while and repairing them when necessary.
For repairing inconsistencies in a graph, we use graph transformation rules with so-called {\em impairment-indicating and repair-indicating application conditions} to understand how much repair gain certain rule applications would bring.
Both types of conditions can be derived from given graph constraints.
Our main theorem shows that the difference between the number of actual constraint violations before and after a graph transformation step can be characterised by the difference between the numbers of violated impairment-indicating and repair-indicating application conditions. 
This theory forms the basis for algorithms with look-ahead that rank graph transformations according to their potential for graph repair.
An evaluation shows that graph repair can be well-supported by rules with these new types of application conditions in terms of effectiveness and scalability.
This paper provides further theoretical results and an extended evaluation of the results presented in~\cite{fritsche2024using}.

	\keywords{Graph Consistency \and Graph Transformation \and Graph Repair 
	\and Graph Optimization}
\end{abstract}

\maketitle

\section{Introduction}
\label{chapter:introduction}
Graph transformation has proven to be a versatile approach for specifying and validating software engineering problems~\cite{HT20}.
This is true because graphs are an appropriate means for representing complex structures of interest, the constant change of structures can be specified by graph transformations, and there is a strong theory of graph transformation~\cite{EEPT06} that has been used to validate software engineering problems. 
When applying graph transformations, it is typically important that the processed graphs are consistent with respect to a given set of constraints.
Ensuring graph consistency involves two tasks.
First, to specify what a consistent graph is and to check whether a graph is indeed consistent, and second, to ensure that graph transformations preserve or even improve consistency.

Throughout this paper, we consider the \ac{CRA} problem as running example~\cite{BBL10}.
It concerns an optimal assignment of features (i.e., methods and attributes) to classes.
The constraints enforcing that each feature belongs to one and only one class are invariants for all operations modifying the class model.
To validate the quality of a class model, coupling and cohesion metrics are often used.
Related design guidelines, such as ``minimize dependencies across class boundaries", can also be formulated as constraints.
This example shows that constraints can serve different purposes; some are considered so essential that they must always be satisfied, while others are used for optimization; they may be violated to a certain extent, but the number of violations should be kept as small as possible. 

Nested graph constraints~\cite{HabelP09} provide a viable means of formulating graph properties.
The related notion of constraint consistency introduced in~\cite{HabelP09} is binary: a graph is either consistent or inconsistent.
Since graph repair is often only gradual, it is also interesting to consider graph consistency as a graduated property, as was done in~\cite{KosiolSTZ22}.
To support gradual repair, graph transformations are analyzed in~\cite{KosiolSTZ22} with respect to their potential to improve (sustain) the consistency level of a processed graph, i.e., to strictly reduce (preserve) the number of constraint violations in a graph.
A {\em static analysis approach} is presented in~\cite{KosiolSTZ22} for checking whether or not a graph transformation rule is always consistency-sustaining or -improving.
That approach does not yet support graph constraints with different priorities or propositional logic operators.
In addition, it does not support scenarios where a rule either improves or degrades graph consistency depending on the context of the matched and rewritten subgraph.

To mitigate these problems, we introduce a {\em new dynamic analysis approach that ranks rule matches and related rule applications according to their effect on improving graph consistency} as follows:
\begin{enumerate}
	\item Graph transformation rules are equipped with {\em impairment-indicating and repair-in\-di\-ca\-ting application conditions}.
	These conditions no longer block rule applications, but count the number of constraint violations introduced or removed by a given graph transformation step.
	They are derived from nested graph constraints based on the constructions presented in~\cite{HabelP09}.
	\item Our main theorem shows that {\em the number of additional constraint violations caused by a rule application can be characterized by the difference between the numbers of violations of the associated impairment-indicating and repair-indicating application conditions}.
	\item This theory forms the basis for {\em graph repair algorithms with a look-ahead} for the graph-consistency-improving potential of selectable rule applications.
	Based on a prototypical implementation of a greedy algorithm, we show in an initial evaluation that our approach is effective in the sense that it can reduce the number of violations, and it scales well with the available number of rule applications for each consistency-improving transformation step.
\end{enumerate}

This paper extends the approach presented in \cite{fritsche2024using} as follows:
\begin{enumerate}
	\item In Section~\ref{chapter:appl_conds} we present a new method for computing \emph{repair-} and \emph{impairment-indicating application conditions} that (a) reduces the total number of overlaps of the left-hand side of a rule and the \emph{premise} of a constraint that need to be considered, (b) results in a reduction of the total number of application conditions, (c) provides a simpler way to compute the change in consistency during a transformation through repair- and impairment-indicating application conditions, and (d) relies solely on \emph{nested graph condition}, whereas in the theory presented in~\cite{fritsche2024using}, additional information about the overlap with which an application condition was computed must be considered to correctly determine the change in consistency.
	\item We compare \emph{repair-} and \emph{impairment-indicating application conditions} with \emph{(direct) consistency-sustaining} and \emph{(direct) consistency-improving} transformations as introduced in~\cite{KosiolSTZ22} in Section~\ref{sec:comparisson} and show that \emph{repair-} and \emph{impairment-indicating application conditions} can be used to construct weakest direct consistency-sustaining and weakest direct consistency-improving application conditions, i.e., we obtain application conditions that are satisfied if and only if the transformation is direct consistency-sustaining (-improving).
	\item We further extend our evaluation using weak constraints, which model good patterns for class structures, and evaluate them on the well-known test data of the TTC16~\cite{DBLP:conf/staf/FleckTW16} \ac{CRA} case study and compared it to another tool using the \ac{GIPS} tool~\cite{gipsGCM2022}, which uses \ac{ILP} to compute the optimal solution for a given input graph.
\end{enumerate}

\section{Running Example}
\label{chapter:example}
To illustrate the problem addressed, we consider a variant of the \ac{CRA} problem~\cite{BBL10, DBLP:conf/staf/FleckTW16} for our running example. 
The \ac{CRA} aims to provide a high-quality design for object-oriented class structures.
Features, i.e., methods and attributes, with dependencies between them are assigned to classes so that the class design achieves high cohesion within classes and low coupling between classes.
We slightly adapt the \ac{CRA} problem by starting with a predefined class diagram and refactoring it by moving features between classes.
These refactoring steps aim to reduce dependencies between methods and attributes as well as methods and methods across different classes.
They can also group methods with similar attribute dependencies within the same class.

The amount of coupling and cohesion is well reflected by the \ac{CRA}-index, which is the cohesion ratio minus the coupling ratio within a class diagram~\cite{masoud2014clustering}. 
This means that dependencies should be maximised within classes, while minimising dependencies between different classes.

\subsubsection*{Class diagrams and feature dependencies}
\Cref{fig:example:example} shows a class diagram to be refactored on the left, consisting of three classes that model a snippet of an online shopping session.
On the right, we see a graph-like representation of the class diagram, in which the classes, methods, and attributes are represented as graph nodes.
The names of the classes, methods and attributes are used as node identifiers. 
There are also edges between methods and attributes, which model access dependencies of methods on attributes.

\begin{figure}
	\centering
	\includegraphics[scale = 0.9]{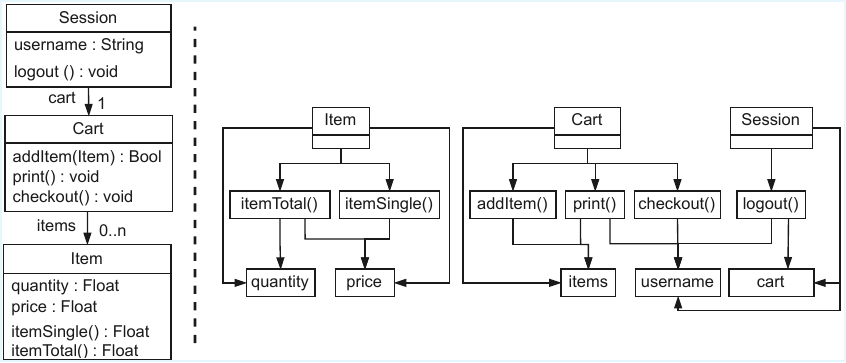}
	\caption{Class diagram (left) and feature dependencies (right)} 
	\label{fig:example:example}
\end{figure}

\subsubsection*{Rules and constraints}
Next, we define two simple refactoring rules and the constraints that they must obey, as shown in \cref{fig:example:rule+constraints}.
The \textsf{moveAttribute} rule moves an attribute from one class to another one, while the \textsf{moveMethod} rule does the same for methods.
Graph elements annotated with ``- -'' are to be deleted, while graph elements annotated with ``++'' are to be created.
From the set of language constraints on class structures, we select two constraints, $\textsf{h}_1$ and $\textsf{h}_2$, which impose two basic properties on class models.
We consider these constraints to be {\em hard}, i.e., constraints that must not be violated.
These constraints state that methods and attributes must not be contained in more than one class.

To model the \ac{CRA}-index as accurately as possible, we must represent the desired and undesired dependencies as graph constraints that state the following properties:
\begin{enumerate}
	\item For each pair of a method and an attribute contained within the same class, the method uses the attribute.
	
	\item A method cannot use an attribute contained in another class.
	
	\item Methods contained in the same class must depend on each other.
	\item A method cannot call a method contained in another class.
\end{enumerate}

We will formalise these properties as \emph{weak constraints} that can be violated, but which must be satisfied to the greatest possible extent.
We start with the constraint $\textsf{w}_2$, since the constraint $\textsf{w}_1$ is special and will be explained later.
In \cref{fig:example:rule+constraints}, the constraint $\textsf{w}_2$ says that methods and attributes contained in different classes must not be dependent on each other (reflecting the property described in (2) above).
Constraint $\textsf{w}_3$ says that methods contained in the same class must depend on each other (this reflects the property described in (3) above), and constraint $\textsf{w}_4$ says that methods contained in different classes must not depend on each other (this reflects the property described in (4) above).

A constraint that describes the property stated in (1) would have the same structure as $\textsf{w}_3$.
Such a constraint could be achieved by replacing the method $\textsf{m}_2'$ in $\textsf{w}_3$ with an attribute and deleting the right pattern within the brackets. 
However, to illustrate the concepts presented in the theory section as much as possible, we have selected a slightly different constraint $\textsf{w}_1$.
This constraint states that each pair of methods contained in the same class must have at least one common dependency on an attribute within the same class.
We use this constraint because it leads to application conditions that can illustrate each part of our construction nicely, in contrast to 
the application conditions for constraints $\textsf{w}_2$, $\textsf{w}_3$, and $\textsf{w}_4$.

Please note that we will only use $\textsf{w}_1$ and $\textsf{w}_2$ in the examples throughout the theoretical part of the paper.
Constraints $\textsf{w}_3$ and $\textsf{w}_4$ are only used in our extended evaluation. 

\begin{figure}
	\centering
	\includegraphics[scale = 0.8]{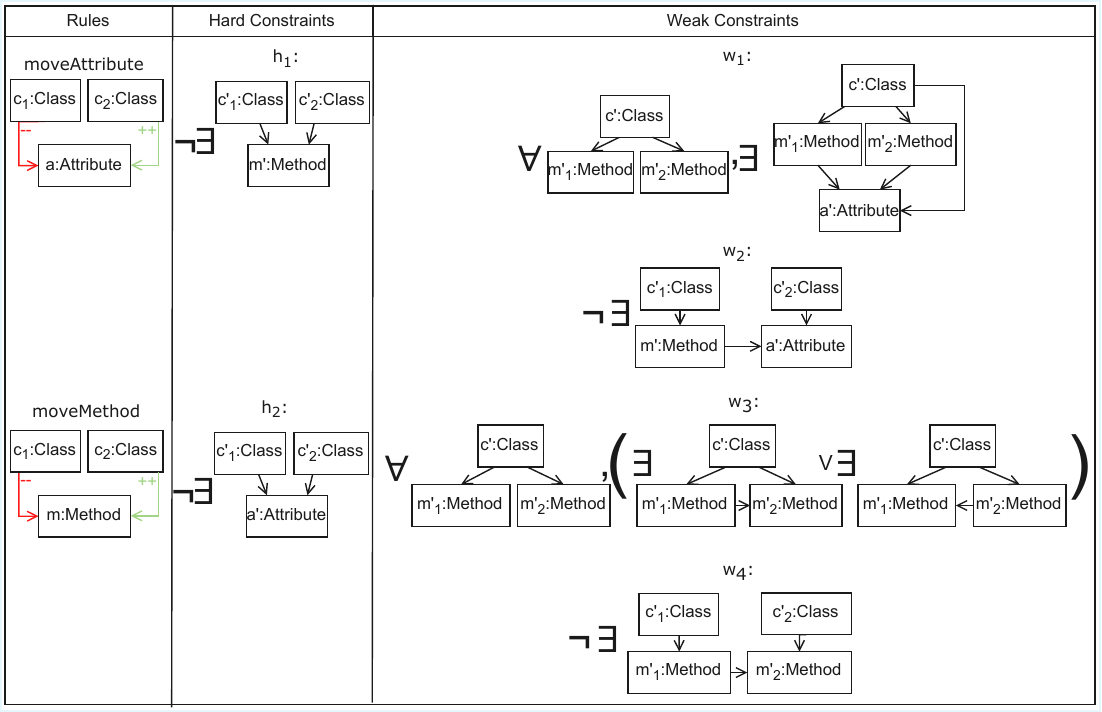}
	\caption{Refactoring rules \textsf{moveAttribute} and \textsf{moveMethod} and constraints: \textsf{h}$_1$: 
	A method must not be contained in more than one class. \textsf{h}$_2$: An attribute must not be contained in more than one class. \textsf{w}$_1$: Two methods within the same class $\textsf{c}'$ should have at least one dependency on a common attribute within the same class $\textsf{c}'$. \textsf{w}$_2$: A method should not depend on an attribute from another class. \textsf{w}$_3$: Methods contained within the same class must have at least one dependency on each other. \textsf{w}$_4$: A method should not depend on a method of another class.
	} 
	\label{fig:example:rule+constraints}
\end{figure}
Obviously, neither rule can violate the hard constraints, but moving features between classes can remove or add violations of the weak constraints, i.e., repair or impair the consistency of the graph under consideration.
For example, moving the \textsf{print()} method from the \textsf{Cart} to the \textsf{Session} class would repair a violation of $\textsf{w}_2$ because the \textsf{print()} depends on the \textsf{username} attribute.
However, this would introduce a new violation of the same constraint due to its dependency on the \textsf{items} attribute in the \textsf{Cart} class.

\subsubsection*{Ranking of graph transformations for graph repair}
We will now examine the impact of graph transformations on the consistency of the graph shown in \cref{fig:example:example} with respect to the weak constraints $\textsf{w}_1$ and $\textsf{w}_2$ by applying the  \textsf{moveMethod} and \textsf{moveAttribute} rules.
\Cref{table:ranking_rule_applications} shows the actions of the applied rules and the number of impairments and repairs for the constraints $\textsf{w}_1$ and $\textsf{w}_2$. 

The first rule application uses the rule $\textsf{moveMethod}$ to move \textsf{checkout()} from \textsf{Cart} to \textsf{Session}. 
As \textsf{checkout()} does not share a common attribute with \textsf{print()} and \textsf{addItem()}, which are contained within the same class, the rule application will perform four repairs for the constraint $\textsf{w}_1$ (due to the symmetry of $\textsf{w}_1$, the pairs (\textsf{checkout(), print()}) and (\textsf{checkout(), addItem()}) are counted twice). 
Additionally, one repair is performed for constraint $\textsf{w}_2$ because \textsf{checkout()} has a dependency on the \textsf{username}  attribute, which is contained in \textsf{Session}.

In the second transformation, \textsf{moveAttribute} is used to move \textsf{username} from \textsf{Session} to \textsf{Cart}.
This transformation makes two repairs to $\textsf{w}_1$. 
Once the transformation has been performed, both methods \textsf{print()} and \textsf{checkout()} have a dependency on \textsf{username}, which is now contained in \textsf{Cart}. 
Additionally, there are two repairs of $\textsf{w}_2$:
After the transformation, neither \textsf{print()} nor \textsf{checkout()} have a dependency on an attribute not contained in \textsf{Cart}. 
However, there is also an impairment of $\textsf{w}_2$ since \textsf{logout()} (contained in \textsf{Session}) has a dependency on \textsf{username} (which is now in \textsf{Cart}).

In the third transformation the method \textsf{print()} is moved from \textsf{Cart} to \textsf{Session}. 
This transformation performs two repairs of $\textsf{w}_1$. 
These are exactly the same repairs as described for the second transformation. 
Additionally, there is a repair and an impairment of $\textsf{w}_2$. 
The repair arises because both \textsf{print()} and \textsf{username} are contained in \textsf{Session} after the transformation. 
However, \textsf{items} (on which \textsf{print()} depends) remains in \textsf{Cart}.
Therefore, there is also an impairment of $\textsf{w}_2$.

In the fourth transformation, the \textsf{addItem()} method is moved from \textsf{Cart} to \textsf{Session}. 
This transformation also performs two repairs of $\textsf{w}_1$. 
After the transformation, \textsf{addItem()} and \textsf{checkout()} (which do not share an attribute) are no longer contained in the same class. 
Since \textsf{Session} will contain \textsf{addItem()} and \textsf{logout()}, which do not share a common attribute, this transformation also performs two impairments. 
Additionally, there is an impairment of $\textsf{w}_2$, since \textsf{items} and \textsf{addItem()} will be contained in different classes, even though \textsf{addItem()} has a dependency on \textsf{items}.

During a repair process, we would apply the $\textsf{moveMethod}$ rule to move \textsf{checkout()} from \textsf{Cart} to \textsf{Session}, as this maximises the consistency of $\textsf{w}_1$ and $\textsf{w}_2$. 

Based on this ranking information of rules, various optimisation algorithms can be implemented, including a greedy algorithm that always selects refactoring rule applications with the greatest possible consistency gain.
We will report on a greedy implementation of the \ac{CRA} in Section~\ref{chapter:evaluation}.

\begin{table*}[t]
	\centering
	\caption{Ranking of rule applications of \textsf{moveMethod} and \textsf{moveAttribute} w.r.t.~the consistency of the weak constraints $\textsf{w}_1$ and $\textsf{w}_2$}
	\label{table:ranking_rule_applications}
	\begin{tabular}{ccccccccccc}
		\toprule
		\multicolumn{1}{c}{Rule} & \multicolumn{3}{c}{Action} & \multicolumn{2}{c}{constraint $\textsf{w}_1$} & \multicolumn{2}{c}{constraint $\textsf{w}_2$}  \\
		\cmidrule(lr){1-1}
		\cmidrule(lr){2-4}
		\cmidrule(lr){5-6}    
		\cmidrule(lr){7-8}
		
		& Element &From & To & rep. & imp.  & rep. & imp. \\
		\midrule
		\textsf{moveMethod} & \textsf{checkout()} & \textsf{Cart} & \textsf{Session} & 4 & 0 & 1 & 0 \\
		\textsf{moveAttribute} & \textsf{username} & \textsf{Session} & \textsf{Cart} & 2 & 0 & 2 & 1 \\
		\textsf{moveMethod} & \textsf{print()} & \textsf{Cart} & \textsf{Session} & 2 & 0 & 1 & 1 \\
		\textsf{moveMethod} & \textsf{addItem()} & \textsf{Cart} & \textsf{Session} & 2 & 2 & 0 & 1 \\
		\bottomrule   
	\end{tabular}
\end{table*}

\subsubsection*{Repair-indicating application conditions}
As we have seen, there are many ways to apply the \textsf{moveMethod} and \textsf{moveAttribute} rules, even in this small example.
Therefore, it is not practically feasible to apply each possible transformation in order to find the one that increases consistency the most.
This is why we will present a new methodology that derives application conditions for each refactoring rule.
Based on the application conditions, we can calculate a look-ahead for each refactoring transformation (i.e., rule application) to determine how many repairs and impairments will be caused. 
These application conditions will not prevent rule applications, but will indicate which parts of a graph will be repaired or impaired. 
More specifically, a rule can have repair- and impairment-indicating application conditions. 
The set of violated (i.e., not satisfying) occurrences of repair-indicating application conditions shows where the graph would be repaired by the rule match under consideration.
Similarly, the set of violated (i.e., not holding) occurrences of impairment-indicating application conditions shows where the graph would be impaired.

\begin{figure}
	\includegraphics[scale = 0.8]{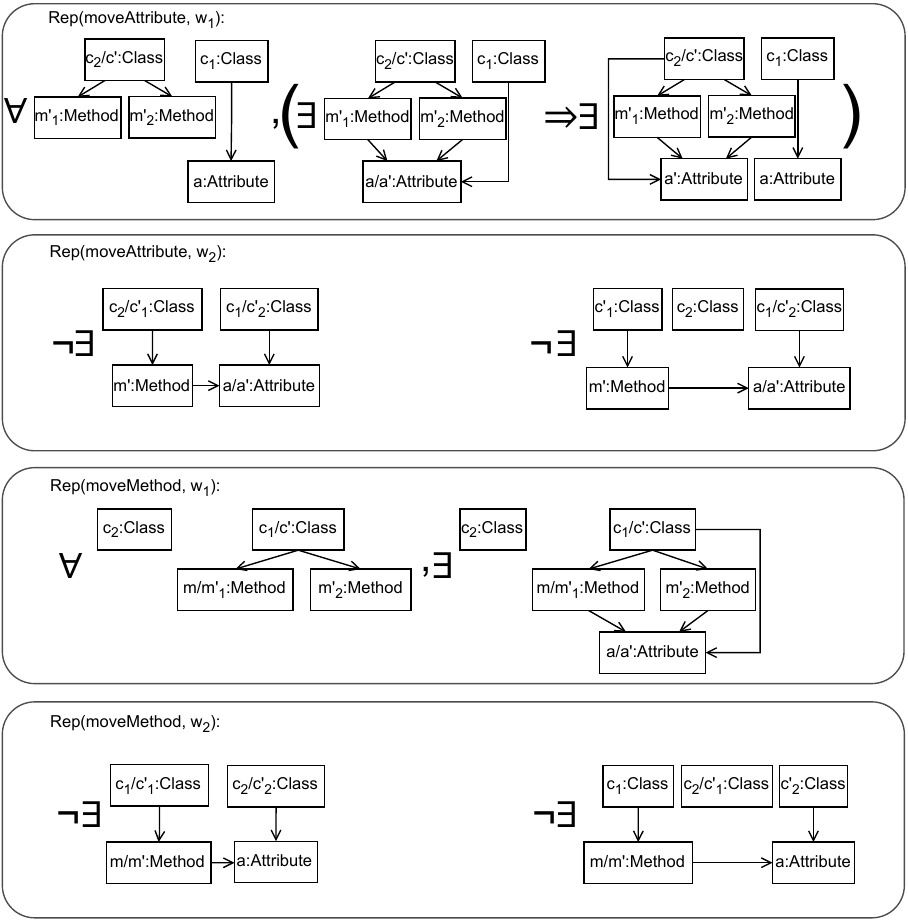}
	\caption{Repair-indicating application conditions for the rules \textsf{moveAttribute} and \textsf{moveMethod} w.r.t.~the constraints $\textsf{w}_1$ and $\textsf{w}_2$. The node labels implicitly denote the mappings of the LHS of the rule and the premise of the constraint. For example, the label $c_2/c'$ indicates that the node $c_2$ from the LHS of \textsf{moveAttribute} and $c'$ from the premise of $\textsf{w}_1$ are mapped to the node $c_2/c'$.}\label{fig:resulting_application_conditions}
\end{figure}

\Cref{fig:resulting_application_conditions} shows the repair-indicating application conditions for the rules \textsf{moveAttribute} and \textsf{moveMethod} with respect to the constraints $\textsf{w}_1$ and $\textsf{w}_2$.
The set of repair-indicating application conditions for \textsf{moveAttribute} and $\textsf{w}_1$, $\repair(\textsf{moveAttribute},\textsf{w}_1)$, contains one application condition. 
This condition checks that, for each pair of methods contained within the class to which the attribute is moved, they share a common attribute after the transformation has been performed and they do not share a common attribute before the transformation.
This can only happen if they  both have a dependency on the moved attribute (as modelled by the premise of the implication) and if they do not both have a dependency on another attribute contained in the same class (as modelled by the conclusion of the implication). 
Therefore,  a repair of $\textsf{w}_1$ will be found if this implication is not satisfied, i.e., if the premise of the implication is satisfied but the conclusion is not.

The set $\repair(\textsf{moveAttribute},\textsf{w}_2)$  consists of two repair-indicating application conditions.
The left application condition has the same form as the constraint $\textsf{w}_2$.
This application condition checks whether the moved attribute has been moved to a class containing a method that depends on it.
If so, a repair of $\textsf{w}_2$ will be found.
The application condition on the right checks whether a method that depends on the moved attribute exists that is not contained within either the attribute's original class nor its new class.
An occurrence of this pattern in a transformation does not affect the consistency of $\textsf{w}_2$ w.r.t.~that method and the moved attribute.
We will see later on that this application condition is also included in the set of impairment-indicating application conditions for \textsf{moveAttribute} and $\textsf{w}_2$, which implies that we can discard it.

The set $\repair(\textsf{moveMethod},\textsf{w}_1)$ contains one application condition. 
This application condition checks whether the moved method shares a common attribute with every other method that is contained in the same class.
If not,  a repair is found, since $\textsf{w}_1$ states that methods contained in the same class must share a common attribute.

The set $\repair(\textsf{moveMethod},\textsf{w}_2)$ consists of two application conditions. 
Similar to the application conditions contained in the set $\repair(\textsf{moveAttribute},\textsf{w}_2)$, the left application condition checks whether the method has been moved to a class containing an attribute on which it depends. 
If so,  a repair of $\textsf{w}_2$ will be found. 
The application condition on the right checks whether the moved method depends on an attribute that is not contained in either the method's original class nor its new class.
An occurrence of this pattern within a transformation will not affect the consistency of $\textsf{w}_2$ w.r.t.~this attribute and the moved method. 
We will see later that this application condition is also included in the set of impairment-indicating application conditions for \textsf{moveMethod} and $\textsf{w}_2$, which implies that we can discard it.

\section{Preliminaries}
\label{chapter:preliminaries}
In this section, we recall key notions that are used throughout this paper. 
Our theory for the construction of graph repair algorithms is based on typed graphs, as introduced for graph transformations in~\cite{Ehrig78, EEPT06}. 

\subsubsection*{Typed graph and graph morphism}
A graph consists of nodes and edges, where edges connect nodes. 
In our running example, classes, methods, and attributes are represented as nodes.
Object references, such as the attribute dependencies between a method and an attribute, are represented as edges. 
\emph{Graph morphisms} are mappings between graphs that preserve the structural properties of their domain graph.
In this paper, we will only consider injective graph morphisms.
Intuitively, the existence of an injective graph morphism from a graph $A$ to a graph $B$ means that $A$ is a subgraph of $B$.

\begin{defi}[Graph and graph morphism]
	A \emph{graph} $G = (G_V, G_E, s_G, t_G)$ consists of a set $G_V$ of nodes, a set $G_E$ of edges and two mappings $s_G \colon G_E \to G_V$ and $t_G \colon G_E \to G_V$ that assign the source and target nodes for each edge of $G$.
	If a tuple as above is not given explicitly, the set of nodes (edges) is denoted by $G_V$ ($G_E$) and the source (target) mapping is denoted by $s_G$ ($t_G$).
	
	A \emph{graph morphism} $f \colon G \to H$ between two graphs $G$ and $H$ consists of two mappings $f_V \colon G_V \to H_V$ and $f_E \colon G_E \to H_E$ that preserve the source and target mappings, i.e., $f_V \circ s_G = s_H \circ f_E$ and $f_C \circ t_G = t_H \circ f_E$. 
	A graph morphism is called \emph{injective} (\emph{surjective}) if both mappings $f_V$ and $f_E$ are injective (surjective).
	An injective morphism $f$ from $G$ to $H$ is denoted by $f \colon G \inj H$.
	If $f \colon G \inj H$ is injective, we call $f$ an \emph{occurrence of $G$ in $H$}.
	A graph morphism $f \colon G \to H$ is called an \emph{isomorphism} if there is a graph morphism $f^{-1} \colon H \to G$ such that $\id_G = f^{-1} \circ f$ and $\id_H = f \circ f^{-1}$.	
\end{defi}
Note that, for each graph $G$, there is a unique morphism $f \colon \emptyset \to G$ originating in the empty graph $\emptyset$.

\begin{figure}
	\includegraphics[scale=0.8]{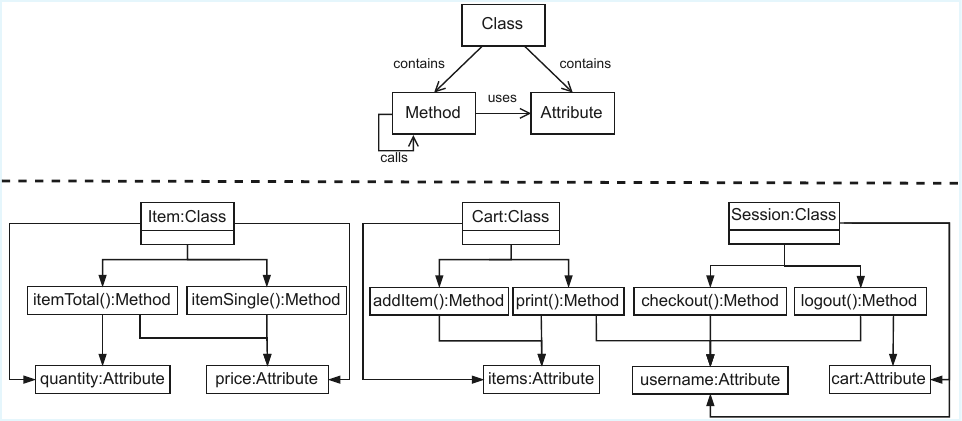}
	\caption{Type graph (top) and typed graph (bottom)}
	\label{fig:typed_graph}
\end{figure}

\begin{exa}
	For the graph at the bottom in \cref{fig:typed_graph}, the set of nodes includes \textsf{Item}, \textsf{Cart}, and \textsf{Session}, as well as any nodes representing features.
	The set of edges contains the arrows in between them. The source and target nodes of each edge are determined by the source and target of the arrow. 
	
	If we consider the graph that contains a class \textsf{c'} that is connected to the methods $\textsf{m'}_1$ and $\textsf{m'}_2$ as shown in the constraint $\textsf{w}_1$ in \cref{fig:example:rule+constraints}, there are several injective graph morphisms from this graph into the graph at the bottom of \cref{fig:typed_graph}.
	For example, the node \textsf{c'} could be mapped to \textsf{Item}, $\textsf{m'}_1$ to \textsf{itemTotal()}, and $\textsf{m'}_2$ to \textsf{itemSingle()}.
	The edge from \textsf{c'} to $\textsf{m'}_1$ is mapped to the edge starting in \textsf{Item} and ending in \textsf{itemTotal()}, and the edge from \textsf{c'} to $\textsf{m'}_2$ is mapped to the edge starting in \textsf{Item} and ending in \textsf{itemTotal()}.
\end{exa}

Throughout the paper, we assume that a graph is always finite, i.e., the set of nodes and the set of edges are finite.

Typing provides a way to assign meaning to graph elements.
For example, we can assign each node and edge of the graph shown at the top of \cref{fig:typed_graph} their respective roles in a class diagram.
I.e., the nodes \textsf{Item}, \textsf{Cart}, and \textsf{Session} are typed as classes, the nodes on the second layer \textsf{itemTotal()}, \textsf{itemSingle()}, \ldots are typed as methods, and the nodes \textsf{quantity}, \textsf{price}, \textsf{item}, \textsf{username}, and \textsf{cart} are typed as attributes.
An edge from a class to a method signals that the method is contained in the class, and an edge from a method to an attribute signals that the method uses the attribute.
This can be formalised by \emph{typed graphs} and \emph{typed graph morphisms}.

\begin{defi}[Typed graph, typed graph morphism]
	Given a graph $TG$, called the \emph{type graph}, a \emph{typed graph} over $TG$ is a tuple $(G, type)$ consisting of a graph $G$ and a graph morphism $type \colon G \to TG$.
	Given two typed graphs $G = (G', type_G)$ and $H = (H', type_H)$, a \emph{typed graph morphism} $f \colon G \to H$ is a graph morphism $f \colon G' \to H'$ such that $type_H \circ f = type_G$.
	The typed graph morphism $f \colon G \to H$ is \emph{injective (surjective)} if $f \colon G' \to H'$ is injective (surjective).
	It is called an \emph{isomorphism} if there is a graph morphism $f^{-1} \colon H \to G$ such that $\id_G = f^{-1} \circ f$ and $\id_H = f \circ f^{-1}$.
\end{defi}

\begin{exa}
	The graph $G$ shown at the bottom of \cref{fig:typed_graph} is typed over the graph $TG$ shown at the top of \cref{fig:typed_graph}.
	The typed graph morphism maps the nodes \textsf{Item}, \textsf{Cart}, and \textsf{Session} to \textsf{Class}.
	The nodes \textsf{itemTotal()}, \textsf{itemSingle()}, \textsf{addItem()}, \textsf{print()}, \textsf{checkout()}, and \textsf{logout()} are mapped to \textsf{Method} and \textsf{quantity}, \textsf{price}, \textsf{items}, \textsf{username}, and \textsf{cart} are mapped to \textsf{Attribute}.
	The edges are mapped accordingly.
\end{exa}

\emph{Throughout the paper, we assume that all graphs are typed graphs.
	Additionally, all graphs used in our examples and evaluation are typed over the graph shown at the top of \cref{fig:typed_graph}. 
	For a more compact representation of graphs, we will not specify edge types in the figures for the rest of the paper if they can be identified unambiguously.
}

\subsubsection*{Nested graph constraints}
To verify that a graph satisfies a desired structural property, a concept of formulating these properties is needed.
An example property for class diagrams is that a method should not be contained in two classes.
In particular, we want to derive graphs that satisfy these properties or are more consistent with respect to these properties. 
For example, considering the property described above, a class diagram $H$ is more consistent with respect to this property than another class diagram $G$ if $H$ contains fewer methods that are contained in two classes.
To formulate such properties, we use \emph{nested graph conditions} introduced by Habel and Pennemann~\cite{HabelP09}.
Rensink has shown that the class of nested graph conditions is equivalent to first-order logic~\cite{Rensink04}.
In addition, almost all \ac{OCL} formulas can be translated into nested graph constraints~\cite{RadkeABHT18}.
Nested graph constraints, or constraints for short, are nested graph conditions that can be evaluated directly on a given graph, whereas graph conditions are generally evaluated with respect to graph morphisms.

\begin{defi}[Nested graph conditions]
	A \emph{nested graph condition} over a graph $P$ is of the form
	\begin{itemize}
		\item $\true$, or
		\item $\exists(e \colon P \inj Q, d)$ where $d$ is a condition over $Q$, or
		\item $d_1 \vee d_2$ or $\neg d_1$ where $d_1$ and $d_2$ are conditions over $P$.
	\end{itemize}
	A condition over the empty graph $\emptyset$ is called \emph{constraint}. We use the abbreviations $\false := \neg \true$, $d_1 \wedge d_2 := \neg(\neg d_1 \vee \neg d_2)$, $d_1 \implies d_2 := \neg d_1 \vee d_2$, and $\forall(e \colon P \inj Q,d) := \neg \exists(e \colon P \inj Q, \neg d).$
	When $e$ is of the form $e \colon \emptyset \inj Q$, we use the short notations $\exists(Q,d)$ and $\forall(Q,d)$. 
	For a condition $c = \forall(Q,d)$, we call $Q$ the \emph{premise} and $d$ the \emph{conclusion} of $c$. 
\end{defi}

Throughout the paper, we assume that each condition is finite, i.e., the graphs used and the number of nesting levels are finite. 

\begin{exa}
	\Cref{fig:example:rule+constraints} shows the constraints $\textsf{h}_1, \textsf{h}_2, \textsf{w}_1$, and $\textsf{w}_2$. The constraints $\textsf{h}_1, \textsf{h}_2$, and $\textsf{w}_2$ specify forbidden patterns where
	\begin{itemize}
		\item $\textsf{h}_1$ states that a method cannot be contained in two different classes,
		\item $\textsf{h}_2$ states that an attribute cannot be contained in two different classes, and
		\item $\textsf{w}_2$ states that if a method uses an attribute, they cannot be contained in different classes.
	\end{itemize}
	The constraint $\textsf{w}_1$ states that each pair of methods contained in the same class must use at least one common attribute that is also contained in that class.
	The graph morphism from the first graph of $\textsf{w}_1$ to the second graph is implicitly given by mapping nodes with the same identifiers and the edges are mapped accordingly.
\end{exa}

\begin{defi}[Semantics of condition]
	Given a graph morphism $p \colon P \inj G$ and a condition $c$ over $P$, then $p$ satisfies $c$, denoted by $p \models d$, if 
	\begin{itemize}
		\item $c = \true$, or
		\item $c = \exists (e \colon P \inj P, d)$ and there is a morphism $q \colon Q \inj G$ with $p = q \circ e$ and $q \models d$, or
		\item $c = d_1 \vee d_2$ and $p \models d_1$ or $p \models d_2$, or
		\item $c = \neg d$ and $p \not \models d$. 
	\end{itemize}
	A graph $G$ satisfies a constraint $c$, denoted by $G \models c$ if the unique morphism $p \colon \emptyset \inj G$ satisfies $c$. 
	Two conditions $c_1$ and $c_2$ over a graph $P$ are equivalent, denoted by $c_1 \equiv c_2$, if for each morphism $p \colon P \inj G$ we have $p \models c_1 \iff p \models c_2$. 
\end{defi}

\begin{exa}\label{ex:constraint_satisfaction}
	The graph shown in \cref{fig:example:example} satisfies the hard constraints $\textsf{h}_1$ and $\textsf{h}_2$.
	There is no attribute and no method in two classes.
	However, the graph does not satisfy the weak constraint $\textsf{w}_1$ shown in \cref{fig:example:rule+constraints}.
	The methods \textsf{addItem()} and \textsf{checkout()} are contained in the same class \textsf{Cart}, but \textsf{Cart} does not contain an attribute used by both methods. 
	The graph also does not satisfy the weak constraint $\textsf{w}_2$.
	This is because \textsf{checkout()} uses the attribute \textsf{username}, which is contained in the class \textsf{Session}. 
\end{exa}

\subsubsection*{Graph rules and transformations}
Graph transformation rules are used to specify state-changing operations for a system of interest.
They can be used to specify graph elements that are to be deleted or created when the rule is applied to a graph.
In the case of our \ac{CRA} example, rules are used to define the set of all available refactoring operations. 
Their associated application conditions are unsatisfied whenever a rule application changes the consistency state of a rewritten graph.
In the following, we recall the formal definitions of \emph{graph transformation rules} and \emph{graph transformations}~\cite{EEPT06,Ehrig78}.

\begin{defi}[Graph transformation rule and application condition]
	A \emph{graph transformation rule}, or $\emph{rule}$ for short, $\rho = \completeRle$ consists of graph $L$, called the \emph{\ac{LHS}}, $K$, called the $\emph{context}$, $R$, called the \emph{\ac{RHS}}, and injective morphisms $l \colon K \inj L$ and $r \colon K \inj R$.
	The \emph{inverse rule} of $\rho$, denoted by $\rho^{-1}$ is defined as $\rho^{-1} := \rle{R}{r}{K}{l}{L}$.
	An \emph{application condition} for a rule is a nested condition over its \ac{LHS}. 
\end{defi}

\begin{exa}\label{ex:rule}
	\Cref{fig:example:rule+constraints} shows the rules \textsf{moveAttribute} and \textsf{moveMethod} that move an attribute or a method from one class to another. 
	Elements to be deleted are coloured red, elements to be preserved are coloured black, and elements to be created are coloured green. 
	The \ac{LHS} of the rules contains the black and red elements, the context contains the black elements, and the \ac{RHS} contains the black and green elements.
	When the \textsf{moveAttribute} rule is applied to a graph, the red edge is deleted and the green edge is created.
	
	The condition contained in $\repair(\textsf{moveAttribute},  \textsf{w}_1)$ (\cref{fig:resulting_application_conditions}) is an application condition for the rule \textsf{moveAttribute}.
	The embedding of the \ac{LHS} in the first graph of the application condition is implicitly given by the node identifiers, i.e., $\textsf{c}_1$ is mapped to $\textsf{c}_1$, $\textsf{c}_2$ is mapped to $\textsf{c}_2/\textsf{c}'$, and $\textsf{a}$ is mapped to $\textsf{a}$.
\end{exa}

For simplicity, we present the more constructive, set-theoretic definition of graph transformation, which has been shown to be equivalent to the commonly used double-pushout approach, based on category theory~\cite{EEPT06}.
A brief introduction to the double-pushout approach is given in Section~\ref{app:dpo}, as we need some of its properties for our proofs.
Intuitively, a graph transformation consists of two steps: First, elements that are exclusively contained in the \ac{LHS} of the rule are deleted.
The second step is to create elements exclusively contained in the \ac{RHS} of the rule.
Note that a rule is only applicable to a graph if deleting a node does not lead to a dangling edge, i.e., every edge in the resulting graph has a source and a target node.

\begin{figure}
	\centering
	\begin{tikzpicture}[scale = 1]
		\node(A) at (0,0) {$L$};
		\node(B) at (2,0) {$K$};
		\node(C) at (0,-2) {$G$};
		\node(D) at (2,-2) {$D$};		
		\node(E) at (4,0) {$R$};
		\node(F) at (4,-2) {$H$};
		\node(label1) at (1,-1) {$(1)$};
		\node(label2) at (3,-1) {$(2)$};
		
		\draw [left hook-stealth] (B) edge node  [above]{$l$} (A);
		\draw [left hook-stealth] (A) edge node  [left]{$m$} (C);
		\draw [left hook-stealth] (D) edge node  [above]{$g$} (C);
		\draw [left hook-stealth] (B) edge node  {} (D);
		\draw [right hook-stealth] (B) edge node  [above]{$r$} (E);
		\draw [right hook-stealth] (D) edge node  [above]{$h$} (F);
		\draw [left hook-stealth] (E) edge node  [right]{$n$} (F);
	\end{tikzpicture}
	\caption{Graph transformation}
	\label{fig:graph_transformation1}
\end{figure}

\begin{defi}[Graph transformation and derived rule]
	Given a graph $G$, a rule $\rho = \completeRle$, and an injective morphism $m \colon L \inj G$ (see \cref{fig:graph_transformation1}), a \emph{graph transformation} $t$, denoted by $t \colon G \Longrightarrow_{\rho, m} H$, via $\rho$ at $m$ can be constructed by (a) deleting all nodes and edges of $L$ that do not have a preimage in $K$, i.e., construct the graph $D = G \setminus m(L \setminus(l(K))$ and (b) adding all nodes and edges of $R$ that do not have a preimage in $K$, i.e., construct the graph $H = D \dot{\cup} R \setminus r(K)$, where $\dot{\cup}$ denotes the disjoint union.	
	
	The rule $\rho$ is applicable at $m$ if and only if $D$ is a graph, i.e., if it does not contain any dangling edges.
	In this case, $m$ is called \emph{match} and the newly created morphism $n \colon R \inj H$ is called \emph{comatch}.
	We call $G$ the \emph{original graph}, $D$ the \emph{interface} and $H$ the \emph{result graph} of the transformation $t$.
	The \emph{derived rule} of a transformation  $t \colon G \Longrightarrow_{\rho, m} H$, denoted by $\der(t)$, is defined as $\der(t) =  \rle{G}{g}{D}{h}{H}$.
\end{defi}

\begin{exa}\label{exa:transformation}
	For the rule \textsf{moveMethod}, there is a match $m$ from the \ac{LHS} in the graph $G$ shown in \cref{fig:example:example}.
	We can map $\textsf{c}_1$ to \textsf{Cart}, $\textsf{c}_2$ to \textsf{Session}, and \textsf{m} to \textsf{checkout()}.
	When \textsf{moveMethod} is applied to $m$, we get the graph $H$ shown at the bottom of \cref{fig:typed_graph}.
	In the resulting graph, \textsf{checkout()} has been moved from \textsf{Cart} to \textsf{Session}, i.e., first the edge from \textsf{Cart} to \textsf{checkout()} is deleted and then a new edge from \textsf{Session} to \textsf{checkout()} is created.
\end{exa}

In addition, we need a method for tracking graph elements throughout a transformation, i.e., we want to know which elements in the original graph of the transformation correspond to which elements in the resulting graph of the transformations, and which elements have been deleted or created.
In particular, we want to decide whether occurrences of graphs in other graphs are deleted or created by a transformation.

\begin{defi}[Track morphism~\cite{Plump05}]
	The \emph{track morphism} of a transformation $t\colon G\Longrightarrow H$ (see \cref{fig:graph_transformation1}), denoted by $\track_t \colon G \dasharrow H$, is a partial morphism, defined as
	$$\track_t(e) := \begin{cases}
		h(g^{-1}(e)) \quad \text{if } e \in g(D) \\
		\text{undefined} \quad \text{otherwise}.
	\end{cases}$$
\end{defi}

\begin{exa}
	If we consider the transformation $t \colon G \Longrightarrow H$ described in \cref{exa:transformation}, we have $\track(e) = e$ for each element that is preserved (i.e., for every element except the edge from \textsf{Cart} to \textsf{checkout()}). 
	Also, there is no element $e \in G$, so that the track morphism maps $e$ to the newly created edge from \textsf{Session} to \textsf{checkout()} in $H$.
\end{exa}

\section{Counting Constraint Violations}
\label{sec:counting_constraint_violation}
\label{chapter:constraints}
When large graphs are transformed using small rules, such as \textsf{MoveMethod} or \textsf{MoveAttribute}, it is unlikely that the consistency of a constraint will be fully restored by a single transformation.
However, consistency can be increased in the sense that the original graph of the transformation contains more violations than the resulting graph.

For example, if we consider the transformation described in \cref{exa:transformation}, where the method \textsf{checkout} is moved from \textsf{Card} to \textsf{Session} (the original graph is shown in \cref{fig:example:example}; the resulting graph is shown in \cref{fig:typed_graph}) and the weak constraint $\textsf{w}_2$ (shown in \cref{fig:example:rule+constraints}), the consistency  increases even though the resulting graph still does not satisfy $\textsf{w}_2$.
The original graph contains two methods \textsf{print()} and \textsf{checkout()} that use an attribute that is not in the same class, whereas the resulting graph contains only one such method (which is \textsf{print()}).
Therefore,  the resulting graph is more consistent w.r.t.~$\textsf{w}_2$ than the original graph of this transformation.

To rank rule applications in terms of their potential to increase the consistency of a constraint, we need a method to evaluate this increase or decrease in the consistency of weak constraints.
Here, we rely on the \emph{number of violations} introduced by Kosiol et al. in~\cite{KosiolSTZ22}.
This notion enables an increase in consistency to be detected even if it is not fully recovered.
For example, it can detect whether an occurrence of the premise that does not satisfy the conclusion has been deleted or extended so that it satisfies the conclusion in the result graph of the transformation.
Although this notion was originally introduced for constraints in so-called \emph{alternating quantifier normal form}, i.e., the set of all nested constraints that do not use conjunctions or disjunctions, we have extended it to support universally bounded nested conditions that may use these operators in the conclusion.

\emph{We allow hard constraints and weak constraints to be arbitrary nested graph conditions.}

Although our theory is developed for universal weak constraints (i.e., weak constraints of the form $\forall(P,d)$), this does not limit the range of supported constraints:
\emph{Every condition $c$ over a graph $P$ is equivalent to the condition $\forall(\id_P \colon P \inj P, c)$. }
This means that any nested graph condition can be moulded into the required form.
For example, every existential constraint $c = \exists(e \colon \emptyset \inj P,d)$ is equivalent to the condition $\forall(id_{\emptyset} \colon \emptyset \inj \emptyset, c)$.
In this case, the number of violations acts as a binary property, i.e., it is equal to $1$ if the constraint is unsatisfied and equal to $0$ if the constraint is satisfied.
This is because there is only one morphism $p \colon \emptyset \inj G$ into any graph $G$ that either satisfies or does not satisfy the constraint $c$.
This effect complies with the semantics of existential graph conditions, since only one repair is sufficient for the resulting graph of a transformation to satisfy the constraint.
Therefore,  inserting occurrences of $P$ that do not satisfy $d$ does not decrease the consistency of a graph, since only one of these occurrences needs to be repaired, or a new occurrence of $P$ that satisfies $d$ needs to be inserted.

\begin{figure}
	\centering
	\begin{tikzpicture}[scale = 1]
		\node(A) at (0,0) {$C$};
		\node(B) at (2,0) {$P$};
		\node(C) at (1,-2) {$G$};

		\draw [right hook-stealth] (A) edge node  [above]{$e$} (B);
		\draw [left hook-stealth] (A) edge node  [left]{$p$} (C);
		\draw [left hook-stealth] (B) edge node  [right]{$q$} (C);
	\end{tikzpicture}
	\caption{Morphisms used to evaluate the \emph{set of violations}}
	\label{fig:set_of_vios}
\end{figure}

\begin{defi}[Set and number of violations]\label{def:counting_method}
	Given a condition $c = \forall(e \colon C \inj P,d)$ and a graph morphism $p \colon C \inj G$ (\cref{fig:set_of_vios}), the \emph{set of violations of $c$ in $p$}, denoted by $\sv_{p}(c)$, is defined as
	$$\sv_{p}(c) := \{q \colon P \inj G \mid p = q \circ e \text{ and } q \not \models d\}.$$
	The set of violations of a constraint $c = \forall(e \colon \emptyset \inj P,d)$ in a graph $G$, denoted by $\sv_G(c)$, is defined as $\sv_G(c) := \sv_{p}(c)$ where $p \colon \emptyset \inj G$ is the unique morphism into $G$.

	The \emph{number of violations of a condition $c$ in $p$}, denoted by $\nv_{p}(c)$, is defined as $\nv_{p}(c) := |\sv_{p}(c)|.$
	The \emph{number of violations of a constraint $c$ in a graph $G$}, denoted with $\nv_G(c)$, is defined as $\nv_G(c) := |\sv_G(c)|.$
\end{defi}

\begin{exa}\label{ex:num_violations}
	As discussed in \cref{ex:constraint_satisfaction}, the graph $G$ in \cref{fig:example:example} does not satisfy $\textsf{w}_1$ (which states that two methods contained within the same class must use a common attribute which is also contained in the same class) and $\textsf{w}_2$ (which states that a method and its used attributes cannot be contained in different classes).
	For $\textsf{w}_1$, there are two pairs of methods that do not use a common attribute in the same class (\textsf{addItem()} and \textsf{checkout()}; \textsf{print()} and \textsf{checkout()}).
	Each of these pairs, accompanied with the class they are contained in, is detected twice by the premise of the constraint. For \textsf{addItem()} and \textsf{checkout()}, we either map $\textsf{m}_1'$ to \textsf{addItem()} and $\textsf{m}_2'$ to \textsf{checkout()} or vice versa.
	So the total number of violations of $\textsf{w}_1$ in $G$ is $4$.

	When considering $\textsf{w}_2$, the methods \textsf{print()} and \textsf{checkout()} contained in the class \textsf{Cart} use the attribute \textsf{username}, which is not contained in \textsf{Cart}.
	Therefore, the total number of violations of $\textsf{w}_2$ in $G$ is $2$.
\end{exa}

\section{Application Conditions for Consistency Monitoring}
\label{chapter:application_conditions}
\label{chapter:appl_conds}
The main idea of our approach is to predict the change in consistency induced by the application of a rule.
This allows us to apply the most consistency-increasing rule without using a trial-and-error approach.
To obtain this prediction, we use application conditions that do not block the application of constraint-violating rules, but instead annotate rule matches with the number of constraint impairments and/or repairs caused by the associated transformation; thus, the conditions monitor the consistency change.

This section begins by recalling some preliminaries from the theory of graph transformation that are necessary for our new theory.
To derive application conditions from a constraint, the first main step is to construct all overlaps of a rule and a constraint.
The set of application conditions should be concise, so we identify equivalence classes of overlaps and work with representatives of these classes.
This allows us to simplify the construction and complexity of application conditions compared to the construction presented in~\cite{fritsche2024using}.
We use the well-known techniques introduced by Habel and Pennemann to construct so-called \emph{consistency-preserving} and \emph{consistency-guaranteeing} application conditions~\cite{HabelP09} to construct the application conditions.
To detect all situations in which a violation of a constraint $c= \forall(P,d)$ is repaired by applying a rule $\rho$, we need to consider each overlap of the \ac{LHS} of the rule $\rho$ and the premise $P$ of the constraint. 
To find repairs of the premise $P$, we consider overlaps where the occurrence of $P$ is destroyed by the transformation.
To find repairs of the conclusion $d$, we consider overlaps where the occurrence of $P$ is preserved by the transformation.
If this occurrence of $P$ satisfies $d$ in the resulting graph of the transformation but does not satisfy $d$ in the original graph of the transformation, then this occurrence is a repair of $d$.

For the construction of impairment-indicating application conditions, we use the observation that a rule introduces a violation of a constraint $c$ if and only if the inverse rule repairs a violation of $c$.
Thus, repair-indicating application conditions for a rule $\rho$ are impairment-indicating application conditions for its inverse rule $\rho^{-1}$.
Our main theorem shows that {\em the number of additional constraint violations and repairs caused by a rule application can be characterized by the difference between the numbers of violations of the associated impairment-indicating and repair-indicating application conditions}.
We call this difference the \emph{gain in consistency}.
Constraints can be assigned weights to prioritise certain ones over others, if desired.
Finally, transformations with repair-indicating and impairment-indicating application conditions are compared with consistency-sustaining and consistency-improving transformations presented in~\cite{KosiolSTZ22}.
We found out that a transformation is consistency-sustaining if the gain in consistency is not negative,
it is consistency-improving if the gain is positive.
The proofs of the results presented in this section are included in Section~\ref{app:proofs}.

\subsection{Preliminaries}\label{sec:preliminaries}

In the following, we will briefly introduce the preliminaries that are needed for our construction of application conditions.

An \emph{overlap} $ o = (i_G, i_H, GH)$ of graphs $G$ and $H$ consists of jointly surjective morphisms\footnote{Two morphisms $p \colon P \inj G$ and $q \colon Q \inj G$ into the same graph $G$ are called \emph{jointly surjective} if for each element $e \in G$ either there is an element $e' \in P$ with $p(e')=e$ or there is an element $e' \in Q$ with $q(e') = e$.} $i_G \colon G \inj GH$ and $i_H \colon H \inj GH$, called the \emph{overlap morphisms}, and the common codomain $GH$, called the \emph{overlap graph}.
The set of all overlaps of $G$ and $H$ is denoted by $\Over(G,H)$.

The \emph{shift along morphism} operator allows shifting a condition $c$ over a graph $P$ along a morphism $i_P \colon P \inj PL$ to extend the graph $P$ by further graph elements.
This construction results in a condition over $PL$ that is equivalent in the sense that each morphism $p' \colon PL \inj G$ into some graph $G$ satisfies the shifted condition if and only if $p' \circ i_P$ satisfies $c$.
Given a condition $c$ over a graph $P$ and a morphism $i_P \colon P \inj PL$, the \emph{shift along $i_P$}~\cite{HabelP09}, denoted by $\shift(i_P,c)$, is defined as follows: 

\begin{minipage}{.7 \textwidth}
	\begin{itemize}
		\item if $c = \true$, $\shift(i_P,c) = \true$, and
		\item if $c = \exists(e \colon P \inj Q, d)$, $\shift(i_P,c) = \bigvee_{(e', i_Q)} \exists(e' \colon PL \inj QL, \allowbreak \shift(i_Q,d))$, where $(e', i_Q, QL)$ is an overlap of $PL$ and $Q$ such that $e' \circ i_P = i_Q \circ e$, i.e., the square on the right is commutative, and
		\item if $c = c_1 \vee c_2$, $\shift(i_P,c) = \shift(i_P,c_1) \vee \shift(i_P,c_2) $, and
		\item if $c = \neg c_1$, $\shift(i_P,c) = \neg \shift(i_P, c_1)$.
	\end{itemize}
\end{minipage}
\hspace{.03\textwidth}
\begin{minipage}{.25 \textwidth}
	\begin{tikzpicture}
		\node (P2) at (0,0) {$QL$};
		\node (P1) at (2,0) {$Q$};
		\node (C2) at (0,2) {$PL$};
		\node (C1) at (2,2) {$P$};

		\draw[left hook-stealth] (P1) edge node [above] {$i_Q$} (P2);
		\draw[right hook-stealth] (C1) edge node [right] {$e$} (P1);
		\draw[left hook-stealth] (C1) edge node [above] {$i_P$} (C2);
		\draw[right hook-stealth] (C2) edge node [left] {$e'$} (P2);
	\end{tikzpicture}
\end{minipage}

Note that, when shifting a constraint $\forall(P, d)$ over a morphism $p \colon \emptyset \inj P'$, the first step is to construct all overlaps $\Over(P,P')$.

The \emph{shift over rule} operator transforms a right-application condition into an equivalent left-application condition and vice versa.
Given a rule $\rho = \rle{L}{l}{K}{r}{R}$ and a condition $c$ over $R$, the \emph{shift of $c$ over $\rho$}~\cite{HabelP09}, denoted by $\leftshift(\rho,c)$ is defined as follows:
\begin{itemize}
	\item If $c = \true$, $\leftshift(\rho, c) = \true$, and
	\item if $c = \exists(e \colon R \inj P, d)$, $\leftshift(\rho,c) := \exists(n \colon L \inj P', \leftshift(\der(t)^{-1}, d))$, where $P'$ is the resulting graph, $n$ the comatch and $\der(t)$ the derived rule of the transformation $t \colon P \Longrightarrow_{\rho^{-1},e} P'$. If there is no such transformation, we set $\leftshift(\rho, c) = \false$.
	\item If $c = c_1 \vee c_2$, $\leftshift(\rho,c) := \leftshift(\rho,c_1) \vee \leftshift(\rho,c_1)$.
	\item If $c = \neg c_1$, $\leftshift(\rho,c) := \neg \leftshift(\rho,c_1)$.
\end{itemize}

\begin{figure}
	\centering
	\begin{tikzpicture}[scale = 1]
		\node (G) at (0,0){$G$};
		\node (D1) at (-3,0){$D_1$};
		\node (D2) at (3,0){$D_2$};
		\node (H1) at (-5,0){$H_1$};
		\node (H2) at (5,0){$H_2$};

		\draw[right hook-stealth](D1) edge node [above] {$g_1$}(G);
		\draw[left hook-stealth](D1) edge node [above] {}(H1);
		\draw[left hook-stealth](D2) edge node [above] {$g_2$}(G);
		\draw[right hook-stealth](D2) edge node [above] {}(H2);

		\node (K1) at (-3,2) {$K_1$};
		\node (L1) at (-1,2) {$L_1$};
		\node (R1) at (-5,2) {$R_1$};

		\draw[right hook-stealth](K1) edge node [above] {$l_1$}(L1);
		\draw[left hook-stealth](K1) edge node [above] {$r_1$}(R1);

		\node (K2) at (3,2) {$K_2$};
		\node (L2) at (1,2) {$L_2$};
		\node (R2) at (5,2) {$R_2$};

		\draw[left hook-stealth](K2) edge node [above] {$l_2$}(L2);
		\draw[right hook-stealth](K2) edge node [above] {$r_2$}(R2);

		\draw[left hook-stealth, bend right=10] (L2) edge node [label={[xshift=-3em, yshift=-1.7em]$x_2$}]{}(D1);
		\draw[right hook-stealth, bend left = 10] (L1) edge node [label={[xshift=3em, yshift=-1.7em]$x_1$}]{}(D2);

		\draw[left hook-stealth] (R1) edge node [right]{}(H1);
		\draw[left hook-stealth] (R2) edge node [right]{}(H2);
		\draw[left hook-stealth] (K1) edge node [right]{}(D1);
		\draw[left hook-stealth] (K2) edge node [right]{}(D2);

		\draw[right hook-stealth] (L1) edge node [fill=white]{$m_1$}(G);
		\draw[left hook-stealth] (L2) edge node [fill=white]{$m_2$}(G);
	\end{tikzpicture}
	\caption{Parallel independent transformations}
	\label{fig:parallel_independence}
\end{figure}

\begin{defi}[Parallel independent graph transformations]\label{def:parallel_dependence}
	Two graph transformations $t_1 \colon G \Longrightarrow_{m_1, \rho_1} H_1$ and $t_2 \colon G \Longrightarrow_{m_2, \rho_2} H_2$ via rules $\rho_1 = \rle{L_1}{l_1}{K_1}{r_1}{R_1}$ and $\rho_2 = \rle{L_2}{l_2}{K_2}{r_2}{R_2}$ are \emph{parallel independent} if there are morphisms $x_1 \colon L_1 \inj D_2$ and $x_2 \colon L_2 \inj D_1$ such that $g_2 \circ x_1 = m_2$ and $g_1 \circ x_2 = m_1$.
	The transformations and morphisms are shown in \cref{fig:parallel_independence}.
	If a pair of transformations is not parallel independent, it is called \emph{parallel dependent}.
\end{defi}

\begin{figure}
	\centering
	\begin{tikzpicture}[scale = 1]
		\node (G) at (0,0){$PL$};
		\node (D1) at (-3,0){$D$};
		\node (D2) at (3,0){$PL$};
		\node (H1) at (-5,0){$PR$};
		\node (H2) at (5,0){$PL$};
		\node (GT) at (0, -2){$G$};
		\node (DT) at (-3, -2){$D$};
		\node (HT) at (-5, -2){$H$};

		\draw[right hook-stealth](D1) edge node [above] {$g$}(G);
		\draw[right hook-stealth](DT) edge node [above] {}(GT);
		\draw[left hook-stealth](D1) edge node [above] {$h$}(H1);
		\draw[left hook-stealth](DT) edge node [above] {}(HT);
		\draw[left hook-stealth](G) edge node [right] {$p$}(GT);
		\draw[left hook-stealth](D1) edge node [right] {}(DT);
		\draw[left hook-stealth](H1) edge node [right] {}(HT);
		\draw[left hook-stealth](D2) edge node [above] {}(G);
		\draw[right hook-stealth](D2) edge node [above] {}(H2);

		\node (K1) at (-3,2) {$K$};
		\node (L1) at (-1,2) {$L$};
		\node (R1) at (-5,2) {$R$};

		\draw[right hook-stealth](K1) edge node [above] {$l$}(L1);
		\draw[left hook-stealth](K1) edge node [above] {$r$}(R1);

		\node (K2) at (3,2) {$P$};
		\node (L2) at (1,2) {$P$};
		\node (R2) at (5,2) {$P$};

		\draw[left hook-stealth](K2) edge node [above] {$\id_P$}(L2);
		\draw[right hook-stealth](K2) edge node [above] {$\id_P$}(R2);

		\draw[left hook-stealth, bend right=10] (L2) edge node [label={[xshift=-3em, yshift=-1.7em]$x$}]{}(D1);
		\draw[left hook-stealth, bend right=15] (L1) edge node [label={[xshift=0.0em, yshift=-3.3em]$m$}]{}(GT);
		\draw[dashed, -stealth, bend left](GT) edge node [below]{$\track_t$}(HT);

		\draw[left hook-stealth] (R1) edge node [right]{}(H1);
		\draw[left hook-stealth] (R2) edge node [right]{}(H2);
		\draw[left hook-stealth] (K1) edge node [right]{}(D1);
		\draw[left hook-stealth] (K2) edge node [right]{}(D2);

		\draw[right hook-stealth] (L1) edge node [fill=white]{$i_L$}(G);
		\draw[left hook-stealth] (L2) edge node [fill=white]{$i_P$}(G);
	\end{tikzpicture}
	\caption{Induced transformations of an overlap $(i_L, i_P, PL)$ of a rule $\rho = \completeRle$ and a graph $P$ and application to a graph $G$}
	\label{fig:equivalence_classes}
\end{figure}

\subsection{Equivalence of Overlaps}\label{sec_equiv_overlaps}

To ensure that each occurrence of the premise of a constraint is detected by the application conditions, so that each violation or repair of the constraint during the transformation can be detected, we need to consider each overlap of the \acs{LHS} of the rule $\rho$ that is to be applied and the premise of the constraint that is to be considered, called \emph{overlaps of rule and graph}.
Additionally, we require that the overlap has \emph{induced transformations}, i.e., that the embedding $i_L \colon L \inj PL$ of the \ac{LHS} in the overlap graph is a match for the rule and that the rule is applicable at this embedding (see \cref{fig:equivalence_classes}).
This means that applying the rule at the overlap does not create any dangling edges.
We do not need to consider overlaps without \emph{induced transformations} since they cannot occur in a transformation $t$ (which does not produce any dangling edges) so that the match $m$ factors\footnote{A morphism $e \colon G \to H$ factors through morphisms $f \colon G \to X$ and $g \colon X \to G$ if $e = g \circ f$.} through $i_L$ and the occurrence of $PL$ as this factorisation is needed for the number of violations (see \cref{def:counting_method}).

\begin{defi}[Overlap induced transformation, Overlaps of rule and graph]
	Given a rule $\rho := \rle{L}{l}{K}{r}{R}$ and a graph P, the \emph{set of overlaps of $\rho$ and $P$}, denoted by $\Over(\rho,P)$ is defined as
	$$\Over(\rho, P) := \{(i_L, i_P, PL) \in \Over(L,P) \mid \text{$\rho$ is applicable at $i_L$}\}.$$
	The \emph{induced transformations of an overlap} $(i_L, i_P)\in \Over(\rho,P)$ are given by $t_1 \colon PL \Longrightarrow_{\rho, i_L} PR$ and $ t_2 \colon PL \Longrightarrow_{\chk_{PL}, i_P} PL$ where $\chk_{PL} = \rle{PL}{\id_{PL}}{PL}{\id_{PL}}{PL}$ does not create or delete any elements.
\end{defi}\label{def:repaired_morphism}

Note that the induced transformation $t_2 \colon PL \Longrightarrow_{\chk_P, i_P} PL$ of an overlap $(i_L, i_P, PL)$ always exists because no elements are deleted or created.
However, we will use this transformation to determine which type of repair (impairment) is detected by an application condition constructed using this overlap: If a rule match can be extended by the overlap graph, we have found an occurrence of the premise.
If this occurrence is also a violation of the constraint, it can be repaired either by deleting it or by extending it so that it satisfies the conclusion of the constraint in the resulting graph of the transformation.

\begin{figure}
	\includegraphics[scale=0.8]{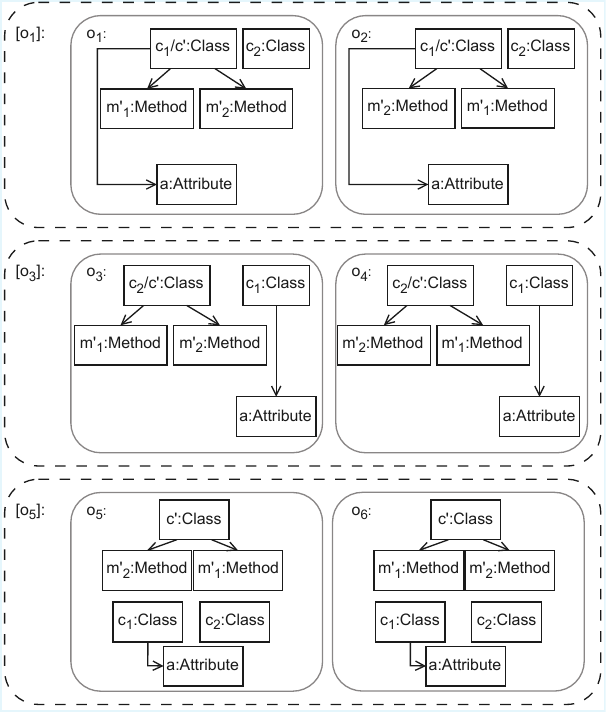}
	\caption{Overlaps $o_1, \ldots, o_6$ of the \ac{LHS} of \textsf{moveAttribute} and the premise of $\textsf{w}_1$ and their respective equivalence classes $[o_1],[o_3]$ and $[o_5]$ (marked with the dashed boxes)}\label{fig:overlaps_move_attribute}
\end{figure}

\begin{exa}
	All overlaps (and their equivalence classes) of the rule \textsf{moveAttribute} and constraint $\textsf{w}_1$ are shown in \cref{fig:overlaps_move_attribute}. The embeddings of the \ac{LHS} and the premise in the graphs are implicitly given by the node identifiers, i.e, if we consider the overlap $o_1$ and the \ac{LHS} of \textsf{moveAttribute}, the node $\textsf{c}_1$ is mapped to $\textsf{c}_1/\textsf{c}'$, $\textsf{c}_2$ is mapped to $\textsf{c}_2$, and $\textsf{a}$ is mapped to $\textsf{a}$.
	For the premise of $\textsf{w}_1$ and $o_1$, $\textsf{c}'$ is mapped to $\textsf{c}_1/\textsf{c}'$, $\textsf{m}'_1$ is mapped to $\textsf{m}'_1$, and $\textsf{m}'_2$ is mapped to $\textsf{m}'_2$.
	All of the overlaps have induced transformations. Applying the rule \textsf{moveAttribute} at an overlap will never produce a dangling edge, since \textsf{moveAttribute} does not delete any nodes.
\end{exa}

When constructing overlaps of graphs, some of them are redundant in the sense that they will detect the same occurrences of the graph they were constructed with, i.e., if for two overlaps $(i_G,i_H, GH)$ and $(i_G',i_H', GH')$ of the same graphs $G$ and $H$ we have: If there is a morphism $p \colon GH \inj X$ into some graph $X$, there is also a morphism $p' \colon GH' \inj X$, so that $p \circ i_G = p' \circ i'_G$ and $p \circ i_H = p' \circ i'_H$.
For example, the overlaps $o_1$ and $o_2$ of the \ac{LHS} of \textsf{moveAttribute} and the premise of $\textsf{w}_1$ (see \cref{fig:overlaps_move_attribute}) will detect the same occurrences of the premise of $\textsf{w}_1$ (i.e., methods contained within the same class), as they only differ in the mapping of the method nodes.
When evaluating application conditions for the number of repairs (impairs), this will lead to overcounting.
In the graph shown in \cref{fig:typed_graph}, there are two occurrences of the premise of $\textsf{w}_1$ involving the nodes \textsf{Item}, \textsf{itemTotal}, and \textsf{itemSingle} (\textsf{$m_1'$} can be mapped to \textsf{itemTotal}, \textsf{$m_2'$} can be mapped to \textsf{itemSingle}, and vice versa).
However, both overlaps $o_1$ and $o_2$ will detect two occurrences of the premise, which results in a total of four detected occurrences of the premise.
In other words, if these occurrences do not satisfy the conclusion of $\textsf{w}_1$ and are repaired during a transformation, two violations will be repaired, but the application condition will detect four repairs of the constraint.
This will lead to a false prediction of consistency gain if this overcounting is not corrected in some way.

To overcome this, we introduce a notion of \emph{equivalence for overlaps} and will then compute only one application condition for each of these equivalence classes, so that repairs (impairments) are only detected once.
This leads to a reduction in the total number of application conditions that need to be computed, and a simpler method of calculating the total gain in consistency using application conditions.
Intuitively, two overlaps of the same graphs are \emph{equivalent} if there is an isomorphism between the overlap graphs that preserves the embeddings of the graphs with which these overlaps were computed.

\begin{figure}
	\centering
	\begin{tikzpicture}[scale = 1]
		\node(GH1) at (0,0){$GH$};
		\node(GH2) at (0,3){$GH'$};

		\node(G) at (-3,1.5){$G$};
		\node(H) at (3,1.5){$H$};

		\draw[right hook-stealth](G) edge node [fill=white] {$i_{G}$} (GH1);
		\draw[right hook-stealth](G) edge node [fill=white] {$i'_{G}$} (GH2);

		\draw[left hook-stealth](H) edge node [fill = white] {$i_{H}$} (GH1);
		\draw[left hook-stealth](H) edge node [fill = white] {$i'_{H}$} (GH2);

		\draw[-stealth](GH1) edge node [left]{$\sim$}(GH2);
	\end{tikzpicture}
	\caption{Isomorphism for the equivalence of overlaps}
	\label{fig:equivalence_of_overlaps}
\end{figure}

\begin{defi}[Equivalence of overlaps]\label{def:equivalence_classes}
	Two overlaps $o = (i_G \colon G \inj GH, i_H \colon H \inj GH)$ and $o' = (i'_G \colon G \inj GH', i'_H \colon H \inj GH')$ over the graphs $G$ and $H$ are \emph{equivalent}, denoted by $o \cong o'$, if there is an isomorphism $\sim \colon GH \inj GH'$ so that $i'_G = \sim \circ i_G$ and $i'_H = \sim \circ i_H$, i.e., the diagram shown in \cref{fig:equivalence_of_overlaps} commutes.
	The \emph{equivalence class of an overlap $o$} is given by $$[o]:= \{ o' \mid o \cong o' \}.$$
\end{defi}

\begin{figure}
	\includegraphics[scale=0.8]{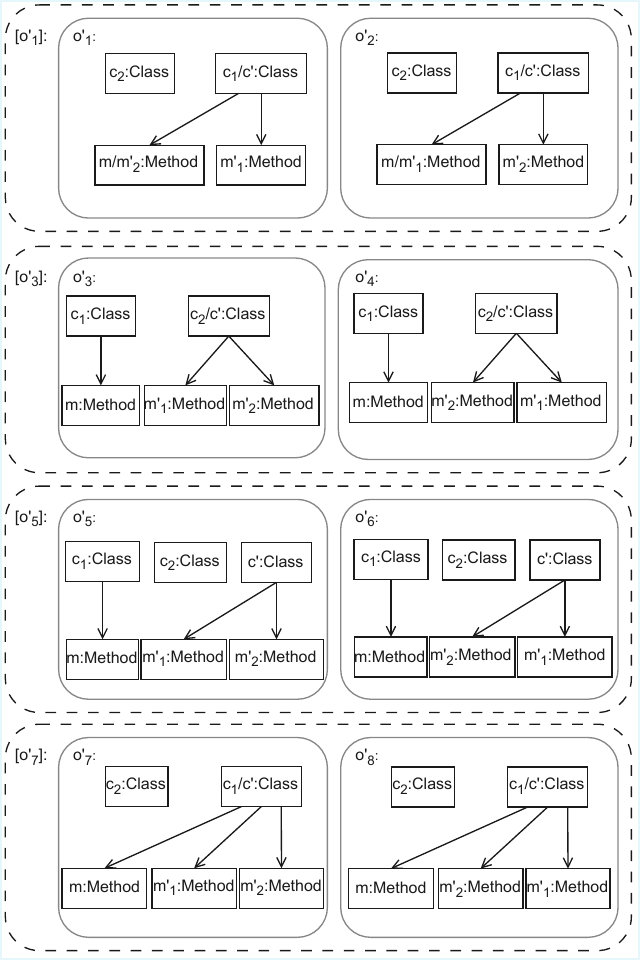}
	\caption{Overlaps $o'_1, \ldots, o'_8$ of the \ac{LHS} of \textsf{moveMethod} and the premise of $\textsf{w}_1$ and their respective equivalence classes $[o_1'], [o_3'],[o'_5]$ and $[o_7']$ (marked with the dashed boxes)}\label{fig_equivalence_classes_dep}
\end{figure}

\begin{exa}\label{ex:equivalence_classes_1}
	The overlap equivalence classes of \textsf{moveAttribute} and the premise of $\textsf{w}_1$ are shown in \cref{fig:overlaps_move_attribute}.
	The node labels implicitly denote the mappings of the LHS of the rule and the overlapped graph. 
	For example, the label $c_1/c'$ indicates that the node $c_1$ from the LHS of \textsf{moveAttribute} and $c'$ from the premise of $\textsf{w}_1$ are mapped to $c_1/c'$, i.e., $i_L(c_1) = c_1/c'$ and $i_P(c') = c_1/c'$.
	 There are three equivalence classes, which are marked by the dashed boxes. 
	\Cref{fig_equivalence_classes_dep} shows the equivalence classes of \textsf{moveMethod} and the premise of $\textsf{w}_1$. Again, we obtain three equivalence classes.
\end{exa}

As the following lemma shows, two overlaps are equivalent if and only if they can detect the same occurrences of the graph $P$.
It ensures that the notion of \emph{equivalence of overlaps} has the desired effect as described above.

\begin{figure}
	\centering
	\begin{tikzpicture}[scale = 1]
		\node(GH1) at (0,0){$GH'$};
		\node(GH2) at (0,3){$GH$};
		\node(X) at (0,1.5){$X$};

		\node(G) at (-3,1.5){$G$};
		\node(H) at (3,1.5){$H$};

		\draw[right hook-stealth](G) edge node [fill=white] {$i'_{G}$} (GH1);
		\draw[right hook-stealth](G) edge node [fill=white] {$i_{G}$} (GH2);

		\draw[left hook-stealth](H) edge node [fill = white] {$i'_{H}$} (GH1);
		\draw[left hook-stealth](H) edge node [fill = white] {$i_{H}$} (GH2);

		\draw[left hook-stealth](GH1) edge node [left]{$p'$}(X);
		\draw[right hook-stealth](GH2) edge node [left]{$p$}(X);
	\end{tikzpicture}
	\caption{Diagram for the characterisation of equivalence classes of overlaps}
	\label{fig:construction_isomorphism}
\end{figure}

\begin{lem}[Characterisation of equivalence classes]\label{lem:equiv_iso_mor}
	Given two graphs $G$ and $H$, two overlaps $(i_G, i_H), (i'_G, i'_H) \in \Over(G,H)$ with overlap graphs $GH$ and $GH'$, and a morphism $p \colon GH \inj X$ into some graph $X$.
	Then $(i_G, i_H) \cong (i'_G, i'_H)$ if and only if there is a morphism $p' \colon GH' \inj X$ so that the diagram shown in \cref{fig:construction_isomorphism} commutes, i.e.,
	\begin{align*}
		&p' \circ i'_G = p \circ i_G \text{ and }\\
		&p' \circ i'_H = p \circ i_H.
	\end{align*}
\end{lem}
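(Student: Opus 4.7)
The statement is an "iff", so I would handle each direction separately, and expect the interesting work to be concentrated in the backward direction.

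For the forward direction, I assume $(i_G, i_H) \cong (i'_G, i'_H)$, so there is an isomorphism ${\sim}\colon GH \inj GH'$ with $i'_G = {\sim} \circ i_G$ and $i'_H = {\sim} \circ i_H$. The natural candidate is $p' := p \circ {\sim}^{-1}$. It is injective as a composition of injectives, and the two required equations fall out by a one-line computation, e.g.\ $p' \circ i'_G = p \circ {\sim}^{-1} \circ {\sim} \circ i_G = p \circ i_G$, and analogously for $i_H$.

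For the backward direction, I am given $p' \colon GH' \inj X$ satisfying the two compatibility equations, and must produce an isomorphism $\sim$. The key observation is that $p$ and $p'$ have the \emph{same image} in $X$: since $(i_G, i_H)$ and $(i'_G, i'_H)$ are overlaps (hence jointly surjective), every element of $GH$ lies in $i_G(G) \cup i_H(H)$ and every element of $GH'$ lies in $i'_G(G) \cup i'_H(H)$, so the compatibility equations give
\begin{equation*}
p(GH) \;=\; p(i_G(G)) \cup p(i_H(H)) \;=\; p'(i'_G(G)) \cup p'(i'_H(H)) \;=\; p'(GH').
\end{equation*}
Since $p'$ is injective, $(p')^{-1}$ is well defined on $p'(GH') = p(GH)$, so I can set ${\sim} := (p')^{-1} \circ p \colon GH \inj GH'$. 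Injectivity of $\sim$ follows from injectivity of $p$ and $(p')^{-1}$, and surjectivity from the image equality above, so $\sim$ is an isomorphism. The two required equations $i'_G = {\sim} \circ i_G$ and $i'_H = {\sim} \circ i_H$ then reduce to the given compatibility $p' \circ i'_G = p \circ i_G$ (resp.\ for $H$) after applying $(p')^{-1}$ on the left.

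The only genuine obstacle is making sure $\sim$ is a well-defined \emph{graph} morphism and not just a set-theoretic map — but this is automatic because $p$ and $(p')^{-1}$ (restricted to $p'(GH')$) are both graph morphisms that agree on source/target mappings and typing, so their composition inherits this structure. Everything else is bookkeeping with the jointly-surjective property of overlaps and injectivity of all morphisms involved.
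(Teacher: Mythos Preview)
Your proposal is correct and follows essentially the same route as the paper: in the forward direction you both set $p' := p \circ {\sim}^{-1}$, and in the backward direction you both exploit the equality of images $p(GH) = p'(GH')$ (via joint surjectivity of the overlap morphisms) to define ${\sim} := (p')^{-1} \circ p$ and argue it is an isomorphism. The only cosmetic difference is that the paper verifies bijectivity by exhibiting the inverse ${\sim}^{-1} := p^{-1} \circ p'$ and checking both composites are identities, whereas you argue injectivity and surjectivity directly; and you explicitly address why ${\sim}$ is a graph morphism, which the paper leaves implicit.
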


In particular, this is a crucial observation for the correctness of our approach: If we consider a constraint $\forall(P,d)$ and evaluate an application condition constructed over some equivalence class $[(i_L, i_P, PL)]$ with respect to some transformation $t \colon G \Longrightarrow_{\rho,m} H$ using some rule $\rho = \completeRle$ at the match $m$, we consider only those occurrences $p \colon PL \inj G$ of the overlap graph, so that the match $m$ factors through $p$ and $i_L$ (i.e., $m = p \circ i_L$).
\Cref{lem:equiv_iso_mor} implies that $p \circ i_P \neq p' \circ i_P'$ (both morphisms $p$ and $p'$ are occurrences of the premise in $G$) if $[(i_L', i_P', PL')]$ is another equivalence class and $p' \colon PL' \inj G$ is an occurrence of $PL'$, so that $m = p' \circ i_L'$.
Therefore, each repair (impairment) of the constraint is detected by exactly one application condition, and we only need to consider the set of \emph{equivalence classes of overlaps of a rule and premise of a constraint} to compute the necessary application conditions.

\subsection{Construction of Application Conditions}\label{sec:construction_appl_cond}

When constructing impairment- and repair-indicating application conditions, all cases in which an impairment can be introduced, or a violation can be repaired must be considered.
A transformation $t \colon G \Longrightarrow H$ can repair violations of a constraint $c = \forall(P,d)$ in the following two ways:
\begin{enumerate}
	\item \emph{Repair of the premise:} An occurrence of $P$ in $G$ that does not satisfy $d$ is deleted.
	\item \emph{Repair of the conclusion:} There exists an occurrence of $P$ which satisfies $d$ in $H$ but does not satisfy $d$ in $G$.
\end{enumerate}
Similarly, the transformation $t$ can introduce impairments in the following ways:
\begin{enumerate}
	\item \emph{Impairment of the premise}: A new occurrence of $P$ is introduced that does not satisfy $d$.
	\item \emph{Impairment of the conclusion}: An occurrence of $P$ satisfies $d$ in $G$ and not in $H$.
\end{enumerate}
The following notions of \emph{repaired} and \emph{impaired morphisms} characterise occurrences of the premise of the constraint that are either repaired or impaired.

\begin{defi}[Repaired morphism]\label{def:repair_morphism}
	Given a constraint $c = \forall(e \colon \emptyset \inj P, d)$, a morphism $p\colon P \inj G$ is called a \emph{repaired morphism w.r.t.~a transformation} $t \colon G \Longrightarrow_{\rho,m} H$ if $p \not \models d$ and
	\begin{enumerate}
		\item $p$ is a \emph{repair of the premise}, if $\track_t \circ p$ is not total, i.e., $p$ is destroyed by the transformation, or
		\item $p$ is a \emph{repair of the conclusion}, if $\track_t \circ p$ is total and $\track_t \circ p \models d$, i.e., $p$ satisfies the conclusion after the transformation.
	\end{enumerate}
\end{defi}

\begin{defi}[Impaired morphism]\label{def:impaired_morphism}
	Given a constraint $c = \forall(e \colon \emptyset \inj P, d)$, a morphism $p\colon P \inj H$ is called an \emph{impaired morphism w.r.t.~a transformation} $t \colon G \Longrightarrow_{\rho,m} H$ if $p \not \models d$ and
	\begin{enumerate}
		\item $p$ is an \emph{impair of the premise}, if there is no morphism $p' \colon P \inj G$ with $p = \track_t \circ p'$, i.e., $p$ is created by the transformation, or
		\item $p$ is an \emph{impair of the conclusion}, if there is a morphism $p'\colon P \inj G$ with $p = \track_t \circ p'$ and $p' \models d$, i.e., $p$ satisfied the conclusion in the original graph of the transformation.
	\end{enumerate}
\end{defi}

The notions of \emph{repaired} and \emph{impaired} morphisms of a transformation $t$ are closely related, since a repaired morphism w.r.t.~$t$ and some constraint $c$ is an impaired morphism w.r.t.~the inverse transformation $t^{-1}$ and $c$.
This observation also implies that we only need an overlap-based construction for repair-indicating application conditions, since impair-indicating application conditions can then be constructed by computing the repair-indicating application conditions of the inverse rule and shifting them to the \ac{LHS}.

\begin{lem}\label{lem:relation_impair-repair}
	Given a constraint $c = \forall(e \colon \emptyset \inj P,d)$, a morphism $p \colon P \inj H$ is a repaired morphism w.r.t.~a transformation $t \colon G \Longrightarrow_{\rho,m} H $ if and only if $p$ is an impaired morphism w.r.t.~the inverse transformation $t^{-1} \colon H \Longrightarrow_{\rho^{-1},n} G $, where $n$ is the comatch of $t$.
\end{lem}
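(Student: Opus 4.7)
The plan is to unfold the two definitions side by side and show that the four case options match up pair-wise via the well-known inversion property of the track morphisms of a transformation and its inverse. Concretely, writing $\der(t) = \rle{G}{g}{D}{h}{H}$, the transformation $t^{-1}$ has the same interface $D$ but with the roles of $g$ and $h$ swapped, so that $\track_t \colon G \dasharrow H$ and $\track_{t^{-1}} \colon H \dasharrow G$ become mutually inverse where defined.

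First I would nail down this inversion property in the form of two observations, each obtained by unfolding the definition of $\track$ and using that $g,h$ are injective: (i) $\track_t(e)$ is defined iff $e \in g(D)$, and similarly $\track_{t^{-1}}(e')$ is defined iff $e' \in h(D)$; (ii) for every $e \in g(D)$, $\track_{t^{-1}}(\track_t(e)) = e$ and for every $e' \in h(D)$, $\track_t(\track_{t^{-1}}(e')) = e'$. These facts imply that the restrictions of $\track_t$ and $\track_{t^{-1}}$ to $g(D)$ and $h(D)$ are mutually inverse bijections.

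Next I would pair up the premise and conclusion cases of \cref{def:repair_morphism} and \cref{def:impaired_morphism}. For the premise case, $\track_t \circ p$ fails to be total iff some element of $p(P)$ lies outside $g(D)$; by (i) and (ii), this is equivalent to the non-existence of a morphism $p' \colon P \inj H$ with $p = \track_{t^{-1}} \circ p'$, since any such $p'$ is forced to take values in $h(D)$. For the conclusion case, when $\track_t \circ p$ is total, the candidate $p' := \track_t \circ p$ is in fact the unique morphism satisfying $p = \track_{t^{-1}} \circ p'$, so $\track_t \circ p \models d$ if and only if there exists such a $p'$ with $p' \models d$. Since the clause $p \not\models d$ appears verbatim in both definitions, matching the premise case with the premise case and the conclusion case with the conclusion case yields the equivalence.

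The main obstacle will be the bookkeeping rather than any real mathematical depth: one has to keep track of which graph each morphism lives in, remember that swapping $t$ and $t^{-1}$ exchanges $g$ with $h$ (so that the images on which $\track$ is defined are swapped), and argue that the witness $p'$ demanded on the impairment side is uniquely determined by $p$. Once this is set up, the injectivity of $g$ and $h$ immediately delivers the inversion facts (i) and (ii), and the case analysis closes the proof.
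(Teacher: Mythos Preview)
Your proposal is correct and follows essentially the same approach as the paper: a case split into premise/conclusion repairs (respectively impairments), matched up via the inversion between $\track_t$ and $\track_{t^{-1}}$. The paper's proof is terser---it simply asserts, e.g., that ``$p$ was introduced by $t$ and is therefore destroyed by $t^{-1}$'' and that $\track_{t^{-1}} \circ p = p'$---whereas you explicitly isolate the inversion facts (i) and (ii) beforehand; this extra explicitness is sound and arguably clearer, but the underlying argument is the same.
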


In order to construct application conditions that are able to detect both, repairs (impairments) of the premise and repairs (impairments) of the conclusion, different overlaps of the \ac{LHS}, and the premise of the constraints must be considered.
For repairs of the premise during a transformation $t \colon G \Longrightarrow_{\rho,m} H$, we want to detect those occurrences of the premise that are destroyed by a transformation, i.e., we are searching for an occurrence $p \colon PL \inj G$ of an overlap $(i_L, i_P, PL) \in \Over(\rho, P)$ (where $P$ is the premise of the constraint to be considered) so that $m = p \circ i_L$ and the occurrence $p \circ i_P$ of the premise is destroyed.
This means that we have to consider each overlap, so that the induced transformations are parallel dependent, since the application of $\rho$ at $m$ will destroy $i_P$ (this implies, in particular, that a transformation of $\rho$ at $i_L$ destroys the occurrence $i_P$).

For repairs of the conclusion we want to detect those occurrences of the premise that are preserved by a transformation, i.e., we are searching for occurrences $p \colon PL \inj G$ of an overlap $(i_L,i_P, PL) \in \Over(\rho,P)$ so that $m = p \circ i_L$ and the occurrence $p \circ i_P$ is preserved.
This means that we have to consider every overlap, so that the induced transformations are parallel independent.
Therefore, we decompose the set of all equivalence classes into a set containing the classes whose induced transformations are parallel independent and a set containing the equivalence classes whose induced transformations are parallel dependent.
Note that if the induced transformations of an overlap $o$ are parallel dependent (independent), then the same is true for each overlap $o'$ that is equivalent to $o$.

\begin{defi}
	Given a rule $\rho = \completeRle$ and a graph $P$, the set of \emph{equivalence classes of overlaps of $\rho$ and $P$}, denoted by $\OverEq(\rho, P)$, is defined as the disjoint union $\OverEq(\rho,P) = \OverPre(\rho,P) \dot{\cup} \OverCon(\rho,P)$ where
	\begin{align*}
		&\OverPre := \{[o] \in \Over(\rho,P) \mid \text{The induced transformations of $o$ are parallel dependent}\} \text{ and}\\
		&\OverCon := \{[o] \in \Over(\rho,P) \mid \text{The induced transformations of $o$ are parallel independent}\}.
	\end{align*}
\end{defi}

\begin{exa}\label{ex:equivalence_classes_2}
	The overlap equivalence classes contained in $\OverEq(\textsf{moveAttribute},P)$, where $P$ is the premise of $\textsf{w}_1$, are shown in \cref{fig:overlaps_move_attribute}.
	There are three equivalence classes (marked by the dashed boxes).
	We have $\OverEq(\textsf{moveAttribute},P) = \OverCon(\textsf{moveAttribute},P)$, because the induced transformations are parallel independent for each of these classes: When applying \textsf{moveAttribute}, the edges from nodes which contain the identifier $\textsf{c}_1$ to attribute \textsf{a} are deleted.
	This edge is not mapped by the premise of $\textsf{w}_1$.
	\Cref{fig_equivalence_classes_dep} shows the equivalence classes of \textsf{moveMethod} and the premise of $\textsf{w}_1$.
	Here, we have $\OverPre(\textsf{moveMethod},P) = \{[o'_1]\}$ (applying \textsf{moveMethod} would delete the edges from $\textsf{c}_1/\textsf{c'}$ to $\textsf{m/m}_1$ and $\textsf{m/m}_2$ respectively; these are mapped by the premise of $\textsf{w}_1$) and $\OverCon(\textsf{moveMethod},P) = \{[o'_3],[o'_5], [o'_7]\}$ (here, the deleted edges from classes to methods are not mapped by the premise of $\textsf{w}_1$).
\end{exa}

To check whether an occurrence of the premise of a constraint satisfies the conclusion before and after a transformation (i.e., in the original and in the resulting graph of a transformation), we introduce \emph{overlap-induced pre- and post-conditions}.
Given an overlap $(i_L,i_P,PL)$ of a constraint with the \ac{LHS} of a given rule $\rho$, the overlap-induced pre-condition checks whether an occurrence of the premise satisfies the conclusion of the constraint.
I.e., an occurrence $p \colon PL \inj G$ (see \cref{fig:equivalence_classes}) satisfies the overlap-induced pre-condition if and only if $p \circ i_P$ satisfies the conclusion of the constraint.
The overlap-induced post-condition gives a look-ahead on whether an occurrence of the premise will satisfy the conclusion of the constraint after some transformation.
In particular, an occurrence $p \colon PL \inj G$ satisfies the overlap induced post-condition if and only if $\track_t \circ p \circ i_P$ satisfies the conclusion of the constraint where $t \colon G \Longrightarrow_{\rho, p \circ i_L} H$.
Note that we only need the overlap-induced post-condition if the occurrence $p \circ i_P \colon P \inj G$ is not destroyed during the transformation, i.e., if the induced transformations of the overlap are parallel independent.
Otherwise, if the induced transformations of the overlap are parallel dependent, the overlap-induced post-condition is equal to $\true$.

\begin{defi}[Overlap-induced pre- and post-conditions]
	Given a rule $\rho = \rle{L}{l}{K}{r}{R}$, an overlap $o = (i_L,i_P, PL) \in \Over(\rho,P)$, and a condition $d$ over $P$, the \emph{overlap-induced pre-} and \emph{post-condition} of $\rho$ w.r.t.~$o$ and $d$, denoted by $\pre_{\rho}(o,d)$ and $\post_{\rho}(o,d)$ are defined as
	\begin{enumerate}
		\item $\pre_{\rho}(o,d) := \shift(i_P,d)$, and
		\item
		$\post_{\rho}(o,d) := \begin{cases}
			\leftshift(\rho', \shift(h \circ x,d)) &\text{if } [o] \in \OverCon(\rho,P) \\
			\true &\text{if } [o] \in \OverPre(\rho,P),
		\end{cases}$
	\end{enumerate}
	where $x,h$ are the morphisms and $\rho' = \rle{H}{h}{D}{g}{PL}$ is the rule as shown in \cref{fig:equivalence_classes}.
\end{defi}

\begin{figure}
	\includegraphics[scale = 0.8]{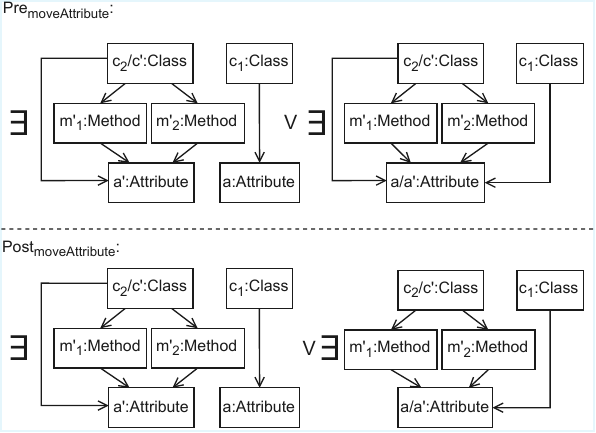}
	\caption{Induced pre- and post-conditions of \textsf{moveAttribute} w.r.t.~$[o_3]$}\label{fig:induced_conditions}
\end{figure}

\begin{exa}\label{ex:postconditions}
	If we consider the rule \textsf{moveAttribute}, constraint $\textsf{w}_1$ (\cref{fig:example:rule+constraints}) and the equivalence class $[o_3]$ shown in \cref{fig:overlaps_move_attribute}, the induced pre- and post-conditions of \textsf{moveAttribute} w.r.t.~$[o_3]$ are shown in \cref{fig:induced_conditions}.
	If the overlaps of $[o_3]$ occur in a transformation via \textsf{moveAttribute}, the attribute \textsf{a} is moved from class $\textsf{c}_1$ to class $\textsf{c}_2$.
	The pre-condition checks whether each pair of methods contained in the class $\textsf{c}_2$ shares a common attribute before the transformation is performed.
	There are two possibilities for this: Either an additional attribute \textsf{a'} contained in $\textsf{c}_2$  is shared by both methods (as in the left graph of the pre-condition), or they share the attribute \textsf{a}, which is to be moved, and \textsf{a} is also contained in \textsf{$c_2$} (as in the right graph of the pre-condition).

	The post-condition predicts whether each pair of methods contained in the class $\textsf{c}_2$ shares a common attribute after the transformation has been performed.
	Again, there are two possibilities for this: Similar to the pre-condition, either an additional attribute \textsf{a'} contained in $\textsf{c}_2$  can be shared by both methods (as in the left graph of the post-condition) or they share the attribute \textsf{a}, which is to be moved (as in the right graph of the pre-condition).
	Note that, for the post-condition \textsf{a} must not be contained in $c_2$ before the transformation, as it will be contained in $c_2$ after the transformation has been performed.
\end{exa}

Before continuing with the construction of the application condition, we present two simplifications for conditions that will reduce the number of graphs contained in the application conditions.
The first lemma enables us to simplify the application conditions by exploiting the fact that hard constraints are always satisfied.
If we consider a hard constraint of the form $c = \neg \exists(P)$ (which is satisfied by a graph $G$ if $P$ is not a subgraph of $G$) and a condition $\exists(e \colon P' \inj Q, d)$, a morphism $p \colon P' \inj G$ cannot satisfy $\exists(e \colon P' \inj Q,d)$ if $Q \not \models c$, otherwise $P$ is a subgraph of $G$.
Therefore, when simplifying an application condition, we can replace any condition of the form $\exists(e \colon P' \inj Q,d)$ with $Q \not \models c$ by $\false$.
Since $\forall(e \colon P' \inj Q, d) = \neg \exists(e \colon P' \inj Q, \neg d)$, conditions of the form $\forall(e \colon P' \inj Q, d)$ can be replaced by $\true$ if $Q \not \models c$.
In our running example, this simplification drastically reduces the number and complexity of derived impairment- and repair-indicating application conditions.

\begin{lem}[Simplification of graph conditions using hard constraints]\label{lemma:naive_filter}
	Let a constraint $c = \forall(e \colon \emptyset \inj P, \false)$ and a condition $c' = \exists(e' \colon P' \inj Q', d)$ with $Q' \not\models c$ be given, then
	$G \models c \implies p \not \models c',$
	for each morphism $p \colon P' \inj G$.
\end{lem}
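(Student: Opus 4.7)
The statement unfolds the semantics of the two conditions and then reaches a contradiction by composition of injective morphisms. My plan is to assume, for contradiction, that some $G$ satisfies $c$ while some $p \colon P' \inj G$ satisfies $c'$, and to construct from this an occurrence of $P$ in $G$, contradicting $G \models c$.

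\textbf{Key steps.} First I would unfold $G \models c$. By the semantics of satisfaction, $G \models \forall(e \colon \emptyset \inj P, \false)$ means that for the unique morphism $\emptyset \inj G$, every extension $P \inj G$ satisfies $\false$. Since no morphism satisfies $\false$, this is equivalent to saying that there is no injective morphism $P \inj G$ at all. Similarly, $Q' \not \models c$ means that there does exist some injective morphism $p_Q \colon P \inj Q'$.

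Second, I would unfold $p \models c'$. By definition of satisfaction of $\exists(e' \colon P' \inj Q', d)$, there is some $q \colon Q' \inj G$ with $p = q \circ e'$ and $q \models d$; the conjunct $q \models d$ is irrelevant here, and only the existence of $q$ will be used.

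Third, I would compose: the morphism $q \circ p_Q \colon P \inj G$ is a composition of two injective morphisms, hence itself an injective morphism from $P$ into $G$. But the existence of such a morphism contradicts $G \models c$ as unfolded in the first step. Hence no such $p$ can satisfy $c'$, which is precisely the desired implication.

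\textbf{Main obstacle.} There is no real technical obstacle; the argument is essentially a one-line composition once the two semantic definitions have been unfolded correctly. The only point where care is needed is in the direction of the injective morphism $P \inj Q'$ guaranteed by $Q' \not \models c$: one must not confuse this with a morphism $Q' \inj P$. Once the correct direction is fixed, composition with $q \colon Q' \inj G$ immediately produces the forbidden occurrence of $P$ in $G$.
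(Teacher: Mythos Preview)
Your proposal is correct and follows essentially the same approach as the paper's proof: assume $G \models c$ and $p \models c'$, obtain $q \colon Q' \inj G$ from the latter and $P \inj Q'$ from $Q' \not\models c$, then compose to produce a forbidden occurrence $P \inj G$. The paper's proof is the same one-line contradiction, just without the explicit unfolding commentary.
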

\begin{exa}\label{ex:first_simplification}
	If we consider the induced pre-condition of \textsf{moveAttribute} w.r.t.~$[o_3]$ shown in \cref{fig:induced_conditions}, the second graph does not satisfy the hard constraint $\textsf{h}_2$, because the attribute \textsf{a} is contained in two classes.
	Therefore, we can replace this graph with $\false$ and the induced pre-condition contains only the existentially bound graph on the left of \cref{fig:induced_conditions}.
\end{exa}

The second simplification of conditions is based on the propositional logical equivalence that $(a \vee b) \implies (a \vee c)$ is equivalent to $b \implies (a \vee c) $.
This allows conditions to be removed if they occur in both parts of the implication.
This again reduces the number of graphs contained in the application condition.
\begin{lem}[Simplification of graph conditions using propositional logic]\label{cor:naive_filter}
	Given a condition $c = (\bigvee_{i \in \mathcal{I}} c_i) \implies (\bigvee_{j \in \mathcal{J}} d_j)$, with index sets $\mathcal{I}$ and $\mathcal{J}$, then this condition is equivalent to the condition $c' = (\bigvee_{i \in \mathcal{I}\setminus\{i_0\}} c_i) \implies (\bigvee_{j \in \mathcal{J}} d_j)$ for some $i_0 \in \mathcal{I}$ if there is a $j_0 \in \mathcal{J}$ with $c_{i_0} \equiv d_{j_0}$.
\end{lem}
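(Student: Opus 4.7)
The plan is to reduce the claim to a propositional tautology applied pointwise to morphisms. Since the semantics of $\vee$ and $\neg$ for nested graph conditions is defined pointwise by the usual truth tables (see the definition of satisfaction), the derived connective $\implies$, defined as $\neg a \vee b$, and the dual $\wedge$, defined via De Morgan, also obey their classical truth tables when evaluated at a fixed morphism. I would therefore fix an arbitrary morphism $p \colon P \inj G$ (where $P$ is the common root graph on which all $c_i$ and $d_j$ are conditions) and show $p \models c \iff p \models c'$; equivalence of the two conditions then follows from the definition $c \equiv c'$.

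The first step is to apply the standard equivalence $(a \vee b) \implies d \equiv (a \implies d) \wedge (b \implies d)$, extended inductively to finite disjunctions, so that, writing $D := \bigvee_{j \in \mathcal{J}} d_j$, we have
\[
c \equiv \bigwedge_{i \in \mathcal{I}} (c_i \implies D) \quad \text{and} \quad c' \equiv \bigwedge_{i \in \mathcal{I}\setminus\{i_0\}} (c_i \implies D).
\]
Because the underlying manipulation is purely propositional and the semantics of nested conditions at $p$ behaves classically, this equivalence is justified by a short case analysis on the truth values of $p \models c_i$ and $p \models D$.

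The second step is to show that the conjunct $c_{i_0} \implies D$ is a tautology, i.e.\ $p \models c_{i_0} \implies D$ for every $p \colon P \inj G$. Assume $p \models c_{i_0}$. By hypothesis $c_{i_0} \equiv d_{j_0}$, so $p \models d_{j_0}$. Since $d_{j_0}$ is a disjunct of $D$, the semantics of $\vee$ yields $p \models D$. Hence the conjunct is satisfied by every morphism, so removing it from the big conjunction does not change the set of satisfying morphisms, giving $c \equiv c'$.

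The argument is essentially routine; the only care needed is notational, namely to justify the propositional manipulations directly from the given semantics of $\vee$ and $\neg$ rather than treating $\implies$ and $\wedge$ as opaque symbols. I do not expect a genuine obstacle: once the pointwise-classical character of the satisfaction relation is observed, both the distribution of $\implies$ over disjunctions in its premise and the tautology $c_{i_0} \implies D$ follow from one-line truth-table arguments.
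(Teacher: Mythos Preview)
Your proposal is correct and takes essentially the same approach as the paper: both reduce the claim to an elementary propositional equivalence evaluated pointwise at morphisms. The paper's proof is even terser, simply invoking the identity $(a \vee b) \implies (a \vee c) \equiv b \implies (a \vee c)$ directly, whereas you go via the distribution $(a \vee b) \implies d \equiv (a \implies d) \wedge (b \implies d)$ and then drop the tautological conjunct $c_{i_0} \implies D$; these are two equally trivial propositional routes to the same end.
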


\begin{exa}
	Again, consider the induced pre- and post-conditions of \textsf{moveAttribute} w.r.t.~$[o_3]$ as described in \cref{ex:postconditions}. As described in \cref{ex:first_simplification}, the pre-condition can be simplified so that it contains only the left graph shown in \cref{fig:induced_conditions}.
	This means that the condition $\post_{\textsf{moveAttribute}} \implies \pre_{\textsf{moveAttribute}}$ is of the form $(\exists Q_1 \vee \exists Q_2) \implies \exists Q_1$, where $Q_1$ is the graph on the bottom left and $Q_2$ is the graph on the bottom right of \cref{fig:induced_conditions}, respectively.
	Using \cref{cor:naive_filter}, we can remove the occurrence of $\exists Q_1$ from the left side of the implication, since it also appears on the right side of the implication.
	This simplification leads to the equivalent condition $\exists Q_2 \implies \exists Q_1$, i.e., $\post_{\textsf{moveAttribute}} \implies \pre_{\textsf{moveAttribute}} \equiv \exists Q_2 \implies \exists Q_1$.
\end{exa}

Now we are ready to present the construction of \emph{repair-indicating application conditions}.
To detect all situations in which a violation of a constraint $c= \forall(P,d)$ is repaired (i.e., to detect each occurrence of the premise), we need to consider each overlap of the \ac{LHS} of the rule $\rho$ and the premise of the constraint, such that this overlap has induced transformations (i.e., we need to consider each overlap $o = (i_L, i_P, PL)$ contained in $\Over(\rho, P)$).
In particular, \cref{lem:equiv_iso_mor} implies that it is sufficient to consider only the equivalence classes contained in $\OverEq(\rho, P)$.
To find repairs of the conclusion w.r.t.~a transformation, we consider overlaps whose induced transformations are parallel independent (i.e., the occurrence of the premise is preserved by the transformation).
We first need to check whether the occurrence of the premise satisfies the conclusion of the transformation in the resulting graph of the transformation using the overlap induced post-condition.
In addition, if this occurrence of the premise does not satisfy the conclusion in the original graph of the transformation, then this occurrence is a repair of the conclusion.
I.e., if an occurrence of the overlap does not satisfy the condition $\post(o,d) \implies \pre(o,d)$, we have found a repair of the conclusion (as then it satisfies the conclusion in the resulting but not in the original graph of the transformation).
\emph{This means, that each violation of the application condition $\forall(PL, \post(o,d) \implies \pre(o,d))$ represents a repair of the conclusion}.
To find repairs of the premise w.r.t.~a transformation, we consider overlaps whose induced transformations are parallel dependent (i.e., the occurrence of the premise is destroyed by the transformation).
Since the overlap-induced post-condition of that overlap is equal to $\true$ and for propositional logic formula it holds that $\true \implies a \equiv a$, we have found a repair of the premise if the overlap does not satisfy the condition $\post(o,d) \implies \pre(o,d)$, i.e., it does not satisfy the conclusion in the original graph of the transformation.
Here, \emph{each violation of the application condition $\forall(PL, \post(o,d) \implies \pre(o,d))$ represents a repair of the conclusion}.
Therefore, we can compute repair-indicating application conditions for both repairs of the conclusion and repairs of the premise via the following construction.

\begin{defi}[Repair-indicating application condition]
	Given a rule $\rho$ and a constraint $c = \forall(e \colon \emptyset \inj P,d)$, the set of \emph{repair-indicating application conditions for $\rho$ w.r.t.~$c$}, denoted by $\repair(\rho,c)$, is defined as 
	\begin{align*}
		\repair(\rho,c) := &\{ \forall(i_L \colon L \inj PL, \post_{\rho}(o,d) \implies \pre_{\rho}(o,d)) \mid \\
		&[o] = [(i_L, i_P,PL)] \in \OverEq(\rho,P) \}.
	\end{align*}
\end{defi}

For the special case, if $c = \forall(e \colon \emptyset \inj P, \false)$ (a graph $G$ satisfies this constraint if $P$ is not a subgraph of $G$, i.e., there is no morphism $p \colon P \inj G$), the constraint can only be repaired (impaired) by repairs (impairments) of the premise.
For an overlap $[o] = [(i_L, i_P, PL)] \in \OverPre(\rho,P)$, i.e., if the induced transformations are parallel dependent and application conditions constructed over this overlap will detect repairs of the premise, we have $\post_{\rho}(o,d) = \true$ and, by using the definition of the $\shift$ operator, $\pre_{\rho}(o,d) = \false$.
This results in the application condition $\forall(i_L \colon L \inj PL, \false)$, which detects each destroyed occurrence of the premise as a repair of the constraint.
For an overlap $[o] = [(i_L, i_P, PL)] \in \OverCon(\rho,P)$, i.e., if the induced transformations are parallel independent and application conditions constructed over this overlap will detect repairs of the conclusion, we have $\post_{\rho}(o,d) = \pre_{\rho}(o,d) = \false$ and we obtain the application condition $\forall(i_L \colon L \inj PL, \true)$ which is always satisfied and reflects the fact that there can be no repair of the conclusion for a constraint of the form $\forall(P, \false)$.

\begin{exa}
	\Cref{fig:resulting_application_conditions} shows the repair-indicating application conditions for the rules \textsf{moveAttribute} and \textsf{moveMethod} w.r.t.~the constraints $\textsf{w}_1$ and $\textsf{w}_2$.
	The application conditions w.r.t.~$\textsf{w}_1$ are constructed for the overlap classes $[o_3]$ and $[o'_1]$ shown in \cref{fig:overlaps_move_attribute} and \cref{fig_equivalence_classes_dep}. The application conditions constructed with the other equivalence classes can be simplified using \cref{lemma:naive_filter} and \cref{cor:naive_filter}, so that it is easy to see that they are equivalent to $\true$ and can therefore be disregarded.
	Note also that the conditions contained in the sets $\repair(\textsf{moveAttribute}, \textsf{w}_2)$, $\repair(\textsf{moveMethod}, \textsf{w}_2)$, and $\repair(\textsf{moveMethod}, \textsf{w}_1)$ will detect repairs of the premise, while the application conditions contained in $\repair(\textsf{moveAttribute}, \textsf{w}_1)$ detect repairs of the conclusion.
\end{exa}

As shown in \cref{lem:relation_impair-repair}, a morphism is a repaired morphism w.r.t.~some transformation $t$ if and only if it is an impaired transformation w.r.t.~the inverse transformation $t^{-1}$, i.e., we can use the same construction to compute impair-indicating application conditions by switching the role of the \ac{LHS} and the \ac{RHS} of the rule.
So the impair-indicating application conditions can be obtained by computing the repair-indicating application conditions of the inverse rules and shifting them to the \ac{LHS}.

\begin{defi}[Impairment-indicating application condition]
	Given a rule $\rho$ and a constraint $c = \forall(e \colon \emptyset \inj P,d)$, the set of \emph{impairment-indicating applications for $\rho$ w.r.t.~$c$}, denoted by $\violation(\rho,c)$, is defined as	$$\violation(\rho,c) := \{\leftshift(\rho, ac) \mid ac \in \repair(\rho^{-1}, c)\}.$$
\end{defi}

\begin{figure}
	\includegraphics[scale = 0.8]{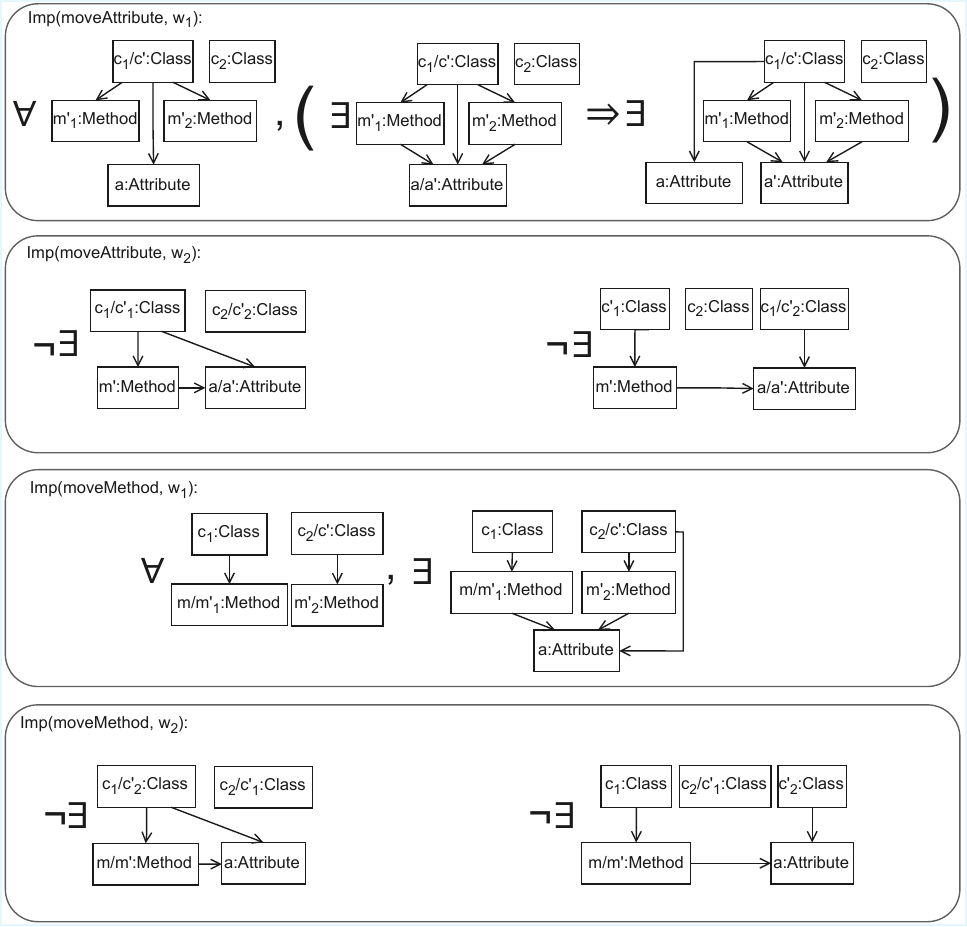}
	\caption{Impairment-indicating application conditions for the rules \textsf{moveAttribute} and \textsf{moveMethod} w.r.t.~the constraints $\textsf{w}_1$ and $\textsf{w}_2$}
	\label{fig:impair_indicating_application_conditions}
\end{figure}

\begin{exa}
	\Cref{fig:impair_indicating_application_conditions} shows the impairment-indicating application conditions for the rules \textsf{moveAttribute} and \textsf{moveMethod} w.r.t.~the constraints $\textsf{w}_1$ and $\textsf{w}_2$.
	The conditions contained in the sets $\violation(\textsf{moveAttribute}, \textsf{w}_2)$, $\violation(\textsf{moveMethod}, \textsf{w}_2)$, and $\violation(\textsf{moveMethod}, \textsf{w}_1)$ will detect impairments of the premise, while the application conditions contained in $\violation(\textsf{moveAttribute}, \textsf{w}_1)$ will detect impairments of the conclusion.
	Note that the right application conditions in $\violation(\textsf{moveMethod}, \textsf{w}_2)$ and $\violation(\textsf{moveAttribute}, \textsf{w}_2)$ are also contained in the sets $\repair(\textsf{moveMethod}, \textsf{w}_2)$ and $\repair(\textsf{moveAttribute},\textsf{w}_2)$, respectively.
	These four application conditions can be removed as their results will cancel each other out when evaluating the consistency gain, since each impairment detected by the right application condition in $\violation(\textsf{moveMethod},\textsf{w}_2)$ will also be detected as a repair by the right application condition in $\repair(\textsf{moveMethod}, \textsf{w}_2)$.
	Therefore, these application conditions cannot influence the consistency gain, either positively or negatively.
\end{exa}

Using the repair-indicating and impairment-indicating application conditions, we can now predict the change of consistency caused by a transformation.
To evaluate the number of impairments, we consider the number of violations of the impairment-indicating application conditions, i.e., each violation of an impairment-indicating application condition corresponds to an impairment of the constraint.
The total sum of these violations is the total number of impairments introduced by the transformation.
For the number of repairs, we consider the number of violations of the repair-indicating application conditions, i.e., each violation of a repair-indicating application condition corresponds to a repair of the constraint.
Again, the total sum of these violations is the total number of repairs introduced by the transformation.
Since each application condition is formulated over the \ac{LHS} of the rule, this gives a look-ahead on the exact gain without actually performing the transformation and reevaluating the number of violations in the resulting graph.

\begin{thm}\label{thm:main_theorem}
	Given a transformation $t\colon G \Longrightarrow_{\rho,m} H$ via a rule $\rho = \completeRle$ and a constraint $c$, then $$\nv_H(c) - \nv_G(c) = \sum_{ac \in \violation(\rho,c)} \nv_m(ac) - \sum_{ac \in \repair(\rho,c)} \nv_m(ac).$$
\end{thm}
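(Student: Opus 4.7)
The plan is to first reduce the claim to a partition argument over the set of morphisms involved in the counting, and then establish a bijection between repaired/impaired morphisms and violations of the corresponding application conditions at the match $m$.

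\textbf{Step 1 (Reduction to universal form).} As pointed out in Section~\ref{sec:counting_constraint_violation}, every constraint is equivalent to one of the form $c = \forall(e\colon \emptyset \inj P, d)$, and the right-hand side of the claim is defined purely through $\repair(\rho,c)$ and $\violation(\rho,c)$, whose construction uses exactly this shape. I would therefore fix this form and write out both $\sv_G(c)$ and $\sv_H(c)$ as sets of injective morphisms $P \inj G$ resp.\ $P \inj H$ that fail $d$.

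\textbf{Step 2 (Partition of violations via the track morphism).} The key combinatorial identity is to partition $\sv_G(c)$ and $\sv_H(c)$ according to the behaviour of $\track_t$. For $p \in \sv_G(c)$ I would distinguish whether $\track_t \circ p$ is undefined (the morphism is destroyed, i.e.\ a repair of the premise in the sense of Definition~\ref{def:repair_morphism}), or it is total and $\track_t \circ p \models d$ (a repair of the conclusion), or it is total and still violates $d$. Symmetrically, $\sv_H(c)$ splits into morphisms without a track preimage (impairment of the premise), morphisms whose track preimage satisfies $d$ (impairment of the conclusion), and those whose preimage already violated $d$. The third classes on both sides correspond bijectively via $\track_t$, so they cancel and give
\[
\nv_H(c) - \nv_G(c) \;=\; (\#\text{impaired morphisms}) \;-\; (\#\text{repaired morphisms}).
\]

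\textbf{Step 3 (Bijection for the repair side).} Given a repaired morphism $p\colon P \inj G$, I would factor the pair $(m,p)$ through the jointly surjective image $PL$ of $L$ and $P$ inside $G$; this yields an overlap $o=(i_L,i_P,PL)\in \Over(\rho,P)$, an occurrence $q\colon PL \inj G$ with $m = q\circ i_L$ and $p = q \circ i_P$. Whether the induced transformations at $o$ are parallel dependent or independent coincides precisely with whether $\track_t\circ p$ is undefined or total, i.e.\ whether we are in the premise- or conclusion-case. By the definition of the overlap-induced pre- and post-conditions, $q$ violates the repair-indicating application condition $\forall(i_L,\post_\rho(o,d)\Rightarrow \pre_\rho(o,d))$ iff (in the premise case) $p\not\models d$, and iff (in the conclusion case) $\track_t\circ p\models d$ but $p\not\models d$. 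Both cases exactly recover Definition~\ref{def:repair_morphism}. Using Lemma~\ref{lem:equiv_iso_mor} I would then argue that different equivalence classes yield different occurrences $q$, so there is no double counting across the sum $\sum_{ac\in\repair(\rho,c)} \nv_m(ac)$, producing the required bijection.

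\textbf{Step 4 (Impair side via the inverse transformation).} Rather than redo Step~3 for impairments, I would invoke Lemma~\ref{lem:relation_impair-repair}: impaired morphisms of $t$ are exactly repaired morphisms of $t^{-1}$. Since $\violation(\rho,c)$ is defined as $\{\leftshift(\rho,ac)\mid ac\in \repair(\rho^{-1},c)\}$, and $\leftshift$ is semantics-preserving in the sense that $m \models \leftshift(\rho,ac)$ iff the comatch $n$ satisfies $ac$, I obtain $\nv_m(\leftshift(\rho,ac)) = \nv_n(ac)$ for each such $ac$. Applying Step~3 to the inverse transformation then identifies $\sum_{ac\in\violation(\rho,c)} \nv_m(ac)$ with the number of impaired morphisms of $t$. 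Combined with Step~2 this yields the claimed equality.

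\textbf{Main obstacle.} The technically heaviest part is Step~3: showing that the construction of $\post_\rho(o,d)$ via $\leftshift$ over the co-rule and the simplification in the premise-dependent case $\post_\rho(o,d) = \true$ really capture ``satisfies $d$ after the transformation'' with respect to the occurrence $q\circ i_P$. I would rely on the semantic correctness of $\shift$ and $\leftshift$ from~\cite{HabelP09}, together with the pushout-complement uniqueness that links an overlap $o$ with induced transformations to the diagram in \cref{fig:equivalence_classes}, and verify that parallel independence is precisely the condition making the shift over the derived rule well-defined and non-false. Ensuring that Lemma~\ref{lem:equiv_iso_mor} rules out double counting across equivalence classes is the second delicate point, but follows directly from its statement applied to the occurrence $q$ in $G$.
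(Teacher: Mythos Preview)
Your proposal is correct and follows essentially the same route as the paper's own proof: first reduce $\nv_H(c)-\nv_G(c)$ to the difference of impaired and repaired morphisms via the track-morphism partition (the paper's Lemma~\ref{lem:correct_counting}), then set up the bijection between repaired morphisms and violations of the repair-indicating conditions by factoring $(m,p)$ through the overlap and invoking the equivalence-class argument (the paper's Corollary~\ref{lem:all_repaired_morphisms_found} and Lemma~\ref{lem:filter_does_not_change}), and finally treat the impairment side by passing to $t^{-1}$ via Lemma~\ref{lem:relation_impair-repair} and the $\leftshift$-stability of violation counts. The one point you leave implicit is the within-class injectivity of $q\mapsto q\circ i_P$ on $\sv_m(ac)$, which the paper isolates as Lemma~\ref{lem:filter_does_not_change} and which follows at once from joint surjectivity of $(i_L,i_P)$ together with $q\circ i_L=m$.
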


\begin{figure}
	\includegraphics[scale=.8]{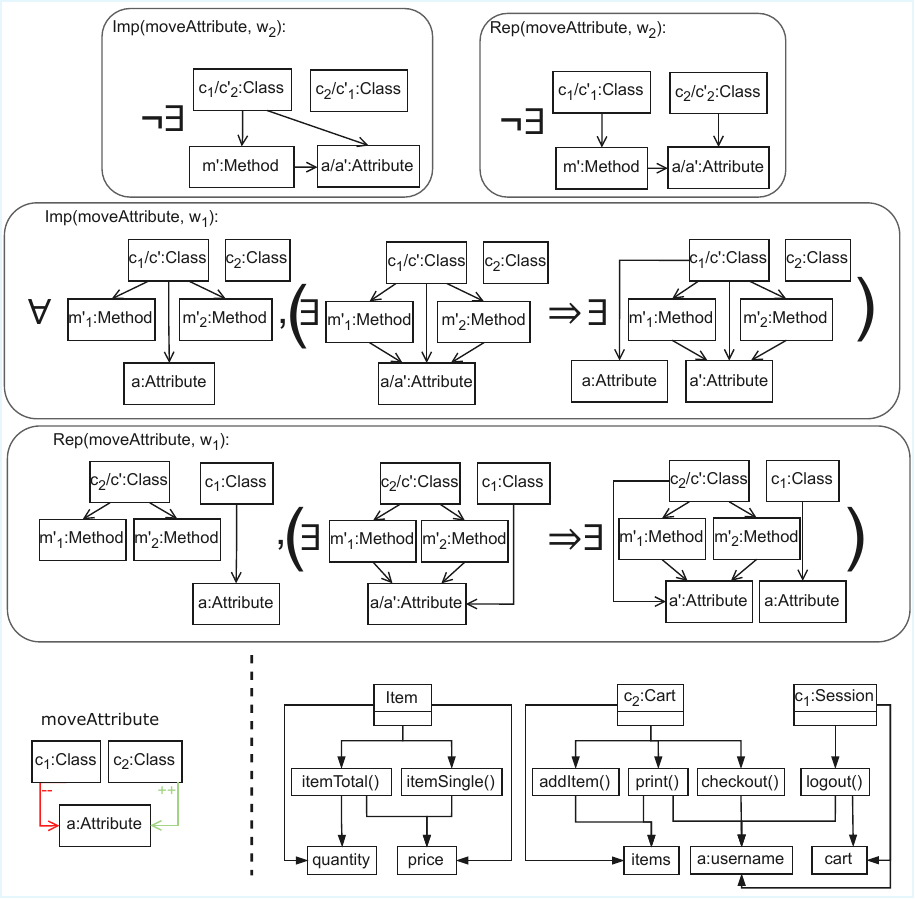}
	\caption{Repair- and impair-indicating application conditions for \textsf{moveAttribute} and $\textsf{w}_2$,  \textsf{moveAttribute} and $\textsf{w}_1$, and an application of \textsf{moveAttribute} to move \textsf{username} from \textsf{Session} to \textsf{Cart}}
	\label{fig:example_eval}
\end{figure}

\begin{exa}\label{example:overall}
	Let us consider the application of \textsf{moveAttribute} to move \textsf{username} from \textsf{Session} to \textsf{Cart}.
	The rule, match, and application conditions used throughout this example are shown in \cref{fig:example_eval}.
	When evaluating the application condition contained in $\repair(\textsf{moveAttribute}, \textsf{w}_1)$ (which signals repairs of $\textsf{w}_1$), we first check whether the match can be extended by the first graph of the application condition.
	There are six such extensions in total:
	\begin{enumerate}
		\item $\textsf{m}'_1$ is mapped to \textsf{addItem()} and $\textsf{m}'_2$ is mapped to \textsf{print()},
		\item $\textsf{m}'_1$ is mapped to \textsf{print()} and $\textsf{m}'_2$ is mapped to \textsf{addItem()},
		\item $\textsf{m}'_1$ is mapped to \textsf{addItem()} and $\textsf{m}'_2$ is mapped to \textsf{checkout()},
		\item $\textsf{m}'_1$ is mapped to \textsf{checkout()} and $\textsf{m}'_2$ is mapped to \textsf{addItem()},
		\item $\textsf{m}'_1$ is mapped to \textsf{checkout()} and $\textsf{m}'_2$ is mapped to \textsf{print()},
		\item $\textsf{m}'_1$ is mapped to \textsf{print()} and $\textsf{m}'_2$ is mapped to \textsf{checkout()}.
	\end{enumerate}
	In the next step, we need to check whether these extensions can be further extended by the left graph of the implication, but not by the right graph of the implication.
	If so, we have found a repair of $\textsf{w}_1$.
	For the left part of the implication, we need to check if both methods of the extension are connected to the attribute \textsf{username}.
	This only applies to extensions $5$ and $6$ in the list above since both methods  \textsf{print()} and \textsf{checkout()} depend on \textsf{username}.
	Extensions $1$--$4$ do not need to be considered anymore.

	In the last step, we need to check if extensions $5$ and $6$ can be further extended by the right graph of the implication, i.e., we need to check if there is an additional attribute contained in \textsf{Cart} that \textsf{checkout()} and \textsf{print()} are connected to.
	Since \textsf{checkout()} is not connected to any of the attributes contained in \textsf{Cart}, there is no such extension for $5$ and $6$.
	This means that $5$ and $6$ do not satisfy the implication and the application of \textsf{moveAttribute} will repair two violations of the constraint $\textsf{w}_1$, i.e., $\sum_{ac \in \repair(\rho, \textsf{w}_1)} \nv_m(ac) = 2$.
	When considering the impair-indicating application condition contained in $\violation(\textsf{moveAttribute}, \textsf{w}_1)$, the match cannot be extended by the first graph of the application condition, because \textsf{Session} only contains one method, i.e., the application condition is satisfied and $\sum_{ac \in \violation(\rho, \textsf{w}_1)} \nv_m(ac) = 0$.
	In total, we get a gain of $\sum_{ac \in \violation(\rho, \textsf{w}_1)} \nv_m(ac) - \sum_{ac \in \repair(\rho, \textsf{w}_1)} \nv_m(ac) = \nv_H(\textsf{w}_1) - \nv_G(\textsf{w}_1) = 0 -2 = -2$.

	Let us now consider the application conditions for \textsf{moveAttribute} and $\textsf{w}_2$.
	For the application condition contained in $\repair(\textsf{moveAttribute}, \textsf{w}_2)$, we need to check whether the match can by extended by the first graph of the application condition.
	There are two such extensions:
	\begin{enumerate}
		\item \textsf{m'} is mapped to \textsf{print()} and
		\item \textsf{m'} is mapped to \textsf{checkout()}.
	\end{enumerate}
	Both occurrences of the pattern in the application condition signal a repair of $\textsf{w}_2$, i.e.,
	$\sum_{ac \in \repair(\rho, \textsf{w}_2)} \nv_m(ac) = 2$.

	For the impairment-indicating application condition $\violation(\textsf{moveAttribute}, \textsf{w}_2)$, the rule match can only be extended by mapping \textsf{m'} to \textsf{logout()}.
	This occurrence signals an impairment of $\textsf{w}_2$, i.e., $\sum_{ac \in \violation(\rho, \textsf{w}_2)} \nv_m(ac) = 1$.
	In total, we get an overall gain of $\sum_{ac \in \violation(\rho, \textsf{w}_2)} \nv_m(ac) - \sum_{ac \in \repair(\rho, \textsf{w}_2)} \nv_m(ac) = \nv_H(\textsf{w}_2) - \nv_G(\textsf{w}_2) =1 - 2 = -1$ for $\textsf{w}_2$.
\end{exa}

In addition, we can predict the overall gain in consistency of a set of constraints and assign weights to each constraint to prioritise certain constraints over others if desired.

\begin{cor}
	Given a transformation $t \colon G \Longrightarrow_{\rho,m} H$ via some rule $\rho$ at match $m$, a finite set of constraints $\{c_1, \ldots, c_n\}$, and some weights $(w_1, \ldots, w_n) \in \mathbb{R}^n$, then $$\sum_{i = 1}^{n} w_i \big(\nv_H(c_i) - \nv_G(c_i)\big) = \sum_{i = 1}^{n} w_i \Big(\sum_{ac \in \violation(\rho,c_i)} \nv_m(ac) - \sum_{ac \in \repair(\rho,c_i)} \nv_m(ac)\Big).$$
\end{cor}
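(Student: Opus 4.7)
The plan is to derive this corollary as a direct consequence of \cref{thm:main_theorem} by applying it pointwise to each constraint $c_i$ and then taking a weighted sum. Since the conclusion is a linear combination indexed over $i = 1, \ldots, n$, and since the weights $w_i$ are constants independent of the transformation $t$, no new graph-theoretic reasoning beyond the main theorem is required.

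Concretely, I would proceed as follows. First, fix an arbitrary index $i \in \{1, \ldots, n\}$ and invoke \cref{thm:main_theorem} on the transformation $t \colon G \Longrightarrow_{\rho,m} H$ and the single constraint $c_i$. This yields the equality
$$\nv_H(c_i) - \nv_G(c_i) = \sum_{ac \in \violation(\rho,c_i)} \nv_m(ac) - \sum_{ac \in \repair(\rho,c_i)} \nv_m(ac).$$
Second, multiply both sides of this equality by the scalar $w_i \in \mathbb{R}$, which preserves the equality. Third, sum over $i$ from $1$ to $n$; since finite sums of equal real numbers yield equal sums, and since the distributive law allows $w_i$ to be pulled inside the inner sums on the right-hand side (or left outside, as in the statement), we obtain exactly the claimed identity.

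There is essentially no obstacle here: the corollary is pure bookkeeping on top of \cref{thm:main_theorem}. The only minor subtlety is making sure the sets $\violation(\rho,c_i)$ and $\repair(\rho,c_i)$ are finite (so that the inner sums are well-defined), but this follows from finiteness of the graphs and of the overlap equivalence classes $\OverEq(\rho,P_i)$ involved in the construction of the application condition sets, as already assumed throughout the paper. Hence the proof is essentially a one-line appeal to linearity of finite sums applied to \cref{thm:main_theorem}.
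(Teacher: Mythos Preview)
Your proposal is correct and matches the paper's intent: the paper states this as an immediate corollary of \cref{thm:main_theorem} without giving a separate proof, and your argument—apply the theorem to each $c_i$, scale by $w_i$, and sum—is precisely the trivial linearity step that justifies it.
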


\begin{exa}
	Consider the set of constraints $\{\textsf{w}_1,\textsf{w}_2\}$, the weights $w = (1,1) \in \mathbb{R}^2$ and a transformation via \textsf{moveAttribute} to move \textsf{username} from \textsf{Session} to \textsf{Cart} (see \cref{fig:example_eval}).
	As discussed in \cref{example:overall}, we have $\nv_H(\textsf{w}_1) - \nv_G(\textsf{w}_1) = 2$ and $\nv_H(\textsf{w}_2) - \nv_G(\textsf{w}_2) = 1$.
	This means that the overall consistency of this transformation is increased by 3 since $H$ contains fewer violations than $G$:
	$$\sum_{i=1}^2 \nv_H(\textsf{w}_i) - \nv_G(\textsf{w}_i) =(-2) + (-1) = -3$$
\end{exa}

\subsection{Comparison to Consistency Sustaining and Improving Transformations}\label{sec:comparisson}

To conclude the theoretical part of this paper, we want to compare the newly introduced application conditions with the notions of consistency-sustaining and -improving transformations introduced in~\cite{KosiolSTZ22}.
As described in Section~\ref{sec:counting_constraint_violation}, the number of violations is an extension of the corresponding notion presented in~\cite{KosiolSTZ22}.
Next, we will recall the notions of consistency-sustaining and -improving transformations and compare them with our approach.
We will see that a transformation is consistency-sustaining if the gain in consistency is not negative, i.e., if the consistency does not decrease during a transformation.
It is consistency-improving if the gain is positive, i.e., if the consistency increases during a transformation.
\emph{Please note that for this comparison, we are assuming that the weight of each constraint is equal to 1.}

\begin{defi}[Consistency sustaining and improving transformation~\cite{KosiolSTZ22}]
	Given a transformation $t\colon G \Longrightarrow_{\rho, m} H$ and a constraint $c = \forall(e \colon \emptyset \inj P,d)$, the transformation $t$ is \emph{consistency sustaining w.r.t.~$c$} if
	$$\nv_H(c) \leq \nv_G(c).$$
	The transformation $t$ is called \emph{consistency improving w.r.t.~$c$} if $$\nv_H(c) < \nv_G(c).$$
\end{defi}
The result of \cref{thm:main_theorem} implies that we can use the application conditions to predict whether a transformation is consistency-sustaining (-improving) or not.

\begin{cor}
	Given a transformation $t\colon G \Longrightarrow_{\rho, m} H$ and a constraint $c = \forall(e \colon \emptyset \inj P,d)$, the transformation $t$ is consistency sustaining w.r.t.~$c$ if and only if
	$$\sum_{ac \in \violation(\rho,c)} \nv_m(ac) - \sum_{ac \in \repair(\rho,c)} \nv_m(ac) \leq 0.$$
	The transformation $t$ is consistency-improving w.r.t.~$c$ if
	$$\sum_{ac \in \violation(\rho,c)} \nv_m(ac) - \sum_{ac \in \repair(\rho,c)} \nv_m(ac) < 0.$$
\end{cor}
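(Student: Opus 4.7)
The plan is to obtain both statements as immediate consequences of Theorem \ref{thm:main_theorem} by routine algebraic rearrangement of the defining inequalities, with no new graph-transformation reasoning required.

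First I would unfold the definition of consistency-sustaining: by definition $t$ is consistency-sustaining with respect to $c$ iff $\nv_H(c) \leq \nv_G(c)$, which is equivalent to $\nv_H(c) - \nv_G(c) \leq 0$. Applying Theorem \ref{thm:main_theorem} to rewrite the left-hand side as $\sum_{ac \in \violation(\rho,c)} \nv_m(ac) - \sum_{ac \in \repair(\rho,c)} \nv_m(ac)$ yields the first biconditional. The argument for consistency-improving is identical with the non-strict inequality replaced by the strict one: $\nv_H(c) < \nv_G(c)$ iff $\nv_H(c) - \nv_G(c) < 0$, and substituting via the theorem gives the claimed implication (in fact an ``iff'', although only one direction is stated).

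There is essentially no obstacle here, since the content of the corollary is just the characterisation of the consistency change already proved as Theorem \ref{thm:main_theorem}, specialised to the sign of that change. The only point worth double-checking when writing the proof is the orientation of the difference: because Theorem \ref{thm:main_theorem} expresses the change in the form \emph{impair sum minus repair sum}, this difference is non-positive precisely when repairs at least balance impairments, which matches the intended semantics of ``sustaining'' (weakly decreasing the number of violations) and, in the strict case, of ``improving'' (strictly decreasing them). With this sign convention confirmed, both parts of the corollary follow in one line each.
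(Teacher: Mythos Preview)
Your proposal is correct and mirrors the paper's own treatment: the paper does not give a separate proof but simply remarks that the corollary follows from Theorem~\ref{thm:main_theorem}, which is exactly the substitution you carry out. Your observation that the consistency-improving part is in fact an ``iff'' is also correct, even though the paper only states one direction.
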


Kosiol et al.~have also introduced a stricter variant of consistency-sustaining and -improving transformations called \emph{directly consistency-sustaining and (-improving) transformations}.
A consistency-sustaining (-improving) transformation is directly consistency-sustaining (-improving) if the transformation does not introduce any violations, i.e., there is no impaired morphism w.r.t.~the transformation.
Thus, for a transformation to be directly consistency-sustaining it is sufficient that no new violation is introduced.

\begin{defi}[Directly consistency-sustaining transformation~\cite{KosiolSTZ22}]\label{def:direct_sustaining}
	Given a constraint $c = \forall(e \colon \emptyset \inj P, d)$, a transformation $t \colon G \Longrightarrow_{\rho,m} H$ is \emph{directly consistency-sustaining} w.r.t.~$c$ if
	\begin{align*}
		&\forall p \colon P \inj G((p \models d \wedge \track_t \circ p \text{ is total}) \implies \track_t \circ p \models d) \text{ and } \\
		&\forall p' \colon P \inj H(\neg \exists p \colon P \inj G(p' = \track_t \circ p) \implies p' \models d).
	\end{align*}
\end{defi}
Note that any morphism $\track_t \circ p$, where $p$ does not satisfy the first part of the condition, is an impairment of the conclusion according to \cref{def:impaired_morphism}, and any morphism $p'$ not satisfying the second part of the condition is an impairment of the premise.
For a directly consistency-improving transformation, it is sufficient that at least one violation of the constraint is repaired, i.e., there is at least one repaired morphism.
\begin{defi}[Directly consistency-improving transformation~\cite{KosiolSTZ22}]\label{def:direct_improving}
	Given a constraint $c = \forall (e\colon \emptyset \inj P,d)$, a transformation $t \colon G \Longrightarrow_{\rho,m} H$ is \emph{directly consistency-improving} w.r.t.~$c$ if $t$ is direct consistency-sustaining w.r.t.~$c$ and
	\begin{align*}
		&\exists p \colon P \inj G (p \not \models d \wedge p' := \track_t \circ p \text{ is total } \wedge p \models d) \text{ or }\\
		&\exists p \colon P \inj G (p \not \models d \wedge p' := \track_t \circ p \text{ is not total}).
	\end{align*}
\end{defi}
Again, any morphism $p$ that does not satisfy the condition is a repaired morphism according to \cref{def:repair_morphism}.

In~\cite{KosiolSTZ22}, application conditions were presented that are only sufficient, i.e., in some cases, transformations are prevented even if the transformation would be consistency-sustaining (consistency-improving).
Our construction of application conditions in Section~\ref{sec:construction_appl_cond} computes application conditions that prevent transformations if and only if the transformation would not be directly consistency-sustaining (-improving).
In particular, this implies that they are the weakest directly consistency-sustaining (-improving) application conditions.

\begin{thm}\label{thm:direct_sustaining_improving_ac}
	Given a constraint $c = \forall(e \colon \emptyset \inj P,d)$, a transformation $t \colon G \Longrightarrow_{\rho,m} H$ is directly consistency-sustaining w.r.t.~$c$ if and only if
	$$m \models \bigwedge_{ac \in \violation(\rho,c)} ac.$$
	The transformation $t$ is directly consistency-improving w.r.t.~$c$ if and only if
	$$m \models \bigwedge_{ac \in \violation(\rho,c)} ac \wedge \neg (\bigwedge_{ac \in \repair(\rho,c)} ac).$$
\end{thm}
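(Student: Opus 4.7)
The plan is to derive both equivalences directly from the counting correspondence that already underlies Theorem~\ref{thm:main_theorem} together with Definitions~\ref{def:direct_sustaining} and~\ref{def:direct_improving}.

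First I would record the key observation that is implicit in the proof of Theorem~\ref{thm:main_theorem}: for each equivalence class $[o] = [(i_L, i_P, PL)] \in \OverEq(\rho, P)$, the violations at $m$ of the associated repair-indicating application condition $ac_o = \forall(i_L \colon L \inj PL, \post_\rho(o,d) \implies \pre_\rho(o,d))$ are in bijection with the repaired morphisms $p \colon P \inj G$ w.r.t.~$t$ that factor as $p = q \circ i_P$ for some $q \colon PL \inj G$ with $q \circ i_L = m$. By \Cref{lem:equiv_iso_mor} no repaired morphism can be detected by two distinct equivalence classes, and by the partition $\OverEq(\rho,P) = \OverCon(\rho,P) \dot\cup \OverPre(\rho,P)$ together with \Cref{def:repair_morphism} every repaired morphism is covered by exactly one class. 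Hence $\sum_{ac \in \repair(\rho,c)} \nv_m(ac)$ equals the total number of repaired morphisms w.r.t.~$t$. \Cref{lem:relation_impair-repair} and the $\leftshift$ construction then give the analogous identity for $\sum_{ac \in \violation(\rho,c)} \nv_m(ac)$ and the impaired morphisms.

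For the direct-consistency-sustaining equivalence, \Cref{def:direct_sustaining} states exactly that there is no impaired morphism w.r.t.~$t$. By the impairment correspondence, this is equivalent to $\nv_m(ac) = 0$ for every $ac \in \violation(\rho,c)$, which in turn is equivalent to $m \models ac$ for each such $ac$, i.e., to $m \models \bigwedge_{ac \in \violation(\rho,c)} ac$. For the direct-consistency-improving equivalence, \Cref{def:direct_improving} additionally demands the existence of at least one repaired morphism. Using the repair correspondence, this extra condition unpacks as $\sum_{ac \in \repair(\rho,c)} \nv_m(ac) > 0$, which holds iff $m \not\models ac$ for some $ac \in \repair(\rho,c)$, iff $m \models \neg \bigwedge_{ac \in \repair(\rho,c)} ac$. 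Conjoining with the sustaining part yields the stated biconditional.

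The substantive obstacle is making the first step rigorous: one must verify that the overlap split into $\OverPre$ (handling repairs/impairments of the premise) and $\OverCon$ (handling repairs/impairments of the conclusion) exactly mirrors the case split in \Cref{def:repair_morphism,def:impaired_morphism}, and that $\shift$ and $\leftshift$ encode "$p \circ i_P$ satisfies $d$ in $G$" and "$\track_t \circ p \circ i_P$ satisfies $d$ in $H$", respectively. Once this bookkeeping is carried out---which is essentially the work already done for \Cref{thm:main_theorem}---both biconditionals fall out immediately by unwinding quantifiers. Note that, in contrast to \Cref{thm:main_theorem}, we cannot merely rely on the cardinality identity, because a transformation with a matching positive number of repairs and impairments is consistency-sustaining but need not be \emph{directly} so; the argument therefore must pass through the pointwise existence of impaired/repaired morphisms rather than through their count.
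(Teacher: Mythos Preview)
Your proposal is correct and follows essentially the same route as the paper: the paper's proof simply invokes \Cref{lem:all_repaired_morphisms_found} and \Cref{cor:all_impaired_morphisms_found} (the bijective correspondences you re-sketch in your first paragraph) and then unwinds \Cref{def:direct_sustaining,def:direct_improving} exactly as you do. Your closing remark---that one must argue via the pointwise (non-)existence of impaired/repaired morphisms rather than via the numerical identity of \Cref{thm:main_theorem}---is spot on and mirrors the paper's reasoning.
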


\section{Evaluation}
\label{chapter:evaluation}
To evaluate the practical relevance of our approach, we implemented a greedy graph optimization algorithm for a variant of the \ac{CRA} case study (cf.~Section~\ref{chapter:example}).
This was done using the graph transformation tool eMoflon\footnote{\url{https://www.emoflon.org}}, which incrementally computes all matches to a given graph for rules and their application conditions.
This is a prerequisite for the efficient implementation of this approach, as calculating all matches from scratch after each rule application can easily become a serious performance bottleneck.
Based on the matches provided by eMoflon, our implementation then counts violations of derived application conditions and ranks rule matches w.r.t.~the number of constraint violations that are removed or added by the application of the considered rule at the considered match (see \cref{thm:main_theorem}).
Then, the rule application with the highest rank is greedily selected and applied, and the ranking of rule matches is updated.
Note that the application conditions are currently not automatically derived, but were designed and implemented by hand in eMoflon based on our formal construction.

Our evaluation was run on a Ryzen 7 3900x with \SI{64}{\giga\byte} RAM on Windows 11 23H2.
It is available as a virtual machine\footnote{\url{https://www.zenodo.org/records/10727438}} with a detailed description of how to reproduce our results.
In the following, we first investigate the scalability and effectiveness of our approach.
To assess the effectiveness of our approach, we compare it to an \ac{ILP}-based approach that can find the global optimum of our test scenario\footnote{This comparison is an extension of our earlier work.}.

\subsection{Scalability and effectiveness}
\label{sec:scalabilityAndEffectiveness}

Finding all the matches for rules and application conditions can become very expensive if there are many matches to find.
The \ac{CRA} case study is particularly challenging because a feature can be moved from one class to any other class, which means that the number of refactoring steps, and thus the number of possible matches, grows rapidly with an increasing number of classes and features.
Therefore, we pose the following research questions: {\em (RQ1) How does our approach scale with respect to the size of a processed class diagram (graph)?} and {\em (RQ2) Can our approach reduce the number of violations?}

To answer the first question, we need to examine the two phases of our approach, which are related to how eMoflon works and incrementally provide us with collected matches.
First, we measure the time it takes to compute the initial collection of all rule matches and application condition occurrences.
Then, we use the application condition occurrences to rank the rule matches.
Depending on the size of the model, the initialisation is expected to take longer than applying a rule, incrementally updating eMoflon's internal structures, and ranking the rule matches.
Second, we measure the time taken to perform 10 repair steps, where we have to judge which repair to apply next, based on an actually selected repair step.
For this, we use the most promising (i.e., highest ranked) rule application, where each repair and each impairment is uniformly weighted with a value of 1.

To investigate the scalability of our approach, we created synthetic class diagrams of varying sizes with increasing numbers of classes, where each class has five methods and five attributes.
Each method has two dependencies on attributes of the same class and another three dependencies on attributes of other classes.
Having more dependencies means that it is less likely to move features from one class to another, because more features would form a dependency clique within a class.

\begin{figure}
	\centering
	\includegraphics[width=1\textwidth, trim={0 1.13cm 0 0.3cm}, clip]{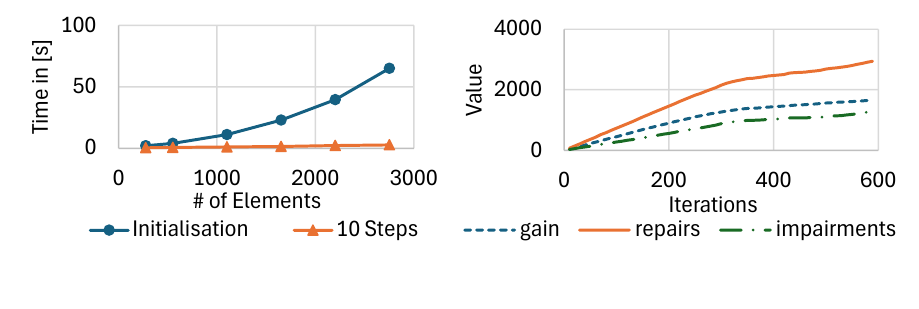}
	\caption{Performance measurements for models with increasing size (left) and consistency improvement of a model with 2201 nodes (right)}
	\label{fig:example:eval}
\end{figure}

\Cref{fig:example:eval} shows our evaluation results, where the left plot shows the time in seconds for processing models with up to 2,751 nodes.
Starting with 276 nodes, the initialisation time is \SI{2.2}{\second}, and performing 10 refactoring steps takes \SI{0.5}{\second}.
With 2,751 nodes, this time increases to \SI{65}{\second} for the initialisation and \SI{2.8}{\second} for the 10 steps.
Obviously, the initialisation takes 30 times longer for a 10 times larger model, while the incremental updates scale better and take only 5 times longer.
The reason for this non-linear increase lies in the structure of our rules, where the number of matches increases rapidly with each new class.
For a model with 276 nodes, we collected 600 rule matches and 36,000 application condition matches, while for the largest model with 2,751 nodes, we found 622,500 rule matches and 3,7 million condition matches.
So, while it takes 30 times as long to run a model 10 times the size, we found 118 times as many matches.
This shows that even for a challenging scenario like the \ac{CRA} use case, our approach scales reasonably well given the number of repair steps available (RQ1).

To answer RQ2, we took a model with 2,201 nodes and measured the aggregated number of impairments and repairs after n iterations (i.e., repair steps) along with their gains in consistency.
As before, in each iteration, we chose the rule application with the highest gain and continued until there was no rule application left with a positive gain, i.e., the application of a rule at that point would have no effect or cause more impairments than repairs. 
The results are shown in \cref{fig:example:eval} on the right.
After 589 iterations with a total gain of 1,661, the process terminated with a (local) optimum.
Thus, the resulting model contains 1,661 fewer violations than before, leading to a first answer to RQ2.

As shown, our approach can incrementally maintain the necessary ran\-king information for ru\-le-matches (RQ1) and improve consistency in a relatively small number of iterations (RQ2).
This is particularly interesting considering the fact that the search space of all rule matches can grow very rapidly, which is a particularly challenging scenario for our approach.

\subsubsection{Threats to validity}
\label{section:evaluation:threads}
Currently, we only investigate the \ac{CRA} case using synthetic class diagrams.
To evaluate the general scalability of our approach, more scenarios should be investigated, preferably using real-world data, e.g., extracted from public code repositories.
In addition, the application conditions are currently constructed by hand (following our formal construction process), which can be a source of error.
In addition, since we only have a look-ahead of 1, there may not always be a good next step to improve consistency, even though a better overall solution exists.
Therefore, future work should investigate how our approach performs when the greedy strategy is replaced by another strategy, such as simulated annealing.

\subsection{Comparison with an ILP-based approach}
\label{chapter:ext_evaluation}
In this subsection, we want to further investigate our approach and compare it with another solution that calculates an optimal solution using the \acf{GIPS} tool~\cite{gipsGCM2022}.
\ac{GIPS} is a framework that combines \ac{GT} and \ac{ILP} to define optimal model transformations.
We, thereby, want to show that our approach can indeed find the global optimum (at least for smaller models), with the benefit of scaling better for larger models than \ac{GIPS}.

As a basis for our comparison, we use the test data from the TTC16~\cite{DBLP:conf/staf/FleckTW16} \ac{CRA} case study, for which we have extended our example to also support dependencies between methods.
Note that these dependencies were omitted before (cf.~Section~\ref{chapter:example}).
Another difference between our previous variant and the original \ac{CRA} case study is that the latter consists only of features that need to be clustered into (newly created) classes.
Previously, we assumed a predefined class diagram, in which each feature was already assigned to a class.
Thus, the case study represents a more challenging scenario in which we now have to find out how many classes are needed to maximise the ratio between cohesion and coupling.
We bridge this gap similar to the \ac{CRA} solution of Georg Hinkel~\cite{Hinkel16} by creating a new class for each feature and then applying either our approach or \ac{GIPS} to the resulting model.
However, we do not compare our approach to any of the original \ac{CRA} competition entries because we are not actually implementing the \ac{CRA} index (the metric that was used in the TTC16), but rather a custom metric that is inspired by it.
To understand why, let us first introduce the metric itself.

Our custom metric is the sum of violations of the following two constraints:
First, features within the same class must have a dependency between each other, and, second, features within different classes must not have a dependency between each other.
The definition of a dependency is the same as in the TTC16~\cite{DBLP:conf/staf/FleckTW16} \ac{CRA} case study.
Thus, the resulting value of our metric represents the total number of violations of these two constraints within the system.
In contrast, the \ac{CRA} index is the difference between coupling and cohesion, where both are determined by the number of dependencies divided by the number of all possible dependencies.
While the results of the TTC16 show that improving the \ac{CRA} index does indeed lead to class diagrams with improved coupling and cohesion, it is difficult to determine the coupling's and cohesion's current state from the \ac{CRA} index alone, which is (usually) not an integer.
In general, our approach can only be combined with the \ac{CRA} index under severe drawbacks.
Using our custom metric, an impairment/repair will always have a constant value, e.g., $-1$ for an impairment and $+1$ for a repair.
While performing a refactoring step may inflict new impairments or lead to some repairs, we can track and maintain these changes incrementally to find the next best refactoring step in an efficient way.
The \ac{CRA} index, however, does not come with such a constant value for each impairment/repair, and depending on the previous former actions, an impairment/repair may have a different value due to the \ac{CRA} index' nature of incorporating all dependencies in a class diagram.
This means that, after each refactoring step, we would need to reevaluate all other refactoring steps anew to determine their actual effect on the \ac{CRA} index.
As this does not align with our concept that leverages incremental techniques, we chose this custom metric, which was designed to mimic the \ac{CRA} index to some degree.
However, there are also technical reasons related to \ac{GIPS} for not using the \ac{CRA} index, which will be discussed at the end of this chapter.

In the following subsections, we will discuss the modifications to support the TTC16 test data and briefly present the implementation based on \ac{GIPS}.
We then compare the performance of the two approaches in terms of runtime and quality, and discuss how our custom metric relates to the \ac{CRA} index.

\begin{figure}
	\centering
	\includegraphics[scale = 0.8]{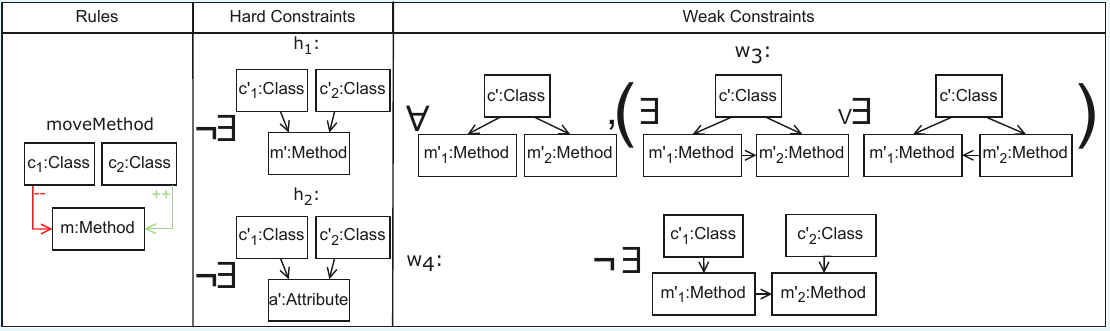}
	\caption{Rules and constraints for the extended evaluation}
	\label{fig:example:rule+constraints_extended_eval}
\end{figure}

\begin{figure}
	\includegraphics[scale = 0.8]{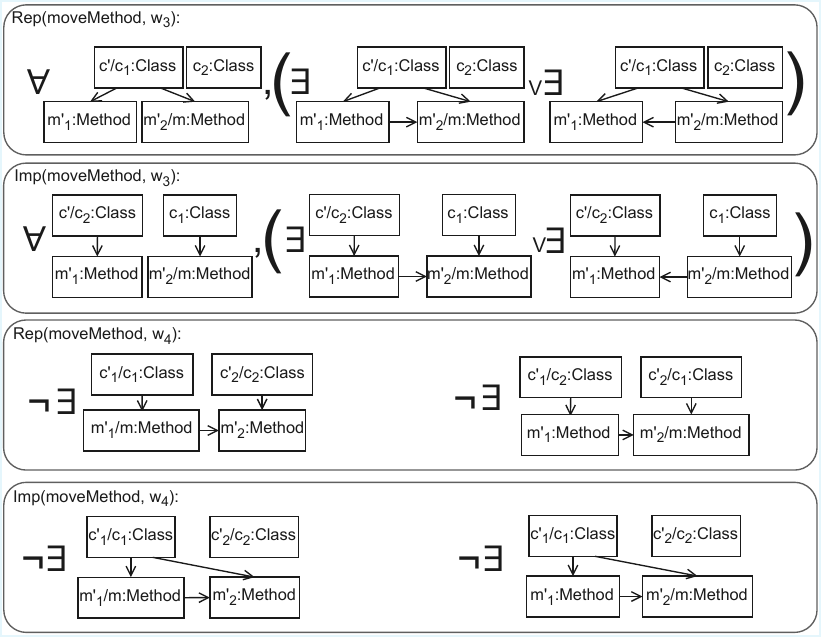}
	\caption{Repair- and impairment-indicating application conditions for the rule \textsf{moveMethod} w.r.t.~the constraints $\textsf{w}_3$ and $\textsf{w}_4$}\label{fig:acs_extended_eval}
\end{figure}

\subsubsection{Reducing violations with application conditions}
Compared to our evaluation in \cref{sec:scalabilityAndEffectiveness}, we have changed the constraints and thus the application conditions derived from them.
In particular, we use the refactoring rule \textsf{moveMethod}, assume that the hard constraint $\textsf{h}_1$ is always satisfied, and consider the weak constraints $\textsf{w}_3$ and $\textsf{w}_4$ shown in \cref{fig:example:rule+constraints_extended_eval}.
These constraints model good code patterns that should be satisfied in any well-designed class diagram.
Constraint $\textsf{w}_3$ requires that for any pair of methods contained in the same class, one method uses the other, or vice versa; $\textsf{w}_4$ requires that any pair of methods so that one of the methods uses the other, both are contained in the same class.
In total, we have four application conditions for constraint repair and another four for constraint impairment (cf.~\cref{fig:resulting_application_conditions}, \cref{fig:impair_indicating_application_conditions}, and \cref{fig:acs_extended_eval}).

The reason for this is that dependencies between two methods have a different edge type than those between a method and an attribute.

Another change to the experiments in Section~\ref{sec:scalabilityAndEffectiveness} is that we run the experiments multiple times, and additionally allow a certain amount of refactorings at the beginning of each run, to impair more than to actually repair.
In this way, we try to find different (local) optima and pick the best one.

Since there are no classes at the beginning, we create one for each feature.
We then start refactoring the model step by step, moving methods and attributes from one class to another until we can no longer improve with the next step.

\subsubsection{Finding an optimal solution with GIPS}

To find an optimal solution to our example problem, we use the \ac{GIPS} framework~\cite{gipsGCM2022}.
\ac{GIPS} combines the \ac{GT} framework with \ac{ILP} optimization techniques in order to optimize a given input (graph) model.
Therefore, a custom \ac{DSL} called \ac{GIPSL} has been developed that allows \ac{ILP} constraints to be expressed in a UML/OCL-like fashion and which extends the eMoflon::IBeX-GT language that provides the ability to specify \ac{GT} rules and graph patterns.
In essence, the \ac{GIPSL} specification contains a set of \ac{GT} rules with corresponding \ac{ILP} constraints and an objective, all of which refer to a given metamodel.
Based on this specification, the \ac{GIPS} framework generates an \ac{ILP} problem generator that is able to translate any (graph) model that conforms to the metamodel into an \ac{ILP} optimization problem.
For a given input model, the \ac{ILP} solver determines a subset of all \ac{GT} rule matches for which the corresponding \ac{GT} rules should be applied.
The application of all selected \ac{GT} rule matches then optimizes the model according to the objective goal defined in the \ac{GIPSL} specification.
Unlike using only \ac{GT}, this approach provides the ability to specify global constraints on the model and \ac{GT} rule applications.
This allows for inter-rule constraints, for example, if \textit{rule A} is executed on a specific match, \textit{rule B} must also be executed on the same (or another) match.

For our solution to the \ac{CRA} assignment problem, we generate a number of empty classes in the model to which the methods and attributes can be assigned.
We then use \ac{GIPSL}\footnote{\url{https://github.com/Echtzeitsysteme/gips-examples/blob/main/architecture.cra.gipssolution/src/architecture/cra/gipssolution/Model.gipsl}} to model the following behavior.
For the assignment itself, our specification contains a set of \ac{GT} rules that create an assignment edge between a method/attribute and a class.
In addition, we need a set of constraints to ensure that a computed solution is valid.
Therefore, our \ac{GIPSL} specification contains two constraints that ensure that each method and attribute is assigned to exactly one class.
For the optimization part, we count the number of \ac{CRA} violations that a specific solution would produce if all corresponding \ac{GT} rule matches were applied.
This sum of violations forms our objective function which motivates the solver to choose the set of assignments that minimizes the number of violations globally.
After the \ac{ILP} solver has determined the optimal subset of \ac{GT} rule matches to apply, we execute those rules and obtain the solution to the assignment problem.

\subsubsection{Comparison}
In the following, we will compare our approach with the \ac{GIPS} solution based on up to six input models of varying complexity degree taken from the TTC16 \ac{CRA} case study.
These models consist only of features (half methods and half attributes) with predefined dependencies between them.
From model A to F, they come with 9, 18, 35, 80, 160, and 400 features, respectively.
As a first research question, we want to know \emph{how our approach scales compared to another solution that solves the same problem using other optimization techniques} \textbf{(RQ1)}.
Second, while \ac{GIPS} will always find the global optimum as long as it terminates, this is not necessarily true for our approach.
Therefore, we want to investigate whether, \emph{for the models for which we have a solution from \ac{GIPS}, our approach is able to find the same solution} \textbf{(RQ2)} and \emph{how confident we can be in finding this solution with each run} \textbf{(RQ3)}.

To answer these three questions, we evaluated the \ac{CRA} scenarios on a workstation equipped with an AMD EPYC 7763 with 128 threads and \SI{256}{\giga\byte} RAM using Debian 12, OpenJDK 21.0.4, and Gurobi Optimizer 11.0.3.
For both approaches, we measured the total time to compute a solution by repeating each scenario 10 times with \ac{GIPS} and 200 times with our approach (AC).
\Cref{fig:example:ext_eval} shows the total time on the left, where we can immediately see that only the smaller three models could be processed with \ac{GIPS}, after which the memory consumption exceeded \SI{256}{\giga\byte}.
The time increases from \SI{2.4}{\second} for A, to \SI{21.3}{\second} for B, and to \SI{613.8}{\second} for C, with a standard deviation of \SI{0.04}{\second}, \SI{2.6}{\second}, and \SI{226.1}{\second}, respectively.
In comparison, using application conditions (AC), we see a less steep increase in the runtime, where it takes \SI{132}{\second} on average to find a solution with a standard deviation of less than \SI{2.5}{\percent} for all scenarios.
For \ac{GIPS}, doubling the number of features increases the time by about 25-30 times, while using ACs triples/quadruples the time \textbf{(RQ1)}.

\Cref{fig:example:ext_eval} shows the results of AC and GIPS on the right, showing the average number of violations for each scenario.
Obviously, \ac{GIPS} always returns the optimal solution to the problem, which is 7 remaining violations for A, 17 violations for B, and 73 for C.
Interestingly, using AC and picking the best solution from the 200 runs, we always got the globally optimal solution with the same number of remaining violations left \textbf{(RQ2)}.
Moreover, the standard deviation for the average number of violations never exceeded 2.3 violations, which means that even for larger \ac{CRA} models, we usually get solutions of a similar quality with respect to our custom metric \textbf{(RQ3)}.
However, we do not know whether our solutions for D, E, and F are actually globally optimal, because \ac{GIPS} could not compute this anymore due to the limited amount of memory of our system.

\begin{figure}
	\centering
	\includegraphics[width=1\textwidth, clip]{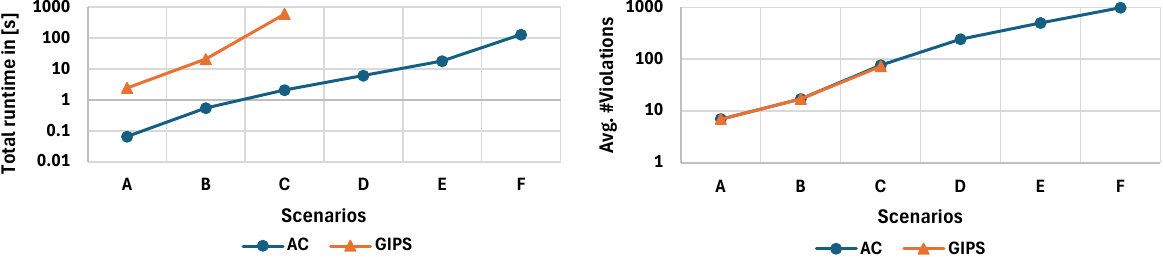}
	\caption{Average performance measurements for increasing model size (left) and average number of total violations (right)}
	\label{fig:example:ext_eval}
\end{figure}

\subsubsection{Threats to validity}
In general, the same threats to validity apply as in Section~\ref{section:evaluation:threads}.
However, we changed the greedy algorithm of picking the most promising rule in each run with an alternative that can also pick rules that repair less than they impair at the beginning.

In addition, we want to discuss our reasons for not using the \ac{CRA} index once more.
While it is true that our custom metric is more in line with the idea of using application conditions, we want to emphasise that it would be difficult to actually express incremental changes in the \ac{CRA} index using our approach.
The \ac{CRA} index is the sum of coupling and cohesion values that are expressed as the ratio of the number of dependencies of a class divided by the number of all possible dependencies.
This value, which is usually not an integer like our violation count, is harder to update incrementally for each refactoring step, and we would have to update it after each iteration for each available refactoring.
Also, we cannot express the \ac{CRA} index using \ac{GIPS} because the ratio contains divisors that depend on which classes have been assigned what features.
Since the divisor is not constant, but depends on the actual solution, it is not possible to express this as an \ac{ILP} problem.

\section{Related Work}
\label{chapter:relatedWork}
The following discusses related work on graph constraint checking and different approaches to graph repair. It also covers ranking of model repair steps and solving the CRA problem.

\paragraph{Graph constraints and application conditions.}
Habel and Pennmann~\cite{HabelP09} introduced the original process for generating application conditions from nested graph constraints,
which are consistency-guaranteeing, meaning that applying a rule with such conditions is guaranteed to produce a graph that is consistent with respect to the given (hard) constraints.
In our paper, we extend the binary case of satisfying constraints by ranking rule applications based on how many constraint violations they add or remove w.r.t.~a set of weak constraints.

Kosiol et al.~\cite{KosiolSTZ22} also count violations but only consider one type of constraint (hard constraints).
While they consider constraints in alternating quantifier normal form (i.e., constraints that do not use conjunction or disjunction), our approach supports arbitrary constraints.
In~\cite{KosiolSTZ22}, application conditions are constructed to make rule applications consistency-sustaining, but there is no such construction for consistency-improving rule applications.
Moreover, the application conditions are consistency-sustaining in the strict sense that no new constraint violations are allowed.

Since the resulting set of application conditions can be large and thus expensive to evaluate, Nassar et al.~\cite{NKAT20} showed that some subconditions can be filtered if they check for cases that cannot occur.
We also filter the resulting application conditions, but based on a set of additional hard constraints, e.g., by filtering out conditions that check for features contained in multiple classes simultaneously.

\emph{In contrast to all the literature mentioned above, we construct application conditions to rank rule applications along their potential to improve consistency.
This means that rule applications are not blocked, even if some application conditions are violated.}
This approach can be used effectively in a transformation engine like eMoflon, which supports the incremental computation of rule matches.

\paragraph{Incremental graph constraint checking.}
In~\cite{Abdeen14}, constraints are also checked by incremental graph pattern matching.
This paper distinguishes between well-formedness and ill-formedness constraints, the former of which are desirable and the latter of which are to be avoided.
Using genetic algorithms with a set of graph-modifying rules as mutators, a graph is searched for that maximises the number of occurrences of well-formedness constraints minus the number of occurrences of ill-formedness constraints.
\emph{Compared to our approach, the graph is changed to track consistency and detect violations and repairs, whereas we use a look-ahead to plan the next steps.}

\paragraph{Graph repair.}
Habel and Sandmann~\cite{DBLP:conf/staf/HabelS18, DBLP:journals/corr/abs-1912-09610, DBLP:journals/corr/abs-2012-01656, Sandmann21a} propose a graph repair approach based on constraints in alternating quantifier normal form (i.e., constraints that do not use conjunction or disjunction).
They showed that their graph repair algorithm produces a consistent graph for constraints with a nesting level less than or equal to $2$, or that end with a condition of the form $\exists(P, \true)$.
The algorithm derives rules from the constraints and reuses existing graph elements wherever possible.
For instance, a violation of $\textsf{w}_1$ (i.e., a pair of methods contained within the same class that do not depend on a common attribute within that class) would be repaired by creating a dependency between both methods and an arbitrary attribute contained within the same class.
However, this approach is not suitable for applications similar to the \ac{CRA} problem, since dependencies between features cannot be created or deleted without adapting the underlying system.
\emph{In contrast to our approach, theirs cannot work with arbitrary rules; instead, the provided rules must instead be equivalent to those derived from constraints.}

The graph repair approach presented in~\cite{DBLP:journals/isse/LauerKT24} also supports constraints in alternating quantifier normal form, but it uses user-specified rules.
Unlike the algorithm of Habel and Sandmann, this approach attempts to create missing structures rather than reusing already existing graph elements.
For example, a violation of $\textsf{w}_1$ would be repaired by creating a new attribute contained in the class, so that both methods depend on it.
To support multiple constraints, the authors introduce the concept of `circular conflict-freeness', where the algorithm searches for an ordering of the considered constraints, $c_1, c_2,\ldots, c_n$, such that repairing constraint $c_i$ does not impair any constraint $c_j$, with $j < i$.
They showed that a graph can be repaired if such an ordering can be found.
The algorithm then repairs the constraints in this order, eliminating the need for backtracking, i.e., without repairing constraints that have already been repaired.
However, the existence of such an ordering depends heavily on the provided repair rules, which restricts the applicability of this approach.
Such an ordering does not exist for the constraints $\textsf{w}_1$ and $\textsf{w}_2$ and the rules \textsf{moveAttribute} and \textsf{moveMethod} in our running example, since both rules can impair $\textsf{w}_1$ while repairing $\textsf{w}_2$ and vice versa.
\emph{Therefore, our approach is applicable to a wider range of scenarios, since it supports arbitrary constraints and repair rules.}

In~\cite{nassar2017rule}, the authors introduce a graph repair approach for multiplicities, which can be modelled as constraints of the form $\exists(P,\true)$ and $\forall(P, \false)$.
\emph{In contrast to our approach, which supports arbitrary conditions, the set of supported constraints is highly restricted.
This makes it unsuitable for complex settings, such as the \ac{CRA} problem.}

In~\cite{DBLP:journals/sttt/SchneiderLO21}, the authors introduce a logic-based, incremental approach to graph repair that can be used to repair arbitrary nested graph constraints.
Given an input model, the approach computes multiple least-changing graph repairs.
These repairs are least-changing in the sense that the most recent transformation that introduced an impairment cannot be reverted.
For example, if $\textsf{w}_2$ is impaired by moving a method that depends on an attribute contained in the same class to another class, this approach will not provide a repair that moves the method back into the original class.
In general, the approach computes the minimal graphs that satisfy the constraint and contain the input graph, preserving the actions of the most recent transformation that has introduced an impairment.
\emph{Compared to our approach, there is no support for using user-specified transformation rules, which can be used to restrict the ways in which a graph can be modelled.}
For example, applying the algorithm presented in~\cite{DBLP:journals/sttt/SchneiderLO21} to the \ac{CRA} problem may result in graphs where violations of $\textsf{w}_2$ have been repaired by removing an attribute from its class (i.e., by deleting the edge from a class to an attribute) without assigning it to another class.
While the result set may contain a graph $H$ that can be constructed by using the provided transformation rules, verifying the existence of a sequence of transformations $G \Rightarrow \ldots \Rightarrow H$ (where $G$ is the graph to be repaired) by applying only the provided rules is difficult.

\paragraph{Ranking in model repair.}
A similar ranking approach for model repair is presented in~\cite{KKE19}, where impairments are identified as negative side effects and repairs as positive side effects of model repair sequences.
All constraints have equal priority.
Given a model with a set of violations, all possible repairs are computed and ranked according to their side effects.
A repair consists of a sequence of repair actions of limited length, each of which repairs only a single model element or a single property of an element.
Rather than being determined in advance,  the ranking is established by first executing all computed repair action sequences and then observing their effects on the number of violated constraints.
{\em In contrast, our approach supports arbitrary repair rules and allows multiple repair actions to be performed in one transformation step.
Furthermore, we determine the positive and negative effects of all options for applying a given repair rule to all (weak) constraints statically, i.e. without altering the (model) graph for this purpose.}

\section{Conclusion}
\label{chapter:conclusion}
In this paper, we introduce a new dynamic analysis approach that ranks rule matches based on their potential for graph repair, which supports arbitrary nested graph constraints.
This potential is computed using application conditions that are automatically derived from a set of nested graph constraints.
While some of these conditions indicate repair steps, others detect impairments.
We formally show that the gain in consistency of a rule application can indeed be characterized by the difference between the number of violations of repair-indicating and impairment-indicating application conditions.
We illustrate and evaluate our approach in the context of the well-known \ac{CRA} problem and show that even for a worst-case scenario, the performance scales reasonably well.
An initial evaluation also shows that using our ranking approach within a greedy algorithm can yield well-acceptable optimization results.
In the future, we want to fully automate the ranking of graph transformations based on an automated construction of repair-indicating and impairment-indicating application conditions.
To this end, we plan to combine our approach with the work presented in~\cite{DBLP:journals/isse/LauerKT24} and~\cite{DBLP:conf/gg/LauerKT25}, with the goal of optimising the construction and use of application conditions even further.
In addition, we will investigate different strategies besides greedy-based algorithms for optimization of graphs in different scenarios.
To further strengthen our ranking approach, which has a look-ahead of 1, it may be advantageous to combine several repair rules into a larger one by composing concurrent rules~\cite{EH83,EEPT06}.

%
%
%
\bibliographystyle{alphaurl}
\bibliography{bibliography}
\newpage
\appendix
\section{Double-pushout approach}\label{app:dpo}

In this section, we briefly introduce the double-pushout approach, since we will use some properties of pushouts and pullbacks in our proofs~\cite{EEPT06}.
We start by introducing pushouts and pullbacks.
Intuitively, for graphs, a \emph{pushout} is the smallest gluing of two graphs $B$ and $C$ at some interface $A$, i.e., we construct a graph $D$ such that $A$, $B$ and $C$ are subgraphs of $D$ and the embeddings of $B$ and $C$ overlap only at the embedding of $A$.
Intuitively, when constructing a pullback of the graphs $B$, $C$, and $D$ so that $B$ and $C$ are subgraphs of $D$, we are searching for the largest overlap $A$ of $B$ and $C$, which is a subgraph of $D$.

\begin{figure}
	\centering
	\begin{tikzpicture}[scale = 1]
		\node(A1) at (0,0) {$A$};
		\node(B1) at (2,0) {$B$};
		\node(C1) at (0,-2) {$C$};
		\node(D1) at (2,-2) {$D$};
		\node(X1) at (3.5,-3.5) {$X$};
		\node(label1) at (1,-1) {$(PO)$};

		\draw [-stealth] (A1) edge node [fill=white] {$f$} (B1);
		\draw [-stealth] (A1) edge node [fill=white] {$g$} (C1);
		\draw [-stealth] (C1) edge node [fill=white] {$f'$} (D1);
		\draw [-stealth] (B1) edge node [fill=white] {$g'$} (D1);
		\draw [-stealth] (C1) edge node [fill=white] {$k$} (X1);
		\draw [-stealth] (D1) edge node [fill=white] {$x$} (X1);
		\draw [-stealth] (B1) edge node [fill=white] {$h$} (X1);

		\node(A2) at (5,0) {$A$};
		\node(B2) at (7,0) {$B$};
		\node(C2) at (5,-2) {$C$};
		\node(D2) at (7,-2) {$D$};
		\node(X2) at (3.5,1.5) {$X$};
		\node(label2) at (6,-1) {$(PB)$};

		\draw [-stealth] (A2) edge node [fill=white] {$f'$} (B2);
		\draw [-stealth] (A2) edge node [fill=white] {$g'$} (C2);
		\draw [-stealth] (C2) edge node [fill=white] {$f$} (D2);
		\draw [-stealth] (B2) edge node [fill=white] {$g$} (D2);
		\draw [-stealth] (X2) edge node [fill=white] {$k$} (C2);
		\draw [-stealth] (X2) edge node [fill=white] {$x$} (A2);
		\draw [-stealth] (X2) edge node [fill=white] {$h$} (B2);
	\end{tikzpicture}
	\caption{Pushout (on the left) and pullback (on the right)}
	\label{fig:pushout}
\end{figure}

\begin{defi}[Pushout and pullback~\cite{EEPT06}]
	Given the morphisms $f \colon A \to B$ and $g \colon A \to C$ as shown in the left square in \cref{fig:pushout}. The square $(PO)$ is a \emph{pushout} if it is commutative, i.e., $g' \circ f = f' \circ g$ so that the following universal property is satisfied: For all graphs $X$ and morphisms $h \colon B \to X$ and $k \colon C \to X$ with $k \circ g = h \circ f$, there is a unique morphism $x \colon D \to X$ such that $x \circ g' = h$ and $x \circ f' = k$.

	Given the morphisms $f \colon C \to D$ and $g \colon B \to D$ as shown in the right square in \cref{fig:pushout}. The square $(PB)$ is a \emph{pullback} if it is commutative, i.e., $g' \circ f = f' \circ g$ so that the following universal property is satisfied: For all graphs $X$ and morphisms $h \colon X \to B$ and $k \colon X \to C$ with $f \circ k = g \circ h$, there is a unique morphism $x \colon X \to A$ such that $f' \circ x = h$ and $g' \circ x = k$.
\end{defi}
Note that in the category graphs, a pushout is also a pullback.
Before continuing with the definition of transformations using pushouts, we state an important property of pushouts and pullbacks used in our proofs.

\begin{figure}
	\centering
	\begin{tikzpicture}[scale = 1]
		\node(A) at (0,0) {$A$};
		\node(B) at (2,0) {$B$};
		\node(C) at (0,-2) {$C$};
		\node(D) at (2,-2) {$D$};
		\node(E) at (4,0) {$E$};
		\node(F) at (4,-2) {$F$};
		\node(label1) at (1,-1) {$(1)$};
		\node(label2) at (3,-1) {$(2)$};

		\draw [-stealth] (A) edge node  {} (B);
		\draw [-stealth] (A) edge node  {} (C);
		\draw [-stealth] (C) edge node  {} (D);
		\draw [-stealth] (B) edge node  {} (D);
		\draw [-stealth] (B) edge node  {} (E);
		\draw [-stealth] (D) edge node  {} (F);
		\draw [-stealth] (E) edge node  {} (F);
	\end{tikzpicture}
	\caption{Pushout and Pullback decomposition}
	\label{fig:po_pb_decomposition}
\end{figure}

\begin{lem}[Properties of pushouts and pullbacks~\cite{EEPT06}]
	For pushouts, the following properties hold:
	\begin{enumerate}
		\item \emph{Uniqueness:} If the square $(1)$ shown in \cref{fig:po_pb_decomposition} is a pushout, then the graph $D$ is unique up to isomorphism.
		\item \emph{Pushout composition:} If $(1)$ and $(2)$ shown in \cref{fig:po_pb_decomposition} are pushouts, then $(1) + (2)$ is also a pushout.
		\item \emph{Pushout decomposition:} If $(1)$ and $(1) + (2)$ shown in \cref{fig:po_pb_decomposition} are pushouts, then $(2)$ is also a pushout.
	\end{enumerate}
	For pullbacks, the following properties hold:
	\begin{enumerate}
		\item \emph{Uniqueness:} If the square $(1)$ shown in \cref{fig:po_pb_decomposition} is a pullback, then the graph $A$ is unique up to isomorphism.
		\item \emph{Pullback composition:} If $(1)$ and $(2)$ shown in \cref{fig:po_pb_decomposition} are pullbacks, then $(1) + (2)$ is also a pullback.
		\item \emph{Pullback decomposition:} If $(2)$ and $(1) + (2)$ shown in \cref{fig:po_pb_decomposition} are pullbacks, then $(1)$ is also a pullback.
	\end{enumerate}
\end{lem}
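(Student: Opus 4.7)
The plan is to prove each of the six statements by invoking only the relevant universal property, following the standard categorical arguments that hold in any category (and hence, in particular, in our category of typed graphs, where pushouts and pullbacks are known to exist and to be constructed componentwise on nodes and edges). I would first treat the three pushout statements and then obtain the three pullback statements by dualising every arrow.

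For uniqueness, suppose two squares with common span $B \leftarrow A \rightarrow C$ are pushouts, with apices $D$ and $D'$. Applying the universal property of the first pushout to the competing cocone formed by $D'$ yields a unique mediator $u \colon D \to D'$; symmetrically one obtains $v \colon D' \to D$. The composites $v \circ u$ and $u \circ v$ each mediate, under the universal property of $D$ and $D'$ respectively, the original cocones at $D$ and $D'$. Since $\id_D$ and $\id_{D'}$ also mediate these, the uniqueness clause of the universal property forces $v \circ u = \id_D$ and $u \circ v = \id_{D'}$, so $D \cong D'$.

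For composition, assume that both $(1)$ and $(2)$ in \cref{fig:po_pb_decomposition} are pushouts. Given any competing cocone on the outer rectangle with arrows into some object $X$, I would restrict it to a cocone on $(2)$, use the pushout property of $(2)$ to extract a unique mediator $E \to X$, and then verify by a short diagram chase that this arrow together with the given $C \to X$ arrow mediates the outer rectangle. Uniqueness of the outer mediator follows by applying the uniqueness clause of $(2)$ and then of $(1)$ to any second candidate mediator. For decomposition, assume $(1)$ and $(1)+(2)$ are pushouts; given a competitor for $(2)$, I first precompose to obtain a cocone on $(1)+(2)$, extract the mediator from the composed pushout, and then show via the uniqueness part of the universal property of $(1)$ that the resulting arrow is the unique morphism making $(2)$ commute.

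The pullback statements are strictly dual: reverse every arrow and replace ``cocone'' by ``cone'' throughout the arguments above. The main routine obstacle is the bookkeeping involved in keeping track of which universal property is being invoked at each step and in verifying that every mediator fits into every relevant commutative sub-square; no step presents an intrinsic difficulty beyond this diagram chase. Since the result is entirely classical and fully developed in \cite{EEPT06}, I would in practice simply cite that reference if a full proof is not required in the paper itself.
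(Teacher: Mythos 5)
Your sketch is the standard categorical argument (uniqueness via mutually inverse mediators, composition/decomposition via the universal property and a diagram chase, pullbacks by duality) and it is correct; the paper itself offers no proof of this lemma, simply citing \cite{EEPT06}, which matches your closing remark that citing the reference suffices here.
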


\begin{figure}
	\centering
	\begin{tikzpicture}[scale = 1]
		\node(A) at (0,0) {$L$};
		\node(B) at (2,0) {$K$};
		\node(C) at (0,-2) {$G$};
		\node(D) at (2,-2) {$D$};
		\node(E) at (4,0) {$R$};
		\node(F) at (4,-2) {$H$};
		\node(label1) at (1,-1) {$(1)$};
		\node(label2) at (3,-1) {$(2)$};

		\draw [left hook-stealth] (B) edge node  [above]{$l$} (A);
		\draw [left hook-stealth] (A) edge node  [left]{$m$} (C);
		\draw [left hook-stealth] (D) edge node  [above]{$g$} (C);
		\draw [left hook-stealth] (B) edge node  {} (D);
		\draw [right hook-stealth] (B) edge node  [above]{$r$} (E);
		\draw [right hook-stealth] (D) edge node  [above]{$h$} (F);
		\draw [left hook-stealth] (E) edge node  [right]{$n$} (F);
	\end{tikzpicture}
	\caption{Graph transformation}
	\label{fig:graph_transformation2}
\end{figure}

\begin{defi}[Graph transformation~\cite{EEPT06}]
	Given a rule $\rho = \completeRle$ and a morphism $m \colon L \inj G$, called the \emph{match}, a \emph{transformation} $t \colon G \Longrightarrow_{\rho,m} H$ is given by the diagram shown in \cref{fig:graph_transformation2} if the squares $(1)$ and $(2)$ are pushouts.
	The rule $\rho$ is \emph{applicable at $m$} if $(1)$ is a pushout.
\end{defi}

\section{Additional formal results and proofs}\label{app:proofs}

This section contains additional formal results and proofs of the lemmas and theorems presented throughout the paper.
We start by stating some properties of the $\shift$ and $\leftshift$ operators as presented in Section~\ref{sec:preliminaries}.

\begin{lem}[Correctness of $\shift$ {\cite[Corollary 3]{HabelP09}}]\label{lemma:correctness_shift}
	Given a nested condition $c$ over a graph $C$ and a morphism $p' \colon C \inj C'$, for each morphism $p \colon C' \inj G$ it holds that
	$$p \models \shift(p',c) \iff p \circ p' \models c.$$
\end{lem}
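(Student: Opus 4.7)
The plan is to proceed by structural induction on the nested condition $c$. The base case $c = \true$ is immediate since $\shift(p', \true) = \true$ and both sides of the biconditional hold vacuously. The Boolean cases $c = c_1 \vee c_2$ and $c = \neg c_1$ reduce straightforwardly to the induction hypothesis, because the definition of $\shift$ on these connectives mirrors the semantic clauses for $\vee$ and $\neg$: any morphism $p$ satisfies $\shift(p', c_1) \vee \shift(p', c_2)$ exactly when it satisfies one of $\shift(p', c_i)$, which by induction corresponds to $p \circ p'$ satisfying $c_i$, and similarly for negation.

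The essential case is $c = \exists(e \colon C \inj Q, d)$, where by definition
\[
\shift(p', c) \;=\; \bigvee_{(e',\, i_Q,\, QC')} \exists\bigl(e' \colon C' \inj QC',\; \shift(i_Q, d)\bigr),
\]
ranging over overlaps of $C'$ and $Q$ with $e' \circ p' = i_Q \circ e$. For the direction $p \circ p' \models c \Rightarrow p \models \shift(p', c)$, I would start from a witness $q \colon Q \inj G$ with $q \circ e = p \circ p'$ and $q \models d$, then take the jointly surjective image factorisation of the pair $(p, q)$ to obtain an overlap $(e', i_Q, QC')$ together with a morphism $r \colon QC' \inj G$ satisfying $p = r \circ e'$ and $q = r \circ i_Q$. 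Commutativity of the induced square $e' \circ p' = i_Q \circ e$ follows from the uniqueness of such factorisations, so this overlap genuinely appears in the disjunction. The induction hypothesis applied to $d$ and $i_Q$ then yields $r \models \shift(i_Q, d) \iff r \circ i_Q = q \models d$, discharging the obligation.

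For the converse, I would pick an overlap $(e', i_Q, QC')$ and morphism $r \colon QC' \inj G$ witnessing $p \models \shift(p', c)$, define $q := r \circ i_Q$, and compute $q \circ e = r \circ i_Q \circ e = r \circ e' \circ p' = p \circ p'$, so $q$ is a candidate existential witness; the induction hypothesis then converts $r \models \shift(i_Q, d)$ into $q \models d$, yielding $p \circ p' \models c$.

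The main obstacle, and the only step that requires genuine care, is the jointly surjective image factorisation in the forward direction: one must verify that such a factorisation exists, is injective on both legs (so that the resulting triple is an overlap in the sense of the paper), and makes the required square commute. This is where I would invoke the standard pushout/pullback machinery from Section~\ref{app:dpo}, in particular uniqueness of pushouts up to isomorphism, to guarantee that the constructed overlap lies in the indexing set of the disjunction. Everything else is bookkeeping over the structural recursion.
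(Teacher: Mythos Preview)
Your proof by structural induction on the nested condition is correct and is the standard argument. Note, however, that the paper does not actually prove this lemma: it is stated in Appendix~\ref{app:proofs} with a citation to \cite[Corollary~3]{HabelP09} and used as a black box throughout. So there is no paper proof to compare against; your write-up essentially reconstructs the original argument from the cited reference, and the key step you identify---the jointly surjective (epi--mono) factorisation of the pair $(p,q)$ in the forward direction of the existential case---is exactly the construction Habel and Pennemann use.
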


\begin{lem}[Correctness of $\leftshift$ {\cite[Theorem 6]{HabelP09}}]\label{lemma:correctness_left}
	Given a plain rule $\rho = \completeRle$ and a condition $c$ over $R$, for each transformation $t \colon G \Longrightarrow_{\rho, m} H$ with comatch $n$ it holds that 
	$$n \models c \iff m \models \leftshift(\rho,c).$$
\end{lem}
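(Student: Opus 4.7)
The plan is to proceed by structural induction on the condition $c$, matching the recursive structure of the $\leftshift$ operator itself. The base case $c = \true$ is immediate, since both sides reduce to $\true$. The Boolean cases $c = c_1 \vee c_2$ and $c = \neg c_1$ are handled by unfolding the definition of $\leftshift$, which distributes through the connectives; the induction hypotheses on the sub-conditions close these cases without any diagram chasing.

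The substantive case is the existential $c = \exists(e \colon R \inj P, d)$. Here I would set up a bijection between witnesses of the two sides of the equivalence. A witness of the left-hand side is a morphism $q \colon P \inj H$ with $q \circ e = n$ and $q \models d$. A witness of the right-hand side is a morphism $m' \colon P' \inj G$ with $m' \circ n' = m$ and $m' \models \leftshift(\der(t')^{-1}, d)$, where $t' \colon P \Longrightarrow_{\rho^{-1}, e} P'$ is the auxiliary transformation used in the definition of $\leftshift(\rho, c)$ and $n' \colon L \inj P'$ is its comatch. To go from $q$ to $m'$, I would glue the two DPO squares of $t'$ onto the two DPO squares of $t$ along the $R$--$K$--$L$ column shared by $\rho$ and $\rho^{-1}$, then apply pushout composition (Appendix~A) to identify the outer diagram as a single pushout whose result contains $P'$ together with a morphism into $G$; pushout decomposition then produces the required $m'$. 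The reverse direction symmetrically applies $\der(t')^{-1}$ at $m'$ to obtain a transformation of $G$ whose comatch, after decomposition, recovers $q$. The condition $d$ on $P$ is preserved and reflected under this bijection via the induction hypothesis applied to the plain rule $\der(t')^{-1}$ (whose RHS is $P$) and the sub-condition $d$.

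The main obstacle is twofold. First, $\leftshift(\rho, c)$ is defined to be $\false$ when $\rho^{-1}$ is not applicable at $e$, so both directions of the equivalence must handle this case: one has to show that the dangling-edge obstruction for $t'$ occurs exactly when no compatible witness $q$ can exist, forcing both sides of the equivalence to be $\false$ in that situation. Second, the bijection between $q$ and $m'$ depends on the fact that the composite of the two DPO diagrams is itself genuinely a DPO, which reduces to combining pushout composition and decomposition (Appendix~A) with the uniqueness of pushout complements along injective morphisms. Threading these category-theoretic facts through the composite diagram is the technical core of the argument; once the bijection and its well-definedness are established, the inductive step closes and the lemma follows.
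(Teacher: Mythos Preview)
The paper does not prove this lemma at all: it is stated in Appendix~B with an explicit citation to \cite[Theorem~6]{HabelP09} and used as a black box throughout. There is therefore no ``paper's own proof'' to compare against.

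Your structural-induction outline is the standard argument and essentially matches the original proof in Habel and Pennemann. One minor caution: in the case where $\rho^{-1}$ is not applicable at $e$ (so $\leftshift(\rho,c)=\false$), you phrased the obligation as showing that the dangling-edge obstruction for $t'$ occurs ``exactly when no compatible witness $q$ can exist.'' The direction you actually need is that the existence of any $q\colon P\inj H$ with $q\circ e=n$ forces $\rho^{-1}$ to be applicable at $e$; this follows because the nodes in $n(R\setminus r(K))$ are freshly created by $t$, so every edge of $H$ incident to them already lies in $n(R)$, and hence the dangling condition for $\rho^{-1}$ at $e$ is inherited from injectivity of $q$. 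Once that is made precise, your sketch goes through.
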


The following lemma ensures that the $\leftshift$ operator is stable w.r.t.~the number of violations, i.e., if we have an application condition $ac$ formalised for the \ac{RHS} of a rule $\rho$, then the number of violations of that application condition w.r.t.~the comatch of a transformation is equal to the number of violations of $\leftshift(ac, \rho)$ w.r.t.~the match of this transformation. This ensures that \cref{thm:main_theorem} holds if we construct the impairment-indicating application conditions by computing the repair-indicating application conditions of the inverse rule and shift them to the \ac{LHS}.

\begin{lem}[Stability of $\leftshift$ w.r.t.~the number of violations]
	Given a graph $P$, a transformation $t \colon G \Longrightarrow_{\rho, m} H$ via a rule $\rho = \completeRle$ at a match $m \colon L \inj G$ with comatch $n \colon R \inj H$, and an application condition $ac = \forall(i_R \colon R \inj PR, d')$ constructed via some overlap $(i_R, i_P,PR)$, then
	$$\nv_n(ac) = \nv_m(\leftshift(\rho, ac)).$$
\end{lem}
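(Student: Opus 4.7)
The plan is to exhibit a bijection between the violation sets $\sv_n(ac)$ and $\sv_m(\leftshift(\rho,ac))$ and to transport the violation predicate across it using the correctness of $\leftshift$ (\cref{lemma:correctness_left}).

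First I would unfold the construction of $\leftshift(\rho,ac)$. Using $\forall(e,d') \equiv \neg\exists(e,\neg d')$ together with the clauses for $\leftshift$ on $\neg$ and $\exists$, a short computation gives
\[
\leftshift(\rho, ac) \;\equiv\; \forall\bigl(j_L \colon L \inj PL,\; \leftshift(\der(t')^{-1}, d')\bigr),
\]
whenever the auxiliary transformation $t' \colon PR \Longrightarrow_{\rho^{-1}, i_R} PL$ exists, with $j_L$ its comatch. The non-existence case is easy to dispatch: if $\rho^{-1}$ is not applicable at $i_R$, some edge $e \in PR \setminus i_R(R)$ is incident to a node in $i_R(R \setminus r(K))$; any injective $q \colon PR \inj H$ with $q \circ i_R = n$ would transport $e$ to a dangling edge for $\rho^{-1}$ at $n$, contradicting the existence of $t$. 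Hence in that case both $\sv_n(ac)$ and $\sv_m(\leftshift(\rho,ac))$ are empty.

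In the main case I would build the bijection by observing that the elements deleted by the rule $\der(t')^{-1}$ at any match $q' \colon PL \inj G$ with $q' \circ j_L = m$ coincide with those deleted by $\rho$ at $m$ (using $q' \circ j_L = m$ together with $PL \setminus g'(D') = j_L(L \setminus l(K))$), so $\der(t')^{-1}$ is automatically applicable at every such $q'$. By pushout composition/decomposition in the category of graphs, the pushout squares of the resulting transformation fit onto the pushout squares of $t$ to yield a uniquely determined comatch $q \colon PR \inj H$ satisfying $q \circ i_R = n$. Conversely, given such a $q$, I would use the pullback of $q$ against $h \colon D \inj H$ (and uniqueness of pushout complements in the category of graphs) to reconstruct a unique $q' \colon PL \inj G$ with $q' \circ j_L = m$ producing $q$ as its comatch. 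These two constructions are mutually inverse, yielding a bijection $\{q \mid q \circ i_R = n\} \cong \{q' \mid q' \circ j_L = m\}$. Finally, \cref{lemma:correctness_left} applied to $\der(t')^{-1}$ and the transformation with match $q'$ and comatch $q$ yields $q' \models \leftshift(\der(t')^{-1},d') \iff q \models d'$, so the bijection restricts to one between $\sv_n(ac)$ and $\sv_m(\leftshift(\rho,ac))$, and taking cardinalities gives the claim.

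The main obstacle is the pushout-gluing step justifying the bijection: one has to combine the two pushout squares of $t$ with those of the transformation via $\der(t')^{-1}$ into a single coherent diagram in which $q'$ and $q$ play the roles of match and comatch, and verify, via uniqueness of pushout complements, that the assignment $q \mapsto q'$ is well-defined and invertible. This is standard diagram chasing with pushouts and pullbacks in the category of graphs, but requires careful bookkeeping of which morphisms agree on which shared components; everything else in the proof is then a direct application of \cref{lemma:correctness_left}.
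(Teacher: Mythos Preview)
Your proposal is correct and follows essentially the same route as the paper: both construct a bijection between $\sv_n(ac)$ and $\sv_m(\leftshift(\rho,ac))$ via the transformation along the derived rule of $t' \colon PR \Longrightarrow_{\rho^{-1}, i_R} PL$ and then invoke \cref{lemma:correctness_left} to transport the predicate $\not\models d'$ across it. Your treatment is more explicit about the diagram chasing (pushout composition/decomposition and uniqueness of pushout complements) than the paper's track-morphism argument, and your separate handling of the case where $\rho^{-1}$ is not applicable at $i_R$ is harmless but in fact vacuous here, since by hypothesis the overlap lies in $\Over(\rho^{-1},P)$, which already guarantees applicability of $\rho^{-1}$ at $i_R$.
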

\begin{proof}
	\begin{enumerate}
		\item First we want to show that $\nv_n(ac) \leq \nv_m(\leftshift(\rho, ac))$. Given a violation $n' \colon PR \inj H \in \sv_n(ac)$, i.e., $n = n' \circ i_R$ and $n' \not \models d'$. By the correctness of the $\leftshift$ operator we have $n' \not \models d' \iff m' \not\models \leftshift(\rho'^{-1}, d')$ where $\rho' = \rle{PR}{h}{D}{g}{PL}$ is the rule shown in \cref{fig:appendix_for_correctness_of_shift_and_co}. Additionally, by the definition of $\leftshift$ we have $\leftshift(\rho, ac) = \forall(i_L \colon L \inj PL, \leftshift(\rho'^{-1}, d'))$. By construction, we have $m = i' \circ i_L$ and therefore $m' \in \sv_m(\leftshift(\rho,ac))$. I.e., each violation $n' \colon PR \inj H \in \sv_n(ac)$ corresponds to a violation $m' \in \sv_m(\leftshift(\rho,ac))$.

		For two morphisms $m'_1, m_2' \in \sv_m(\leftshift(\rho, ac))$ it holds that $m_1' \circ i_L = m_2' \circ i_L$, i.e., $m'_1, m_2'$ are different in elements preserved by $t$, i.e., $m_1' \circ g \neq m_2' \circ g$. Then $\track_t \circ m_1' \circ g \neq \track_t \circ m_2' \circ g$ implies that $n_1' \circ h \neq n_2' \circ h$ for the corresponding morphisms $n'_1, n'_2 \in \sv_n(ac)$ and we have $n'_1 \neq n'_2$. So $\nv_n(ac) \leq \nv_m(\leftshift(\rho, ac)).$

		\item Analogously, we can show that each violation $m' \in \sv_m(\leftshift(\rho,ac))$ corresponds to a violation $n' \colon PR \inj H \in \sv_n(ac)$. \qedhere
	\end{enumerate}
\end{proof}

\subsection{Proofs of Section~\ref{sec_equiv_overlaps}}

\begin{figure}
	\centering
	\begin{tikzpicture}[scale = 1]
		\node(GH1) at (0,0){$GH'$};
		\node(GH2) at (0,3){$GH$};
		\node(X) at (0,1.5){$X$};

		\node(G) at (-3,1.5){$G$};
		\node(H) at (3,1.5){$H$};

		\draw[right hook-stealth](G) edge node [fill=white] {$i'_{G}$} (GH1);
		\draw[right hook-stealth](G) edge node [fill=white] {$i_{G}$} (GH2);

		\draw[left hook-stealth](H) edge node [fill = white] {$i'_{H}$} (GH1);
		\draw[left hook-stealth](H) edge node [fill = white] {$i_{H}$} (GH2);

		\draw[left hook-stealth](GH1) edge node [left]{$p'$}(X);
		\draw[right hook-stealth](GH2) edge node [left]{$p$}(X);
		\draw[right hook-stealth,bend left](GH2) edge node [right]{$\sim$}(GH1);
	\end{tikzpicture}
	\caption{Diagram for the characterisation of equivalence classes of overlaps}
	\label{fig:construction_isomorphism_2}
\end{figure}

\begin{proof}[Proof of Lemma~\ref{lem:equiv_iso_mor}]
	\begin{enumerate}
		\item Let $(i_G, i_H,GH) \cong (i'_G, i'_H,GH')$ and $p \colon GH \inj X$ be a morphism into some graph $X$, then there is an isomorphism $\sim \colon GH \inj GH'$ with $i_G' = \sim \circ i_G$ and $i_H' = \sim \circ i_H$ (see \cref{fig:construction_isomorphism_2}). Therefore, for the morphism $p':= p \circ \sim^{-1} \colon GH' \inj X$ we have
		\begin{align*}
			&p' \circ i'_G = p' \circ \sim \circ i_G = p \circ \sim^{-1} \circ \sim \circ i_G = p \circ \circ i_G \text{ and }\\
			&p' \circ i'_H = p' \circ \sim \circ i_H = p \circ \sim^{-1} \circ \sim \circ i_H = p \circ \circ i_H.
		\end{align*}
		\item Given two overlaps $(i_G, i_H,GH)$ and $(i'_G, i'_H,GH')$ and two morphisms $p\colon GH \inj X$ and $p'\colon GH' \inj X$ into some graph $X$ with
		\begin{align*}
			&p' \circ i'_G = p \circ i_G \text{ and }\\
			&p' \circ i'_H = p \circ i_H.
		\end{align*}
		We need to show that $(i_G, i_H, GH) \cong (i'_G,i'_H, GH')$, i.e., that there is an isomorphism $\sim \colon GH \inj GH'$ with $i'_G = \sim \circ i_G$ and $i'_H = \sim \circ i_H$. Since $p' \circ i'_G = p \circ i_G$ and $p' \circ i'_H = p \circ i_H$, we have $p(GH) = p'(GH')$. Therefore, the morphism $\sim := p'^{-1} \circ p \colon GH \inj GH'$ is total. Analogous, we obtain that  $ \sim^{-1} = p^{-1} \circ p' \colon GH' \inj GH$ is total and 
		\begin{align*}
			&\sim \circ \sim^{-1} = p'^{-1} \circ p \circ p^{-1} \circ p' = p'^{-1} \circ p' = \id_{GH'} \text{ and} \\
			&\sim^{-1} \circ \sim = p^{-1} \circ p' \circ p'^{-1} \circ p = p^{-1} \circ p = \id_{GH}.
		\end{align*}
		Hence, $\sim$ is an isomorphism and we have
		\begin{align*}
			p' \circ i'_G &= p \circ i_G &&\iff \\
			i'_G &= p'^{-1} \circ p \circ i_G = \sim \circ i_G &&\text{and }\\
			p' \circ i'_H &= p \circ i_H &&\iff \\
			i'_H &= p'^{-1} \circ p \circ i_H = \sim \circ i_H.
		\end{align*}
		Therefore, $(i_G, i_H,GH) \cong (i'_G, i'_H,GH')$. \qedhere
	\end{enumerate}
\end{proof}

\subsection{Proofs of Section~\ref{sec:construction_appl_cond}}

\begin{proof}[Proof of Lemma~\ref{lem:relation_impair-repair}]
	Given a transformation $t \colon G \Longrightarrow_{\rho, m} H$, its inverse transformation $t^{-1}\colon H \Longrightarrow_{\rho^{-1}, n} G$, and a constraint $c = \forall(e \colon \emptyset \inj P,d).$
	\begin{enumerate}
		\item Let $p \colon P \inj H$ be an impaired morphism w.r.t.~$t$ (i.e., $p \not \models d)$, then $p$ is either an impair of the premise or an impair of the conclusion:
		\begin{enumerate}
			\item\label{proof:equiv:1} If $p$ is an impair of the premise, there is no morphism $p' \colon P \inj G$ with $p = \track_t \circ p'$, i.e., $p$ was introduced by $t$ and is therefore destroyed by $t^{-1}$. So $\track_{t^{-1}} \circ p$ is not total and $p$ is a repaired morphism w.r.t.~$t^{-1 }$.
			\item\label{proof:equiv:2} If $p$ is an impair of the conclusion, there is a morphism $p' \colon P \inj G $ with $p = \track_t \circ p'$ and $p' \models d$, i.e., $\track_{t^{-1}} \circ p$ is total. So $p$ is a repaired morphism w.r.t.~$t^{-1 }$ since $\track_{t^{-1}} \circ p = p' \models d$.
		\end{enumerate}
		\item Let $p \colon P \inj G$ be a repaired morphism w.r.t.~$t$ (i.e., $p \not \models d)$, then $p$ is either a repair of the premise or a repair of the conclusion.
		\begin{enumerate}
			\item If $p$ is a repair of the premise, the proof is analogous to~\cref{proof:equiv:1}.
			\item If $p$ is a repair of the conclusion, the proof is analogous to~\cref{proof:equiv:2}. \qedhere
		\end{enumerate}
	\end{enumerate}
\end{proof}

\begin{proof}[Proof of Lemma~\ref{lemma:naive_filter}]
	We assume that $G \models c$ and that there is a morphism $p \colon P' \inj G$ with $p \models c'$. I.e., there is a morphism $q \colon Q' \inj G$ with $p = q \circ e'$ and $q \models d$. Since $Q' \not \models c$, there is a morphism $p' \colon P \inj Q'$ and hence, there is a morphism $q \circ p' \colon P \inj G$. It follows that $G \not \models c$, this is a contradiction.
\end{proof}

\begin{proof}[Proof of \cref{cor:naive_filter}]
	This Lemma follows immediately with the fact that $(a \vee b) \implies (a\vee c) \equiv b \implies(a \vee c)$.
\end{proof}

\subsection{Proof of Theorem~\ref{thm:main_theorem}}
In this section, we provide the proof of \cref{thm:main_theorem} and other theoretical results that are needed for the proof.
We begin by showing that the \emph{induced pre-} and \emph{post-conditions} are correct in the sense that they predict whether an occurrence of the premise satisfies the conclusion before the transformation (correctness of the pre-condition) and after the transformation (correctness of the post-condition).
For this, we first need the following result.

\begin{figure}
	\centering
	\begin{tikzpicture}[scale = 1]
		\node (G) at (-1,0){$PL$};
		\node (D1) at (-3,0){$D$};
		\node (H1) at (-5,0){$PR$};

		\draw[right hook-stealth](D1) edge node [above] {$g$}(G);
		\draw[left hook-stealth](D1) edge node [above] {$h$}(H1);

		\node (K1) at (-3,2) {$K$};
		\node (L1) at (-1,2) {$L$};
		\node (R1) at (-5,2) {$R$};

		\draw[right hook-stealth](K1) edge node [above] {$l$}(L1);
		\draw[left hook-stealth](K1) edge node [above] {$r$}(R1);

		\node (G2) at (-1,-2){$G$};
		\node (D2) at (-3,-2){$D'$};
		\node (H2) at (-5,-2){$H$};

		\draw[left hook-stealth] (D2) edge node [above]{$h'$}(H2);
		\draw[right hook-stealth] (D2) edge node [above]{$g'$}(G2);

		\node (L2) at (1,2) {$P$};

		\draw[right hook-stealth] (G) edge node [fill=white]{$m'$}(G2);
		\draw[right hook-stealth] (D1) edge node [fill=white]{$k'$}(D2);
		\draw[right hook-stealth] (H1) edge node [fill=white]{$n'$}(H2);

		\draw[dashed, -stealth, bend right=15] (L2) edge node [label={[xshift=-3em, yshift=-1.9em]$x$}]{}(D1);
		\draw[dashed, -stealth, bend right=15, gray] (L2) edge node [label={[xshift=-3em, yshift=-3.9em]$x'$}]{}(D2);

		\draw[left hook-stealth] (R1) edge node [fill=white]{$i_R$}(H1);
		\draw[left hook-stealth] (K1) edge node [fill=white]{$k$}(D1);

		\draw[right hook-stealth] (L1) edge node [fill=white]{$i_L$}(G);
		\draw[left hook-stealth] (L2) edge node [fill=white]{$i_P$}(G);

		\draw[left hook-stealth, bend right] (R1) edge node [fill=white]{$n$}(H2);
		\draw[right hook-stealth, bend left] (L1) edge node [fill=white]{$m$}(G2);

		\draw[left hook-stealth, bend left](L2) edge node [fill=white]{$p$}(G2);
		\draw[dashed, -stealth, bend left](G2) edge node [below]{$\track_t$}(H2);
	\end{tikzpicture}
	\caption{Construction of impairment-indicating and repair-indicating application conditions and shift of overlaps}
	\label{fig:appendix_for_correctness_of_shift_and_co}
\end{figure}

\begin{figure}
	\centering
	\begin{tikzpicture}[scale = 1]
		\node (G) at (-1,0){$PL$};
		\node (D1) at (-3,0){$D$};

		\draw[right hook-stealth](D1) edge node [above] {$g$}(G);

		\node (K1) at (-3,2) {$K$};
		\node (L1) at (-1,2) {$L$};

		\draw[right hook-stealth](K1) edge node [above] {$l$}(L1);

		\node (G2) at (-1,-2){$G$};
		\node (D2) at (-3,-2){$D'$};

		\node (Label2) at (-2,-1){$(2)$};

		\draw[right hook-stealth] (D2) edge node [above]{$g'$}(G2);

		\node (L2) at (-5,2) {$P$};

		\draw[right hook-stealth] (G) edge node [fill=white]{$m'$}(G2);
		\draw[right hook-stealth] (D1) edge node [fill=white]{$k'$}(D2);

		\draw[dashed, -stealth] (L2) edge node [fill=white]{$x$}(D1);
		\draw[dashed, -stealth, bend right=15] (L2) edge node [fill=white]{$x'$}(D2);

		\draw[left hook-stealth] (K1) edge node [fill=white]{$k$}(D1);

		\draw[right hook-stealth] (L1) edge node [fill=white]{$i_L$}(G);
		\draw[right hook-stealth, bend left=10] (L2) edge node [near end,fill=white]{$i_P$}(G);

		\draw[right hook-stealth, bend left] (L1) edge node [fill=white]{$m$}(G2);
		\node (Label1) at (-2,1.3){$(1)$};
	\end{tikzpicture}
	\caption{Construction of impairment-indicating and repair-indicating application conditions and shift of overlaps}
	\label{fig:appendix_for_correctness_of_shift_and_co_second_morphism}
\end{figure}

\begin{lem}\label{lem:totality_track}
	Given a transformation $t \colon G \Longrightarrow_{m, \rho} H$ via some rule $\rho = \completeRle$ at match $m$, and a morphism $p \colon P \inj G$, as shown in \cref{fig:appendix_for_correctness_of_shift_and_co}, then $\track_t \circ p$ is total if and only if there is a morphism $x' \colon P \inj D'$ with $p = g' \circ x'$.
\end{lem}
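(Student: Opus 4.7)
The plan is to unpack the set-theoretic definition of the track morphism and exploit the injectivity of $g'$. By definition, $\track_t$ is defined on an element $e \in G$ precisely when $e \in g'(D')$, so $\track_t \circ p$ being total is equivalent to the set-theoretic containment $p(P) \subseteq g'(D')$ (considering nodes and edges separately). This is the fundamental equivalence that bridges both directions of the lemma.

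For the ``if'' direction, I would assume $p = g' \circ x'$ and observe that for every node/edge $e$ of $P$ we have $p(e) = g'(x'(e)) \in g'(D')$, so $\track_t$ is defined on $p(e)$, and hence $\track_t \circ p$ is total.

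For the ``only if'' direction, assuming $\track_t \circ p$ is total, I have $p(P) \subseteq g'(D')$. Since $g'$ is injective, I can define $x'_V(v) := (g'_V)^{-1}(p_V(v))$ for nodes and $x'_E(e) := (g'_E)^{-1}(p_E(e))$ for edges, yielding a candidate pair of maps with $p = g' \circ x'$ at the level of sets. The main technical step — and the only place any real verification is needed — is to check that $x'$ is actually a graph morphism, i.e.\ that it preserves source and target. This follows from injectivity of $g'$ on nodes together with the fact that $g'$ and $p$ are graph morphisms: for each edge $e$ of $P$,
\[
 g'_V(s_{D'}(x'_E(e))) = s_G(g'_E(x'_E(e))) = s_G(p_E(e)) = p_V(s_P(e)) = g'_V(x'_V(s_P(e))),
\]
and cancelling the injective $g'_V$ gives $s_{D'} \circ x'_E = x'_V \circ s_P$; the argument for targets is identical.

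The main obstacle, such as it is, is nothing more than the routine bookkeeping of nodes versus edges and the source/target verification; no subtle categorical argument is required. Alternatively, one could invoke the fact that square $(2)$ in \cref{fig:appendix_for_correctness_of_shift_and_co_second_morphism} is a pushout and hence a pullback in the category of graphs, and then obtain $x'$ directly from the pullback's universal property applied to $p \colon P \to G$ and the (partially defined) composite into $L$; but the direct set-theoretic construction above is more elementary and avoids any appeal to the universal property.
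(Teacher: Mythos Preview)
Your proof is correct and follows essentially the same approach as the paper: both unpack the definition of $\track_t$ to reduce totality to the containment $p(P)\subseteq g'(D')$, and both construct $x'$ as $g'^{-1}\circ p$ using injectivity of $g'$. Your version is in fact more careful than the paper's, which simply asserts that $x':=g'^{-1}\circ p$ is a morphism without verifying the source/target compatibility you spell out; so the extra bookkeeping you flag is a genuine (if routine) addition rather than a deviation. Your closing remark about invoking the pullback property of square~(2) is a bit off---that universal property is used elsewhere in the appendix for a different purpose (producing $x\colon P\to D$ from $x'\colon P\to D'$), not for the present lemma---but since you do not rely on it, this does not affect the correctness of your argument.
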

\begin{proof}
	\begin{enumerate}
		\item If $\track_t \circ p$ is total, $g'^{-1} \circ p$ is total, i.e., for each $e \in P$ there is an element $e' \in D'$ with $g(e') = p(e)$. Therefore $x' := g'^{-1} \circ p$ satisfies $g \circ x' = p$.
		\item If there is a morphism $x' \colon P \inj D'$ with $p = g' \circ p$, the morphism $g'^{-1} \circ p$ is total, since for each $e \in P$ there is an element $e' \in D'$ with $p(e) = g'(e')$. This implies that $\track_t \circ p$ is also total. \qedhere
	\end{enumerate}
\end{proof}

\begin{lem}[Correctness of $\pre$ and $\post$]\label{lemma:correctness_pre_post}
	Given a rule $\rho = \completeRle$, a transformation $t \colon G \Longrightarrow_{\rho,m} H$ via $\rho$ at match $m$, a condition $d$ over some graph $P$, and an overlap $(i_L, i_P,PL) \in \Over(\rho,L)$, for each morphism $m' \colon PL \inj G$ with $m = m' \circ i_L$, we have
	\begin{enumerate}
		\item $m' \models \pre_{\rho}((i_L,i_P,PL),d)$ if and only if $m' \circ i_P \models d$, and
		\item $m' \models \post_{\rho}((i_L,i_P,PL),d)$ if and only if ($\track_t \circ m' \circ i_P$ is total and $\track_t \circ m' \circ i_P \models d$) or $\track_t \circ m' \circ i_P$ is not total.
	\end{enumerate}
\end{lem}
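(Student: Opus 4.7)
The plan is to treat the two parts separately and, for Part 2, to split by the case distinction in the definition of $\post_{\rho}$. Part 1 is immediate from the correctness of $\shift$: since $\pre_{\rho}(o,d) = \shift(i_P, d)$, that correctness lemma yields $m' \models \pre_{\rho}(o,d) \iff m' \circ i_P \models d$ with no further work.

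For Part 2, if $[o] \in \OverPre(\rho,P)$, then $\post_{\rho}(o,d) = \true$ and the left-hand side is trivially satisfied, so it suffices to show that $\track_t \circ m' \circ i_P$ is never total; then the second disjunct of the right-hand side always holds. The argument is that parallel dependence of the induced transformations forces some element of $i_P(P)$ to lie in the deleted part $i_L(L \setminus l(K))$ of $PL$; composing with $m'$ (using $m = m' \circ i_L$) shows that the corresponding element in $G$ lies in $m(L \setminus l(K))$ and is therefore deleted by $t$, so $\track_t$ is undefined on it.

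If $[o] \in \OverCon(\rho,P)$, parallel independence of the induced transformations supplies a morphism $x \colon P \inj D$ with $g \circ x = i_P$, which by Lemma~\ref{lem:totality_track} guarantees that $\track_t \circ m' \circ i_P$ is total. The idea is then to chain the correctness of $\leftshift$ and $\shift$: applying the derived rule $\rho' = \der(t_1)$ at match $m'$ recovers the original transformation $t$ with some comatch $n' \colon PR \inj H$, so correctness of $\leftshift$ gives $m' \models \leftshift(\rho', \shift(h \circ x, d)) \iff n' \models \shift(h \circ x, d)$, and correctness of $\shift$ reduces this further to $n' \circ h \circ x \models d$. The key equality $n' \circ h \circ x = \track_t \circ m' \circ g \circ x = \track_t \circ m' \circ i_P$ then closes the equivalence, and the totality already shown discharges the "or not total" branch of the right-hand side.

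The main obstacle I foresee is establishing the identification $n' \circ h = \track_t \circ m' \circ g$ between the comatch produced by $\rho'$ at $m'$ restricted to the interface $D$ of the induced transformation and the track morphism on $m' \circ g$. I plan to justify this via the double-pushout characterisation recalled in Section~\ref{app:dpo}: the pushouts constituting $t$ at $m$ can be obtained by pasting the pushouts of the induced transformation $t_1$ at $i_L$ with further pushouts along $m'$, and pushout decomposition together with uniqueness then produce a morphism $k' \colon D \inj D'$ between the two interfaces such that $g' \circ k' = m' \circ g$ and $n' \circ h = h' \circ k'$, which combine to give the required commutativity. Once this functoriality of the derived-rule construction is in hand, the rest of the proof is a routine composition of the correctness lemmas.
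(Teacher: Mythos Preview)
Your proposal is correct and follows essentially the same route as the paper: Part~1 via correctness of $\shift$, Part~2 by the dependent/independent case split, with the independent case handled by chaining the correctness of $\leftshift$ and $\shift$ and then identifying $n'\circ h\circ x$ with $\track_t\circ m'\circ i_P$ via the morphism $k'\colon D\inj D'$ obtained from pushout decomposition---exactly the ``functoriality'' argument you anticipate as the main obstacle.

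Two small remarks. First, in the independent case you invoke Lemma~\ref{lem:totality_track} to get totality directly from $x\colon P\inj D$, but that lemma speaks about the interface $D'$ of $t$, not the interface $D$ of the induced transformation; you need $x' = k'\circ x$ first, which only becomes available once you have constructed $k'$. The paper does it in this order (build $k'$, then get $x'$, then invoke totality), so just reorder your argument accordingly. Second, in the dependent case you argue set-theoretically that some element of $i_P(P)$ lies in $i_L(L\setminus l(K))$ and is therefore sent by $m'$ into $m(L\setminus l(K))$; the paper instead argues categorically, showing that the square between $D,PL,D',G$ is a pullback (via pushout decomposition of $(1)+(2)$) and then deriving a contradiction from the universal property if $\track_t\circ m'\circ i_P$ were total. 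Your elementary argument is perfectly valid and arguably more direct; the paper's version has the advantage of staying within the adhesive-category framework and not relying on the concrete set-theoretic description of the pushout complement.
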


\begin{proof}
	\begin{enumerate}
		\item As $\pre_{\rho}((i_L,i_P,PL),d) = \shift(i_P,d)$, we have $m' \models \pre_{\rho}((i_L,i_P,PL),d)$ if and only if $m' \circ i_P \models d$ (\cref{lemma:correctness_shift}).
		\item First, we assume that $m' \models \post_{\rho}((i_L,i_P,PL),d)$, then the induced transformations of $(i_L, i_P,PL)$ are either parallel dependent or parallel independent:
		\begin{enumerate}
			\item If the induced transformations of $(i_L, i_P,PL)$ are parallel dependent, the overlap $(i_L, i_P,PL)$ is contained in $\OverPre(\rho,P)$, i.e., there is no morphism $x \colon P \inj D$ with $g \circ x = i_P$ (\cref{fig:appendix_for_correctness_of_shift_and_co}). To show that $\track_t \circ m' \circ i_P$ is not total it suffices to show that there is no morphism $x' \colon P \inj D'$ with $m' \circ i_P = g' \circ x'$ (\cref{fig:appendix_for_correctness_of_shift_and_co_second_morphism}) as this implies that $g'^{-1} \circ m' \circ i_P$ and therefore especially $\track_t \circ m' \circ i_P$ is not total (\cref{lem:totality_track}). Since the squares (1) + (2) and (2) in \cref{fig:appendix_for_correctness_of_shift_and_co_second_morphism} are pushouts ((1) + (2) is part of the transformation $t$ and (1) is part of the induced transformation of $(i_L,i_P,PL)$), the square (2) is also a pushout and a pullback as we are in the category of graphs and $g$ and $k'$ are injective. If we assume that there is a morphism $x' \colon P \inj D'$ with $m' \circ i_P = g' \circ x'$, there is also a morphism $x \colon P \inj D$ such that $x' = k' \circ x$ and especially $i_P = g \circ x$ (universal property of pullbacks).
			This is a contradiction as the induced transformations of $(i_L,i_P,PL)$ are parallel dependent.
			\item If the induced transformations of $(i_L, i_P,PL)$ are parallel independent, i.e., there is a morphism $x \colon P \inj D$ with $i_P = g \circ x$ and $\post_{\rho}((i_L,i_P,PL),d) = \leftshift(\rho', \shift(h \circ x, d))$ where $\rho' = \rle{PR}{h}{D}{g}{PL}$ is the induced rule and $g, h$ and $x$ are the morphisms as shown in \cref{fig:appendix_for_correctness_of_shift_and_co}, the morphism $m'$ satisfies $\leftshift(\rho', \shift(h \circ x, d))$ if and only if $n' \colon PR \inj H$ satisfies $\shift(h \circ x, d)$ (\cref{lemma:correctness_left}). \Cref{lemma:correctness_shift} implies that $n' \models \shift(h \circ x, d)$ if and only if $n' \circ h \circ x \colon P \inj H$ satisfies $d$. Therefore, it remains to show that $\track_t \circ m' \circ i_P$ is total and that $\track_t \circ m' \circ i_P = n' \circ h \circ x$.
			As there is the morphism $x \colon P \inj D$ with $i_P = g \circ x$ and the square on the bottom right of \cref{fig:appendix_for_correctness_of_shift_and_co} is a pullback, there is also a morphism $x' \colon P \inj D'$ with $m' \circ i_P = g' \circ x'$ (universal property of pullbacks) and \cref{lem:totality_track} implies that $\track_t \circ m' \circ i_P$ is total. We also have
			\begin{align*}
				\track_t \circ m' \circ i_P &= \track_t \circ g' \circ k' \circ x \\
				&= h' \circ g'^{-1} \circ g' \circ k' \circ x \\
				&= h' \circ k' \circ x \\
				&= n' \circ h \circ x.
			\end{align*}
		\end{enumerate}
		\item Now, we assume that either $\track_t \circ m' \circ i_P$ is total and $\track_t \circ m' \circ i_P \models d$ or $\track_t \circ m' \circ i_P$ is not total implies that $m' \models \post_{\rho} ((i_L,i_P,PL),d)$.
		\begin{enumerate}
			\item If $\track_t \circ m' \circ i_P$ is not total, there is no morphism $x' \colon P \inj D'$ with $m' \circ i_P = x' \circ g'$ (\cref{lem:totality_track}) and there is no morphism $x \colon P \inj D$ with $i_P = g \circ x$
			(universal property of pullbacks), i.e., the induced transformations of $(i_L, i_P,PL)$ are parallel dependent and $\post_{\rho} ((i_L,i_P,PL),d) = \true$. So $m' \models \post_{\rho} ((i_L,i_P,PL),d)$.
			\item If $\track_t \circ m' \circ i_P$ is total and $\track_t \circ m' \circ i_P \models d$, there is a morphism $x' \colon P \inj D'$ with $m' \circ i_P = g' \circ x'$ and a morphism $x \colon P \inj D$ with $i_P = g \circ x$ (\cref{lem:totality_track} and the universal property of pullbacks, i.e., the induced transformations of $(i_L, i_P,PL)$ are parallel independent and $\post_{\rho}((i_L,i_P,PL),d) = \leftshift(\rho', \shift(h \circ x, d))$. The morphism $m'$ satisfies $\leftshift(\rho', \shift(h \circ x, d))$ if and only if $n' \models \shift(h \circ x, d)$ (\cref{lemma:correctness_left}). Additionally, $n' \models \shift(h \circ x, d)$ if and only if $n' \circ h \circ x \models d$ (\cref{lemma:correctness_shift}). In the previous part of the proof, we have shown that $n' \circ h \circ x = \track_t \circ m' \circ i_P$. As $\track_t \circ m' \circ i_P \models d$ we have $n' \circ h \circ c \models d$, $n' \models \shift(h \circ x, d)$ and $m' \models \leftshift(\rho', \shift(h \circ x, d))$ (\cref{lemma:correctness_shift} \cref{lemma:correctness_left}). \qedhere
		\end{enumerate}
	\end{enumerate}
\end{proof}

The following lemma ensures that every violation of a repair-indicating application condition contains a repaired morphism w.r.t.~a transformation, and that every repaired morphism w.r.t.~a transformation contained in a violation of exactly one repair-indicating application condition.

\begin{lem}
	Given a constraint $c = \forall(e \colon \emptyset \inj P,d)$, a rule $\rho = \completeRle$, a transformation $t \colon G \Longrightarrow_{\rho,m} H$ via $\rho$ at match $m$, then
	\begin{enumerate}
		\item for each application condition $ac \in \repair(\rho,c)$, constructed via an overlap $[(i_L,i_P,PL)] \in \OverEq(\rho,P)$ and for each morphism $m' \colon PL \inj G$ contained in $\sv_m(ac)$, $m' \circ i_P$ is a repaired morphism w.r.t.~$t$, and
		\item for each repaired morphism $p \colon P \inj G$ w.r.t.~$t$, there is exactly one application condition $ac \in \repair(\rho,c)$ constructed via an overlap $[(i_L,i_P,PL)] \in \OverEq(\rho,P)$ so that there is a violation $m' \in \sv_m(ac)$ with $p = m' \circ i_P$.
	\end{enumerate}
\end{lem}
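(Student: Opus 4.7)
The plan is to prove both parts by unpacking the definition of $\repair(\rho, c)$ together with the correctness statements for $\pre_{\rho}$ and $\post_{\rho}$ from Lemma~\ref{lemma:correctness_pre_post}. For part~(1), suppose $ac \in \repair(\rho, c)$ is built from the overlap class $[o]$ with representative $o = (i_L, i_P, PL) \in \OverEq(\rho, P)$, and let $m' \in \sv_m(ac)$. I would simply read off that $m = m' \circ i_L$, that $m' \models \post_{\rho}(o,d)$, and that $m' \not\models \pre_{\rho}(o,d)$. Lemma~\ref{lemma:correctness_pre_post} then yields $m' \circ i_P \not\models d$ together with either $\track_t \circ m' \circ i_P \models d$ when this composition is total, or non-totality of that composition. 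The first case matches the \emph{repair of the conclusion} clause and the second the \emph{repair of the premise} clause of Definition~\ref{def:repair_morphism}, so $m' \circ i_P$ is a repaired morphism w.r.t.~$t$.

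For the existence half of part~(2), given a repaired morphism $p \colon P \inj G$, I would construct the overlap $o = (i_L, i_P, PL)$ as the pushout of the pullback of $m$ and $p$. This automatically makes $(i_L, i_P)$ jointly surjective, and by the universal property of the pushout yields a unique $m' \colon PL \inj G$ with $m = m' \circ i_L$ and $p = m' \circ i_P$. Injectivity of $m'$ ensures that applicability of $\rho$ at $m$ transfers to applicability at $i_L$: any would-be dangling edge at $i_L$ would push via $m'$ to a dangling edge at $m$. Hence $(i_L, i_P, PL) \in \Over(\rho, P)$. The key step is then to pin down which subclass $[o]$ belongs to: using Lemma~\ref{lem:totality_track} together with the pullback square relating the interface $D$ of the induced rule to the interface $D'$ of $t$ (square~$(2)$ in \cref{fig:appendix_for_correctness_of_shift_and_co_second_morphism}), parallel independence of the induced transformations is equivalent to totality of $\track_t \circ p$. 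Hence $[o] \in \OverPre(\rho, P)$ when $p$ is a repair of the premise and $[o] \in \OverCon(\rho, P)$ when $p$ is a repair of the conclusion. In the first case $\post_{\rho}(o,d) = \true$, while in the second Lemma~\ref{lemma:correctness_pre_post}(2) gives $m' \models \post_{\rho}(o,d)$ from $\track_t \circ p \models d$. Since $p \not\models d$ means $m' \not\models \pre_{\rho}(o,d)$, the morphism $m'$ lies in $\sv_m(ac)$ for the application condition $ac$ produced from $[o]$.

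For uniqueness, I would argue directly via Lemma~\ref{lem:equiv_iso_mor}: if $p = m'_k \circ i_{P,k}$ and $m = m'_k \circ i_{L,k}$ for two overlaps $o_k = (i_{L,k}, i_{P,k}, PL_k)$, $k \in \{1,2\}$, with $m'_k \colon PL_k \inj G$, then taking $X = G$ in Lemma~\ref{lem:equiv_iso_mor} forces $o_1 \cong o_2$, so $[o_1] = [o_2]$ and the two overlaps produce the same application condition. Within the chosen representative, the universal property of the pushout $PL$ then pins down $m'$ uniquely.

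I expect the main obstacle to be the bridge between parallel (in)dependence of the induced transformations and totality of $\track_t \circ p$, since this is what justifies landing the constructed overlap in the correct subclass $\OverPre$ or $\OverCon$. Both sides hinge on the same pullback square linking $D$ to $D'$, so once this square is laid out explicitly, the case distinction falls out; the remaining verifications are routine unfoldings of definitions together with Lemmas~\ref{lemma:correctness_pre_post} and~\ref{lem:equiv_iso_mor}.
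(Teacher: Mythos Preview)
Your proposal is correct and follows essentially the same route as the paper's proof: part~(1) via the correctness of $\pre_{\rho}$ and $\post_{\rho}$, part~(2) existence via the factoring overlap and the case split on totality of $\track_t \circ p$, and uniqueness via Lemma~\ref{lem:equiv_iso_mor}. The only noteworthy difference is that you spell out the construction of the overlap in part~(2) explicitly as the pushout of the pullback of $m$ and $p$ and verify applicability of $\rho$ at $i_L$, whereas the paper simply asserts the existence of such an overlap; your version is slightly more self-contained here, but the logical structure is the same.
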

\begin{proof}
	\begin{enumerate}
		\item Given a repair indicating application condition $ac\in \repair(\rho,c)$ constructed via an overlap $[(i_L,i_P,PL)] \in \OverEq(\rho,P)$ and a violation $p \colon PL \inj G \in \sv_m(ac)$. We need to show that $m' \circ i_P$ is a repaired morphism w.r.t.~$t$. As $m' \not \models \post_{\rho}((i_L,i_P,PL),d) \implies \pre_{\rho}((i_L,i_P,PL),d)$, we have $m' \models \post_{\rho}((i_L,i_P,PL),d)$ and $m' \not \models \pre_{\rho}((i_L,i_P,PL),d)$. Let us first consider the second part of the implication: By correctness of the $\pre$-operator (\cref{lemma:correctness_pre_post}), we have $m' \circ i_P \not \models d$.
		When considering $\post_{\rho}((i_L,i_P,PL),d)$ we need to consider whether the induced transformations of $(i_L,i_P,PL)$ are parallel dependent or not:
		\begin{enumerate}
			\item If the induced transformations of $(i_L,i_P,PL)$ are parallel dependent, there is no morphism $x \colon P \inj D$ with $i_P = x \circ g$ (\cref{fig:appendix_for_correctness_of_shift_and_co}), i.e., $\post_{\rho}((i_L,i_P,PL),d) = \true$, so $m' \models \post_{\rho}((i_L,i_P,PL),d)$. It remains to show that $\track_t \circ p$ is not total:
			The universal property of pullbacks implies that there is no morphism $x' \colon P \inj D'$ with $ m' \circ i_P = g' \circ x'$ (compare \cref{fig:appendix_for_correctness_of_shift_and_co_second_morphism}). So $\track_t \circ p$ is not total (\cref{lem:totality_track}) and $p \circ i_P$ is a repair of the premise w.r.t.~$t$.
			\item If the induced transformations of $(i_L,i_P,PL)$ are parallel independent, we have $\post_{\rho}((i_L,i_P,PL),d) = \leftshift(\shift(h \circ x,d))$, where $h$ and $x$ are the morphisms shown in \cref{fig:appendix_for_correctness_of_shift_and_co}. The existence of $x$ implies that $\track_t \circ p \circ i_P$ is total (universal property of pullbacks and \cref{lem:totality_track}). As $m' \models \post_{\rho}((i_L,i_P,PL),d)$, \cref{lemma:correctness_pre_post} implies that $\track_t \circ m' \circ i_P\models d$.
			In total, $m'$ is a repair of the conclusion w.r.t.~$t$.
		\end{enumerate}
		\item Given a repaired morphism $p \colon P \inj G$ w.r.t.~$t$, then $p$ is either a repair of the premise or a repair of the conclusion w.r.t.~$t$ and there is an overlap $(i_L, i_P,PL) \in \Over(\rho,P)$ and a morphism $m' \colon PL \inj P$ with $m = m' \circ i_L$ and $p = m' \circ i_P$. We need to show that $m' \in sv_m(ac)$, i.e., that $m' \models \post_{\rho}((i_L,i_P,PL),d)$ and $m' \not \models \pre_{\rho}((i_L,i_P,PL),d)$. As $p$ is a repair, we have $p = m' \circ i_P \not \models d$ and \cref{lemma:correctness_pre_post} implies that $m' \models \pre_{\rho}((i_L,i_P,PL),d)$. When considering $\post_{\rho}((i_L,i_P,PL),d)$, we need to consider whether $\track_t \circ p$ is total or not:
		\begin{enumerate}
			\item If $\track_t \circ p$ is not total, \cref{lem:totality_track} implies that there is no morphism $x' \colon P \inj D'$ with $p = g' \circ x'$ (compare \cref{fig:appendix_for_correctness_of_shift_and_co}). Therefore, there is also no morphism $x \colon P \inj D$ with $i_P = g \circ x$
			(universal property of pullbacks), i.e., $\post_{\rho}((i_L,i_P,PL),d) = \true$ and $m' \models \post_{\rho}((i_L,i_P,PL),d)$. So $m' \in \sv_m(ac)$.
			\item If $\track_t \circ p$ is total, there is a morphism $x' \colon P \inj D'$ with $p = x' \circ g'$ (\cref{lem:totality_track}) and there is a morphism $x \colon P \inj D$ with $i_P = g \circ x$ (universal property of pullbacks). So $\post_{\rho}((i_L,i_P,PL),d) = \leftshift(\rho', \shift(h \circ x, d))$. As $p$ is a repaired morphism and $\track_t \circ p$ is total, $\track_t \circ p = \track_t \circ m' \circ i_P \models d$. \Cref{lemma:correctness_pre_post} implies that $m' \models \post_{\rho}((i_L,i_P,PL),d)$ and $m' \in \sv_m(ac)$.
		\end{enumerate}
		Lastly we have to show that for each repaired morphism $p \colon P \inj G$ there is exactly one application condition $ac \in \repair(\rho,c)$ constructed over an overlap $[(i_L, i_P,PL)] \in \OverEq(\rho,P)$ so that there is a violation $m' \in \sv_m(ac)$ with $p = m' \circ i_P$. We assume that there are two such application condition $ac, ac' \in \repair(\rho, c)$ constructed over overlaps $[(i_L, i_P,PL)]$ and $[(i'_L,i'_P,PL')] \in \OverEQ(\rho,c)$. There are morphisms $m' \colon PL \inj G$ and $m''\colon PL' \inj G$ with
		\begin{align*}
			p &= m' \circ i_P = m'' \circ i'_P \text{ and }\\
			m &= m' \circ i_L = m'' \circ i'_L.
		\end{align*}
		\Cref{lem:equiv_iso_mor} implies that $[(i_L, i_P,PL)] \cong [(i'_L,i'_P,PL')]$ and we have $ac = ac'$. \qedhere
	\end{enumerate}
\end{proof}
With the following results, we can extract sets of repaired and impaired morphisms by restricting the violations of each repair- (impairment-)indicating application condition to the embedding of the premise into the overlap graph, this application condition was computed with.

\begin{cor}\label{lem:all_repaired_morphisms_found}
	Given a constraint $c = \forall(e \colon \emptyset \inj P,d)$, a rule $\rho = \completeRle$, a transformation $t \colon G \Longrightarrow_{\rho,m} H$, then
	$$\dot{\bigcup}_{ac \in \repair(\rho,c)} \{m' \circ i_P \mid m' \in \sv_m(ac)\} = \{p \colon P \inj G \mid p \text{ is a repaired morphism w.r.t.~$t$}\},$$
	where $ac = \forall(i_L \colon L \inj PL, d')$ is the application condition constructed with the overlap $(i_L,i_P, PL) \in \OverEq(\rho,P)$.
\end{cor}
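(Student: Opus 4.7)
The plan is to derive this corollary as a direct consequence of the preceding lemma, which already establishes everything we need about repaired morphisms and violations of repair-indicating application conditions. The corollary merely repackages that lemma as an (indexed disjoint) set equality.

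For the inclusion ``$\subseteq$'', take any $p$ in the left-hand side. Then $p = m' \circ i_P$ for some $ac \in \repair(\rho,c)$ constructed over an overlap $[(i_L, i_P, PL)] \in \OverEq(\rho,P)$ and some $m' \in \sv_m(ac)$. Part~(1) of the preceding lemma yields directly that $m' \circ i_P = p$ is a repaired morphism w.r.t.~$t$, so $p$ belongs to the right-hand side.

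For the inclusion ``$\supseteq$'', take any repaired morphism $p \colon P \inj G$ w.r.t.~$t$. Part~(2) of the preceding lemma provides some application condition $ac \in \repair(\rho,c)$ (constructed over an overlap $[(i_L, i_P, PL)]$) together with a violation $m' \in \sv_m(ac)$ satisfying $p = m' \circ i_P$. Hence $p$ lies in the set indexed by $ac$ in the union on the left.

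It remains to justify that the union is indeed disjoint, which is the only point requiring care. Suppose $p$ appeared in two index sets, say via $ac$ with overlap $[(i_L, i_P, PL)]$ and via $ac'$ with overlap $[(i'_L, i'_P, PL')]$, with violations $m' \in \sv_m(ac)$ and $m'' \in \sv_m(ac')$ satisfying $m = m' \circ i_L = m'' \circ i'_L$ and $p = m' \circ i_P = m'' \circ i'_P$. The uniqueness statement in part~(2) of the preceding lemma — whose proof invoked Lemma~\ref{lem:equiv_iso_mor} to conclude $[(i_L, i_P, PL)] \cong [(i'_L, i'_P, PL')]$ — then forces $ac = ac'$, and the union is disjoint as claimed. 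This is the main, and only, subtle step; everything else is bookkeeping.
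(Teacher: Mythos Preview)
Your proposal is correct and matches the paper's intent exactly: the paper states this result as a corollary without proof, precisely because it follows immediately from the two parts of the preceding lemma in the way you describe (part~(1) for $\subseteq$, part~(2) for $\supseteq$, and the uniqueness clause of part~(2) for disjointness).
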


\begin{cor}\label{cor:all_impaired_morphisms_found}
	Given a constraint $c = \forall(e \colon \emptyset \inj P,d)$, a rule $\rho = \completeRle$, a transformation $t \colon G \Longrightarrow_{\rho,m} H$, then
	$$\dot{\bigcup}_{ac \in \violation(\rho,c)} \{m' \circ i_P \mid m' \in \sv_m(ac)\} = \{p \colon P \inj G \mid p \text{ is an impaired morphism w.r.t.~$t$}\},$$
	where $ac = \forall(i_L \colon L \inj PL, d')$ is the application condition constructed with the overlap $(i_L,i_P, PL) \in \OverEq(\rho,P)$.
\end{cor}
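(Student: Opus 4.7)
The plan is to reduce this corollary to the already proved Corollary~\ref{lem:all_repaired_morphisms_found} for repaired morphisms, applied to the inverse transformation $t^{-1}\colon H \Longrightarrow_{\rho^{-1}, n} G$ with $n$ the comatch of $t$, using Lemma~\ref{lem:relation_impair-repair} as a bridge between impairs and repairs. Since by definition $\violation(\rho, c) = \{\leftshift(\rho, ac) \mid ac \in \repair(\rho^{-1}, c)\}$, the left-hand disjoint union is indexed, under the renaming $ac \mapsto \leftshift(\rho, ac)$, by the same $\repair(\rho^{-1}, c)$ that will appear on the right after applying the earlier corollary; a term-by-term comparison is therefore the natural strategy.

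Concretely, I would first invoke Lemma~\ref{lem:relation_impair-repair} to rewrite the set of impaired morphisms w.r.t.~$t$ as the set of repaired morphisms w.r.t.~$t^{-1}$, and then apply Corollary~\ref{lem:all_repaired_morphisms_found} to $t^{-1}$, obtaining
\[
\dot{\bigcup}_{ac' \in \repair(\rho^{-1}, c)} \{n' \circ i_P \mid n' \in \sv_n(ac')\},
\]
where each $ac' = \forall(i_R \colon R \inj PR, d')$ is constructed from an overlap $(i_R, i_P, PR) \in \OverEq(\rho^{-1}, P)$. For each such $ac'$, the stability lemma for $\leftshift$ stated at the start of Appendix~\ref{app:proofs} supplies a bijection $n' \leftrightarrow m'$ between $\sv_n(ac')$ and $\sv_m(\leftshift(\rho, ac'))$, defined explicitly via the induced-transformation squares of Figure~\ref{fig:appendix_for_correctness_of_shift_and_co}.

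The principal obstacle is verifying that this bijection is compatible with composition by $i_P$: writing $\leftshift(\rho, ac') = \forall(i_L \colon L \inj PL, d'')$ as arising from an overlap $(i_L, i_P'', PL) \in \OverEq(\rho, P)$, one must show that the impaired morphism $n' \circ i_P \colon P \inj H$ coincides with the morphism obtained from $m'$ on the impairment side under the identification given by the track of $t$. This is a diagram chase in the pushout-pullback diagram of Figure~\ref{fig:appendix_for_correctness_of_shift_and_co}, using that $i_P$ and $i_P''$ correspond, in the parallel independent case, to $h \circ x$ and $g \circ x$ respectively for a common interface morphism $x \colon P \inj D$, and an analogous duality holds in the parallel dependent case. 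Once this compatibility is established, substituting the bijections into the union above yields exactly the left-hand side of the claim, and disjointness of the resulting union is inherited directly from Corollary~\ref{lem:all_repaired_morphisms_found} through the bijections.
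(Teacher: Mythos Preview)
Your approach is correct and matches the paper's intended argument: the paper states this corollary without proof, treating it as the dual of Corollary~\ref{lem:all_repaired_morphisms_found} via Lemma~\ref{lem:relation_impair-repair}, and the same reduction (impaired w.r.t.\ $t$ equals repaired w.r.t.\ $t^{-1}$, then invoke the stability of $\leftshift$) is exactly what the proof of Theorem~\ref{thm:main_theorem} spells out when it passes from $\sum_{ac\in\repair(\rho^{-1},c)}\nv_n(ac)$ to $\sum_{ac\in\violation(\rho,c)}\nv_m(ac)$.

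Your hesitation in step~4 is justified and worth flagging. As literally stated, the right-hand side of the corollary has codomain $G$ while impaired morphisms live in $H$; and for impairments of the premise the overlap $(i_R,i_P,PR)\in\OverEq(\rho^{-1},P)$ has parallel dependent induced transformations, so after shifting to $PL$ there is in general no surviving embedding $P\inj PL$ and hence no overlap $(i_L,i_P'',PL)\in\OverEq(\rho,P)$. The clean version is the one you wrote first, indexed by $\repair(\rho^{-1},c)$ and evaluated at the comatch $n$; the passage to $\violation(\rho,c)$ and the match $m$ via the stability lemma then preserves cardinalities, which is all that is ever used downstream. So your plan proves what is needed, and the diagram chase you anticipate is only required to force a set-level identity that the paper's phrasing does not quite support.
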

It remains to show that the number of violations of repair-indicating application conditions is equal to the number of repaired morphisms w.r.t.~a transformation and, therefore, the number of violations of impairment-indicating application conditions is equal to the number of impaired morphisms w.r.t.~a transformation.
\begin{lem}\label{lem:filter_does_not_change}
	Given a constraint $c = \forall(e \colon \emptyset \inj P,d)$, a rule $\rho = \completeRle$, a transformation $t \colon G \Longrightarrow_{\rho,m} H$ via $\rho$ at match $m$, and an application condition $ac = \forall(i_L \colon L \inj PL, d') \in \repair(\rho, c)$ constructed using the overlap $(i_L, i_P,PL)$, then
	$$\nv_m(ac) = |\{p \circ i_P \colon P \inj G \mid p \in \sv_m(ac)\}|$$
\end{lem}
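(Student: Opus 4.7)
The plan is to show that the assignment $\phi \colon \sv_m(ac) \to \{p \circ i_P \mid p \in \sv_m(ac)\}$, $p \mapsto p \circ i_P$, is a bijection; then $\nv_m(ac) = |\sv_m(ac)| = |\phi(\sv_m(ac))|$, which is the claim. Surjectivity is immediate from the definition of the codomain, so the whole content lies in injectivity of $\phi$.

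For injectivity, I would use the fact that $(i_L, i_P, PL)$ is an overlap of $L$ and $P$, i.e.\ $i_L \colon L \inj PL$ and $i_P \colon P \inj PL$ are jointly surjective. By the definition of the set of violations in \cref{def:counting_method}, every $p \in \sv_m(ac)$ satisfies $p \circ i_L = m$. Hence, if $p_1, p_2 \in \sv_m(ac)$ with $p_1 \circ i_P = p_2 \circ i_P$, then $p_1$ and $p_2$ agree both on $i_L(L)$ (since both compositions with $i_L$ equal $m$) and on $i_P(P)$. Joint surjectivity of $i_L$ and $i_P$ ensures that every node and edge of $PL$ lies in $i_L(L) \cup i_P(P)$, so $p_1$ and $p_2$ coincide on all of $PL$, and thus $p_1 = p_2$.

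I do not expect a serious obstacle here. The only subtlety is making sure the equality $p \circ i_L = m$ is really available for both $p_1$ and $p_2$, which follows directly because $ac$ is a universally quantified condition over the morphism $i_L \colon L \inj PL$ and $\sv_m(ac)$ is defined relative to the match $m$. Once that is recorded, joint surjectivity of the overlap morphisms does the rest of the work, and the bijection $\phi$ yields the claimed equality of cardinalities.
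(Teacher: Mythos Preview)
Your proof is correct and follows essentially the same approach as the paper: both use that every $p \in \sv_m(ac)$ satisfies $p \circ i_L = m$ together with joint surjectivity of the overlap morphisms $i_L, i_P$ to conclude that $p \mapsto p \circ i_P$ is injective. The paper presents this as two separate inequalities (with the $\geq$ direction being the trivial surjectivity you noted), but the substance is identical.
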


\begin{proof}
	\begin{enumerate}
		\item [\enquote{$\leq$}:] Given two morphisms $p,p'\colon PL \inj G \in \sv_m(ac)$ with $p \neq p'$. Since $i_L$ and $i_P$ are jointly surjective and $p \circ i_L = m = p' \circ i_L$, we have $p \circ i_P \neq p' \circ i_P$ and
		$$\nv_m(ac) \leq |\{p \circ i_P \colon P \inj G \mid p \in \sv_m(ac)\}|.$$
		\item [\enquote{$\geq$}:] Given two morphisms $p, p' \in |\{pl \circ i_P \colon P \inj G \mid pl \in \sv_m(ac)\}|$ with $p \neq p'$, then there are two morphisms $pl, pl' \colon PL \inj G \in \sv_m(c)$ with $m = pl \circ i_L = pl' \circ i_L$, $p = pl \circ i_P$ and $p' = pl' \circ i_P$. Since $p = pl \circ i_P \neq pl' \circ i_P = p'$ it follows that $pl \neq pl'$ and we have
		\begin{equation*}
			\nv_m(ac) \geq |\{pl \circ i_P \colon P \inj G \mid pl \in \sv_m(ac)\}|. \qedhere
		\end{equation*}
	\end{enumerate}
\end{proof}

We can evaluate the change in consistency by only considering impaired and repaired morphisms.
\begin{lem}\label{lem:correct_counting}
	Given a constraint $c = \forall(e \colon \emptyset \inj P, d)$ and a transformation $t \colon G \Longrightarrow_{\rho,m} H$, then
	\begin{align*}
		\nv_H(c) - \nv_G(c) = &|\{p \colon P \inj H \mid p \text{ is impaired w.r.t.~$t$}\}| - \\
		&|\{p \colon P \inj G \mid p \text{ is repaired w.r.t.~$t$}\}|.
	\end{align*}
\end{lem}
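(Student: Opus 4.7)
The plan is to partition $\sv_G(c)$ and $\sv_H(c)$ into a ``changing'' part (the repaired, respectively impaired, morphisms) and a ``surviving'' part, and then to exhibit a bijection between the two surviving parts via the track morphism, so that $\nv_H(c)-\nv_G(c)$ collapses to the difference between the numbers of impaired and repaired morphisms.

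Concretely, set $R := \{p \colon P \inj G \mid p \text{ is repaired w.r.t.\ } t\}$ and $I := \{q \colon P \inj H \mid q \text{ is impaired w.r.t.\ } t\}$. By \cref{def:repair_morphism} and \cref{def:impaired_morphism}, every repaired (impaired) morphism is in particular a violation, so $R \subseteq \sv_G(c)$ and $I \subseteq \sv_H(c)$. Let $S_G := \sv_G(c) \setminus R$ and $S_H := \sv_H(c) \setminus I$. Unfolding the cases of the two definitions shows that a violation $p \in \sv_G(c)$ fails to be repaired exactly when $\track_t \circ p$ is total and still does not satisfy $d$, and a violation $q \in \sv_H(c)$ fails to be impaired exactly when it admits a preimage $p' \colon P \inj G$ under $\track_t$ with $p' \not\models d$.

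I then construct the bijection $\phi \colon S_G \to S_H$ by $\phi(p) := \track_t \circ p$. For well-definedness I rely on the DPO framework of \cref{fig:graph_transformation1}: the partial morphism $\track_t = h \circ g^{-1}$ is injective on its domain, because both $g$ and $h$ are injective; hence for any $p \in S_G$ the composite $\track_t \circ p$ is a total injective morphism $P \inj H$, it lies in $\sv_H(c)$ because $\track_t \circ p \not\models d$, and it lies in $S_H$ since $p$ itself is a preimage violating $d$. Injectivity of $\phi$ follows from the injectivity of $\track_t$ on its domain: if $\track_t \circ p_1 = \track_t \circ p_2$ are both total then $p_1 = p_2$. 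Surjectivity of $\phi$ comes directly from the characterisation of $S_H$: any $q \in S_H$ has a preimage $p' \colon P \inj G$ with $p' \not\models d$, and since $\track_t \circ p' = q$ is total and does not satisfy $d$, the morphism $p'$ is not repaired, so $p' \in S_G$ and $\phi(p') = q$.

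Finally, combining the disjoint decompositions $\sv_G(c) = R \,\dot{\cup}\, S_G$ and $\sv_H(c) = I \,\dot{\cup}\, S_H$ with $|S_G| = |S_H|$ yields $\nv_H(c) - \nv_G(c) = |I| - |R|$, as claimed. The main obstacle is purely the DPO bookkeeping needed to see that $\track_t$ behaves as a partial injective morphism and that preimages under it are unique; once this is in place, the rest reduces to a clean counting argument, and no induction on the structure of the nested condition $d$ is required.
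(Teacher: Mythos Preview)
Your proof is correct and follows essentially the same approach as the paper: both partition $\sv_H(c)$ into the impaired morphisms and the images under $\track_t$ of the non-repaired violations of $G$, and then count. Your version is more explicit---you spell out the bijection $\phi\colon S_G\to S_H$, verify well-definedness, injectivity, and surjectivity, and justify the uniqueness of preimages via the injectivity of $g$ and $h$---whereas the paper simply asserts the decomposition of $\sv_H(c)$ and reads off the cardinality identity without further argument.
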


\begin{proof}
	We can rewrite the set of violations of $H$ as
	\begin{align*}
		\sv_H(c) &= \{ p \colon P \inj H \mid \text{$p$ is an impaired morphism w.r.t.~$t$}\}\\&\cup\{\track_t \circ p \mid p \in \sv_G(c) \text{ and $p$ is not a repaired morphism w.r.t.~$t$}\}.
	\end{align*}
	So we can evaluate $\nv_H(c)$ as
	\begin{align*}
		\nv_H(c) = \nv_G(c) &+ |\{p \mid \text{$p$ is an impaired morphism w.r.t.~$t$}\}|\\ &-|\{p \mid \text{$p$ is a repaired morphism w.r.t.~$t$}\}|
	\end{align*}
	and the statement follows immediately.
\end{proof}

With these prerequisites in place, we can now prove \cref{thm:main_theorem}.
\begin{proof}[Proof of Theorem~\ref{thm:main_theorem}]
	Given a transformation $t \colon G \Longrightarrow_{\rho,m} H$ and a constraint $c = \forall(e\colon \emptyset \inj P,d)$, with \cref{lem:correct_counting} the difference $\nv_H(c) - \nv_G(c)$ can be evaluated by only considering impaired and repaired morphisms. \Cref{lem:all_repaired_morphisms_found} and \Cref{cor:all_impaired_morphisms_found} show that we can find repaired and impaired morphisms by evaluating the application conditions contained in the sets $\repair(\rho,c)$ and $\violation(\rho,c)$, respectively.
	By using \cref{lem:all_repaired_morphisms_found} and \cref{lem:filter_does_not_change} we get

	\begin{align*}
		|\{p \colon P \inj G \mid p \text{ is a repaired morphism w.r.t.~$t$}\}| &= |\dot{\bigcup_{ac \in \repair(\rho,c)}} \{p \circ i_P \mid p \in \sv_m(ac)\}| \\&=
		\sum_{ac \in \repair(\rho,c)} |\{p \circ i_P \mid p \in \sv_m(ac)\}| \\&= \sum_{ac \in \repair(\rho,c)} \nv_m(ac)
	\end{align*}
	When considering $|\{p \colon P \inj G \mid p \text{ is an impaired morphism w.r.t.~$t$}\}|$ we utilize the fact each impaired morphism is a repaired morphism of the inverse transformation $t^{-1} \colon H \Longrightarrow_{\rho^{-1}, n} G$ (\cref{lem:relation_impair-repair}):
	\begin{align*}
		&|\{p \colon P \inj H \mid p \text{ is an impaired morphism w.r.t.~$t$}\}| \\= &|\{p \colon P \inj H \mid p \text{ is a repaired morphism w.r.t.~$t^{-1}$}\}|
	\end{align*}
	Analogous to the first part of the proof, we can show that
	\begin{align*}
		|\{p \colon P \inj H \mid p \text{ is an impaired morphism w.r.t.~$t$}\}| &= \sum_{ac \in \repair(\rho^{-1},c)} \nv_n(ac) \\
		&= \sum_{ac \in \violation(\rho,c)} \nv_m(ac)
	\end{align*}
	In total, we obtain
	\begin{equation*}
		\nv_H(c) - \nv_G(c) = \sum_{ac \in \violation(\rho,c)} \nv_m(ac) - \sum_{ac \in \repair(\rho,c)} \nv_m(ac) \qedhere
	\end{equation*}
\end{proof}

\subsection{Proofs of Section~\ref{sec:comparisson}}

\begin{proof}[Proof of Theorem~\ref{thm:direct_sustaining_improving_ac}]
	Given a transformation $t \colon G \Longrightarrow_{\rho,m} H$ and a constraint $c = \forall(e \colon \emptyset \inj P,d)$,
	\begin{enumerate}
		\item we start by showing that $t$ is direct consistency-sustaining w.r.t.~$c$ if and only if $m \models \bigwedge_{ac \in \violation(\rho,c)} ac$.
		\begin{enumerate}
			\item [\enquote{$\Longrightarrow$}:] Let $t$ be direct consistency-sustaining w.r.t.~$c$, i.e., there is no impaired morphism w.r.t.~$t$. As \cref{cor:all_impaired_morphisms_found} implies that each violation of an application condition $ac \in \violation(\rho,c)$ contains an impaired morphism, there is no violation for any $ac$ and $m \models \bigwedge_{ac \in \violation(\rho,c)}ac$.
			\item[\enquote{$\Longleftarrow$}:] Let $m \models \bigwedge_{ac \in \violation(\rho,c)}ac$, then \cref{cor:all_impaired_morphisms_found} implies that there are no impaired morphisms w.r.t.~$t$, i.e., $t$ is direct consistency-sustaining w.r.t.~$c$
		\end{enumerate}
		\item Now, we show that $t$ is direct consistency-improving w.r.t.~$c$ if and only if $m \models \bigwedge_{ac \in \violation(\rho,c)} ac \wedge \neg (\bigwedge_{ac \in \repair(\rho,c)} ac)$.
		\begin{enumerate}
			\item[\enquote{$\Longrightarrow$}:] Let $t$ be direct consistency-improving w.r.t.~$c$. Then $t$ is also directly consistency-sustaining w.r.t.~$c$ and the first part of \cref{thm:direct_sustaining_improving_ac} implies that $m \models \bigwedge_{ac \in \violation(\rho,c)} ac$. Also, there is at least one repaired morphism w.r.t.~$t$ and \cref{lem:all_repaired_morphisms_found} implies that this repaired morphism is a violation of an application condition $ac \in \repair(\rho,c)$. So $m \models \neg (\bigwedge_{ac \in \repair(\rho,c)} ac)$.
			\item[\enquote{$\Longleftarrow$}:] Let $m \models \bigwedge_{ac \in \violation(\rho,c)} ac \wedge \neg (\bigwedge_{ac \in \repair(\rho,c)} ac)$, the first part of \cref{thm:direct_sustaining_improving_ac} implies that $m \models \bigwedge_{ac \in \violation(\rho,c)} ac$. At least one of the application conditions contained $\repair(\rho,c)$ is violated. Since each violation of an application condition $ac \in \repair(\rho,c)$ contains a repaired morphism w.r.t.~$t$ (\cref{lem:all_repaired_morphisms_found}) $t$ is a direct consistency-improving transformation w.r.t.~$c$. \qedhere
		\end{enumerate}
	\end{enumerate}
\end{proof}

\end{document}